\newcommand\Qhat{{\widehat{Q}}}
\newcommand\Phat{\widehat{P}}
\newcommand\qhat{\widehat{q}}
\newcommand{\Ad}{\operatorname{Ad}}
\newcommand{\ad}{\operatorname{ad}}
\newcommand{\beq}{\begin{equation}}
\newcommand{\eeq}{\end{equation}}
\newcommand{\bea}{\begin{eqnarray}}
\newcommand{\eea}{\end{eqnarray}}
\newcommand\vpsi{\boldsymbol{\psi}}
\newcommand\vphi{\boldsymbol{\phi}}
\newcommand{\su}{\mathfrak{su}}
\newcommand{\NLS}{\operatorname{NLS}}
\newcommand{\VFE}{\operatorname{VFE}}
\newcommand\bPsi{{\bf \Psi}}
\newcommand\B{{\bf B}}
\newtheorem{Thm}{Theorem}
\newtheorem{Pro}[Thm]{Proposition}
\newtheorem{Conj}[Thm]{Conjecture}
\newtheorem{Lem}[Thm]{Lemma}
\newtheorem{Cor}[Thm]{Corollary}
\theoremstyle{definition}
\newtheorem{remark}[Thm]{Remark}
\newcommand\T{{\bf{T}}}
\newcommand\N{{\bf{N}}}
\newcommand\V{{\bf{V}}}
\newcommand\W{{\bf{W}}}
\newcommand\kb{{\overline{q}}}
\def\realpart{\operatorname{\sf Re}}
\def\imaginarypart{\operatorname{\sf Im}}
\newcommand{\eq}[2]{\begin{equation}\begin{split}#1\end{split}\label{#2}\end{equation}}
\newcommand\qbar{\overline{q}}
\newcommand\me{\mathsf{e}}
\newcommand\sj{\mathsf{j}}
\renewcommand{\Re}{\operatorname{Re}}
\renewcommand{\Im}{\operatorname{Im}}
\newcommand\vV{\boldsymbol{V}}
\newcommand\vW{\boldsymbol{W}}
\newcommand\vD{\boldsymbol{D}}
\newcommand\vomega{\boldsymbol{\omega}}
\newcommand\vz{\boldsymbol{z}}
\newcommand\vr{\boldsymbol{r}}
\newcommand{\A}{\boldsymbol{\mathcal A}}
\newcommand{\vK}{\boldsymbol{\mathcal K}}
\newcommand\C{\mathbb{C}}
\newcommand{\bbR}{\mathbb{R}}
\newcommand\ZZ{\mathbb{Z}}
\newcommand\D{\mathcal{D}}
\newcommand{\hlambda}{{\widehat{\lambda}}}
\newcommand{\eqnn}[1]{\begin{equation}\begin{split}#1\end{split}\nonumber\end{equation}}
\newcommand{\di}{\partial_x}
\def\Ua{{\bf{U}}}
\def\Um{\widetilde{\bf{U}}}
\newcommand{\ri}{\mathrm{i}}
\newcommand\R{{\mathcal{R}}}
\newcommand{\del}[1]{{\delta_{\bf{V}}{#1}}}
\newcommand{\delg}[1]{\frac{\delta{#1}}{\delta\gamma}}
\newcommand{\delk}[1]{\frac{\delta{#1}}{\delta q}}
\newcommand{\delkb}[1]{\frac{\delta{#1}}{\delta \kb}}
\newcommand{\delkk}[1]{\frac{\delta{#1}}{\delta(q,\overline{q})}}
\newcommand{\AKNS}{\operatorname{AKNS}}
\newcommand\hatpsi{\widehat\vpsi}
\definecolor{forestgreen}{rgb}{0.13, 0.55, 0.13}
\begin{document}
\title
{Stability of closed solutions to the VFE hierarchy with application to the Hirota equation}
\author
{T. Ivey
and S. Lafortune\thanks{Department of Mathematics, College of Charleston, 66 George Street, Charleston, South Carolina, 29424, U.S.
}}

\maketitle

\begin{abstract}
The Vortex Filament Equation (VFE) is part of an integrable hierarchy of filament equations. Several equations part of this hierarchy have been derived to describe
vortex filaments in various situations.
Inspired by these results, we develop a general framework for studying the existence and the linear stability of closed
solutions of  the  VFE hierarchy.  The framework is based on the correspondence
between the VFE and the nonlinear Schr\"odinger (NLS) hierarchies. Our results show that it is possible to establish a connection between
the AKNS Floquet spectrum and the stability properties of the solutions of the filament equations.
We apply our machinery to solutions of the filament equation associated to the Hirota equation. We also discuss how our framework applies to soliton solutions.
\end{abstract}

Mathematics Subject Classification: 35Q55, 37K20,  37K45, 35Q35

\section{Introduction}

Fluid vorticity is often concentrated in small regions. The term {\sl{vortex filament}}
is used to describe the case where vorticity is concentrated in a slender tubular region.
This approximation is important in several problems in fluid mechanics, particularly in superfluidity and turbulence.
For example,  vortex filaments can be used to describe the flow of superfluids very accurately \cite{Sa01}, and
to obtain properties of  the flow in models of turbulent fluids  \cite{Ch91} and superconductors \cite{Crab97}.

A model describing the dynamics of vortex filaments is the Vortex Filament Equation (VFE), also known as the {\sl{Localized Induction Equation}}, given by
\begin{equation}
\label{VFE3}
\gamma_t=\gamma_x \times \gamma_{xx},
\end{equation}
where $t$ is time and $\gamma$ is a curve in $\mathbb{R}^3$ with arclength coordinate $x$ \cite{Ri06}.
(Parametrization by arclength is preserved under the evolution.)
In the derivation of the VFE from the Biot-Savart law, non-local effects (due to, e.g., finite core size, stretching, and interaction with the surrounding fluid) are neglected.
What makes the VFE particularly interesting to study is its connection to integrability, more precisely to the nonlinear Schr\"odinger equation (NLS).  One way of formulating this connection is in terms of Frenet-Serret frame of the curve.

The Frenet-Serret frame associated to the curve $\gamma$ is formed by the unit tangent vector ${\bf{T}}=\gamma_x$, normal vector $\bf{N}$ pointing in the direction of $\gamma_{xx}$, and
binormal vector ${\bf{B}}={\bf{T}}\times {\bf{N}}$.  They satisfy
$$
\begin{aligned}
\T_x&=\kappa\N\\
\N_x&=-\kappa\T+\tau\B\\
\B_x&=-\tau\N,
\end{aligned}
$$
where $\kappa$ and $\tau$ are the curvature and torsion of the curve.
Given a curve that solves the VFE with curvature $\kappa$ and torsion $\tau$,
one can construct the complex quantity
\begin{equation}
q(x,t)={\kappa}{}\exp{\ri \phi}, \quad\text{where }\phi=\int^x\tau\,dx.
\label{hasimoto}
\end{equation}
The time dependence of $\phi$ can be chosen so that
$\phi_t=\frac{\kappa^2}{2}+{\kappa_{xx}}{\kappa}-\tau^2$, and then $q$ solves the NLS
\begin{equation}
\label{NLS}
 q_{t}=\ri q_{xx}+\frac{1}{2}{\ri}|q|^2 q.
\end{equation}
The transformation (\ref{hasimoto}) is known as the Hasimoto transformation \cite{Ha72}.

The NLS equation is Hamiltonian and, in fact, completely integrable.
It can be solved through Inverse Scattering Transform (IST) method
using the associated AKNS linear system given by \cite{AKNS}:
\begin{equation}
\label{AKNS}
\frac{d\vec{\psi}}{dx}=\frac{\ri}{2}\left(
\begin{array}{cc}
	-\lambda &  q \\[8pt]
	 \qbar & \lambda \\
\end{array}\right) \,\vec{\psi}, \;\;\; \frac{d\vec{\psi}}{dt}=
\frac{\ri}{2}\left(
\begin{array}{cc}
	-\lambda^{2}+ \tfrac12|q|^2 & \lambda q+\ri q_{x} \\[8pt]
	\lambda \qbar-\ri \qbar_{x} & \lambda^{2} - \tfrac12|q|^2 \\
\end{array}\right) \,\vec{\psi}.
\end{equation}
The system \eqref{AKNS} is such that the compatibility condition between
the mixed partial derivatives  of $\psi$ forces $q$ to be a solution of NLS \eqref{NLS}.

We can use the AKNS system to invert the Hasimoto map.  To do so, we rewrite the VFE (\ref{VFE3}) as an evolution equation for a curve in $\su(2)$.  Specifically, we define a linear isomorphism $\sj:\su(2) \to \bbR^3$
such that
\begin{equation}
\label{sj}
\sj: \frac{1}{2}\left(
\begin{array}{cc}
-\ri\gamma_0&\gamma_1+\ri\gamma_2\\
-\gamma_1+\ri\gamma_2&\ri\gamma_0
\end{array}
\right) \mapsto (\gamma_0, \gamma_1, \gamma_2).
\end{equation}
This identification takes the Lie bracket to $-1$ times the cross product; thus,
if $\gamma(x,t)$ satisfies \eqref{VFE3} then the matrix $\Gamma(x,t) =\sj^{-1}(\gamma)$ satisfies
\begin{equation}
\nonumber
\Gamma_t=-\left[\Gamma_x,\;\Gamma_{xx}\right].
\end{equation}
This identification enables us to invert the Hasimoto transformation, as follows.
Given a fundamental matrix solution ${\bf{\Psi}}(x,t;\lambda)$ for the AKNS linear system (\ref{AKNS})
such that $\vpsi(0,0;\lambda)=\mathbb{1}$, the matrix-valued function
\begin{equation}
\label{Symf}
\Gamma=\left.{\bf{\Psi}}^{-1}\frac{\partial {\bf{\Psi}}}{\partial \lambda}\right\vert_{\lambda = \lambda_0},
\end{equation}
satisfies
\eq{
\Gamma_t = -\left[\Gamma_x,\;\Gamma_{xx}\right] + 2\lambda_0 \Gamma_x.
}{MVFEg}
(Here and elsewhere we take $\lambda_0$ to be real.)
If we apply $\sj$ to
both sides of \eqref{MVFEg}, then the curve in $\mathbb{R}^3$ satisfies
\eq{
\gamma_t=\gamma_x \times \gamma_{xx}+ 2\lambda_0 \gamma_x.
}{MVFE2}
In particular, $\gamma$ satisfies \eqref{VFE3} when $\lambda_0=0$, and otherwise
is equivalent to a solution of \eqref{VFE3} via  change of variable $x \mapsto x - 2\lambda_0 t$.
The construction \eqref{Symf} is known as the Sym transformation \cite{Sym85}.

An immediate consequence of the connection of the VFE with the NLS equation is
the existence of large classes of special solutions of the VFE (including solitons and multi-solitons, and finite-gap solutions) and the presence of an infinite number of conserved quantities in involution \cite{La91,La94,La99}.  Furthermore, the complete integrability of the VFE makes it the primary model for the study of analytical, geometrical, and topological properties of vortex filaments, and a starting point for deriving more physically realistic evolution equations for  filamentary vortex structures. Many important mechanisms observed experimentally are captured by solutions of the VFE, among them particle transport \cite{ki06} and
propagation of solitary waves in turbulent fluids \cite{Nature82,Hop82,Max83,Max85}.  Vortex solitons in fluids \cite{Konno91,Fukumoto91,Sym85,Levi83,Aref84,Mak03,Kimura04,Miyazaki88},  in superconductors  \cite{Uby95}, and in weather events (see Figure 1 of \cite{Aref84}),  as well as their periodic counterparts---circular vortex rings and knotted vortex filaments  \cite{Ki81,Ke90} in both classical and superfluids \cite{Vel10, SBR98} and laser-matter interactions \cite{Lu96, Mak03}---can be described in terms of exact solutions of the VFE.


Even though the VFE equation appears in several applications, the issue of stability of its solutions was not given due attention prior to work by {ourselves and our collaborators} \cite{Ca11b,Ca11,Lafortune12}. Indeed, stability analysis of VFE solutions can be found in the literature for only a handful of special cases:  a numerical study of the linear stability of steady solutions \cite{Ki82}, a linear stability study  for approximate solutions  in the shape of small-amplitude torus knots  \cite{Ri95},  a linear stability study of helical vortex filaments  \cite{Andersen}, numerical investigations of the stability of Kelvin waves \cite{Sam90,Zhou}, and a study of stability properties of self-similar singular (infinite-energy) solutions  \cite{bv08,bv12}.

The VFE is part of a hierarchy of integrable evolution equations for curves. It is related
to the NLS hierarchy through the Hasimoto transformation \eqref{hasimoto}  \cite{La91,La94,La99}. As such, it shares the same  properties as the NLS hierarchy, that is, it  has an infinite number of conserved quantities which are in involution
with respect to the Poisson bracket  associated to the VFE Hamiltonian \cite{La91}.

Several members of the VFE integrable hierarchy  have been derived as vortex filament models that take into account physical effects neglected in the VFE \cite{Fukumoto91,Fukumoto02,Uby95}.
For example, in an effort at explaining experimental results reported in \cite{Max83,Max85},  Fukumoto and Miyazaki \cite{Fukumoto91}
  derived the following model, which takes into account the presence of a strong  axial flow in the core of the vortex:
  \begin{equation}\vspace{-.1cm}
\label{VFE3e1}
\gamma_t=\gamma_{xxx}+\tfrac{3}{2}\gamma_{xx}\times\left(\gamma_x\times \gamma_{xx}\right)+\sigma(\gamma_x \times \gamma_{xx}),
\end{equation}
where $\sigma$ is a constant. The RHS of \eqref{VFE3e1} happens to be a linear combination of the flow defining the VFE and the flow corresponding to the next member of the hierarchy. Under the Hasimoto map \eqref{hasimoto}, it corresponds to the Hirota equation \cite{Hi73}
\begin{equation}\label{hirotaeqn}
q_t=q_{xxx}+\tfrac{3}{2}|q|^2q_x+\ri\sigma(q_{xx}+\tfrac{1}{2}|q|^2q),
\end{equation}
whose RHS corresponds to a linear combination of the flows defining NLS and the complex mKdV (CmKdV).
Furthermore, the next member of the VFE hierarchy was derived in the context of a vortex filament in a charged fluid on a neutralizing background \cite{Uby95}.
Motivated by these facts, in this article we will develop tools for the study of the existence and stability
of solutions applicable to the whole hierarchy.

The rest of this article is organized as follows. In Section \ref{S2}, we review the basic results \cite{Ca11b,Ca11} we use to study the linear stability of solutions to the VFE.  Section \ref{S3} is devoted to a presentation of the VFE hierarchy of integrable equations.
{{ In Section \ref{S4} we discuss a conjectured result that would enable us to solve the linearized versions of members of the VFE
hierarchy.  While we have not yet proved this result for all members of the hierarchy, in Theorem \ref{T2} we reduce the conjecture to a
condition that can checked via symbolic computation, and the computations we have performed convince us
  that the result is true.}}
Sections \ref{S5} and \ref{S6} are devoted, respectively, to the construction of closed curve solutions of the VFE hierarchy, and the solutions of the corresponding linearized equations. In Section \ref{S7}, we discuss how the squared eigenfunctions are used to solve the linearizations of integrable equation.
{In Section \ref{Hirotasec} our results are specialized to give instability criteria for solutions of the Hirota equation.} Section \ref{Discussion} is devoted to conclusions and a discussion about how our results apply to soliton solutions.
{In Appendix \ref{A}, we derive technical results on the NLS hierarchy needed in the rest of the paper:
the recursive construction of AKNS systems, the effect of gauge transformations on AKNS solutions, and the construction of finite-gap solutions and eigenfunctions for the $m^{\mbox{th}}$-order member of the NLS hierarchy, which we denote by $\NLS_m$.}

\section{Linearized VFE}
\label{S2}

In previous works by the authors and their collaborators \cite{Ca11b,Ca11,Lafortune12}, a method was developed to study the linear stability of solutions to the VFE.
The idea of the method relies on one main result, which we present in this section.

We start by introducing the {\sl{natural frame}} \cite{Ca00, Bishop}, which is particularly appropriate for  the study of the VFE. The natural frame
$\left({\bf{T}},\,{\bf{U}}_1,\,{\bf{U}}_2\right)$ is obtained from the Frenet frame
by rotating the normal and binormal vectors in the normal plane:
$${{\Ua}}_1=\cos \phi \, {\bf{N}} -\sin \phi\, {\bf{B}}, \qquad
{{\Ua}}_2= {\bf{T}}\times {{\Ua}}_1=\sin \phi\, {\bf{N}} + \cos \phi\, {\bf{B}},$$
where $\phi$ is an antiderivative of the torsion (as in the Hasimoto map \eqref{hasimoto}).
The result of this choice of rotation is that the derivatives of
 vectors ${{\Ua}}_1$ and ${{\Ua}}_2$ are multiples of the tangent vector:
 \eqnn{&{\bf{T}}_x=\kappa_1{{\Ua}}_{1}+\kappa_2{{\Ua}}_{2},\\
 &{{\Ua}}_{1x}=-\kappa_1{\bf{T}},\\
  &{{\Ua}}_{2x}=-\kappa_2{\bf{T}},
 }
 where $\kappa_1:= \kappa\cos{\phi}$,  $\kappa_2:= \kappa\sin{\phi}$
 are the {\sl natural curvatures}.
The natural curvatures  thus become the real and imaginary parts of the NLS solution, which has the consequence of simplifying  the Hasimoto map (\ref{hasimoto}) as it now reads
\eq{
q=\kappa_1+\ri \kappa_2.
}{hasimoto2}

The linear stability of a solution $\gamma$ of the VFE is studied by replacing $\gamma$ in \eqref{VFE3} by
 $\gamma+\gamma_1$ and retaining terms up to first order in $\gamma_1$:
 \begin{equation}
 \label{LVFE}
\gamma_{1t}=\gamma_{1x} \times \gamma_{xx}+\gamma_{x} \times \gamma_{1xx}.
 \end{equation}
 Since $x$ in the VFE is the arclength parameter, a VFE solution is required to have a tangent vector with unit length. At first order, this means that $\gamma_1$ must be a locally arclength preserving (LAP) vector field, i.e.~$\gamma_{x}\cdot \gamma_{1x}=0$.

The idea to solve the linearized VFE (\ref{LVFE}) is to look for solutions
of the form of an expansion along the vectors $({\bf{T}},{\bf{U}}_1,{\bf{U}}_2)$, that is
\begin{equation}
\label{ansatz}
\gamma_1=a\,{\bf{T}}+b\,{\bf{U}}_1+c\,{\bf{U}}_2.
\end{equation}
With $\gamma_1$ written in the form above, the LAP condition is realized by the restriction $a_x=b\kappa_1+c \kappa_2$.
We start with a VFE solution $\gamma$  corresponding to an NLS solution $q$  through the Hasimoto transformation (\ref{hasimoto}) (or, equivalently, (\ref{hasimoto2})). We insert the expression (\ref{ansatz}) into the linearized VFE and, by a lengthy but straightforward computation,
we have the following result:
\begin{Thm}[\!\!\cite{Ca11,Ca11b}]
\label{T1}
Let $\gamma_1$ be a solution of the linearized VFE of the form (\ref{ansatz}) and satisfying the LAP condition $\gamma_x\cdot\gamma_{1x}=0$.
Then, the quantity $v:= b+\ri c$  satisfies the linearization of the NLS about the solution $q$.
\end{Thm}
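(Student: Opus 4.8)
The plan is to transfer the entire problem into the natural frame $(\T,\Ua_1,\Ua_2)$, where both the VFE and its linearization become scalar PDEs for the coefficient functions. The first task, not supplied by the excerpt, is to determine the \emph{time} evolution of the natural frame, since only the $x$-derivatives are given. For the VFE I would write $\gamma_t=\T\times\T_x=\kappa_1\Ua_2-\kappa_2\Ua_1$ and differentiate in $x$ to get $\T_t=\gamma_{xt}=-\kappa_{2x}\Ua_1+\kappa_{1x}\Ua_2$, the $\T$-component dropping out automatically since $|\T|=1$. Orthonormality forces the $t$-evolution to be skew, so it remains only to pin down the single twist coefficient $\rho$ in $\Ua_{1t}=\kappa_{2x}\T+\rho\,\Ua_2$, $\Ua_{2t}=-\kappa_{1x}\T-\rho\,\Ua_1$. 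Imposing the compatibility $\Ua_{1xt}=\Ua_{1tx}$ yields $\rho_x=-\tfrac12\partial_x|q|^2$, and the integration constant is fixed to zero by the same normalization that makes $q=\kappa_1+\ri\kappa_2$ solve \eqref{NLS}; hence $\rho=-\tfrac12|q|^2$.

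With the frame kinematics in hand I would substitute $\gamma_1=a\T+b\Ua_1+c\Ua_2$ into the linearized VFE \eqref{LVFE}. Differentiating in $x$ and using the structure equations, the LAP condition $a_x=b\kappa_1+c\kappa_2$ exactly kills the $\T$-component, leaving $\gamma_{1x}=(a\kappa_1+b_x)\Ua_1+(a\kappa_2+c_x)\Ua_2$. Since $\gamma_x=\T$ and $\gamma_{xx}=\kappa_1\Ua_1+\kappa_2\Ua_2$, the two cross products on the right of \eqref{LVFE} are short computations using $\Ua_1\times\Ua_2=\T$, $\T\times\Ua_1=\Ua_2$, $\T\times\Ua_2=-\Ua_1$. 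Writing $P=a\kappa_1+b_x$ and $Q=a\kappa_2+c_x$, I expect the right-hand side to collapse to $(b_x\kappa_2-c_x\kappa_1)\T-Q_x\Ua_1+P_x\Ua_2$.

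Next I would expand the left-hand side $\gamma_{1t}$ by the product rule, inserting the frame time-derivatives found above, and match the $\T$, $\Ua_1$, $\Ua_2$ components. The $\T$-component merely reproduces the time-differentiated LAP constraint and determines $a_t$, so it can be set aside. In the $\Ua_1$- and $\Ua_2$-components I would use $a_x=b\kappa_1+c\kappa_2$ once more to eliminate $a_x$ from $P_x$ and $Q_x$; the payoff is that every $a$-dependent term cancels, leaving the coupled system
\begin{align*}
b_t&=-c_{xx}-\kappa_1\kappa_2\,b-\kappa_2^2\,c-\tfrac12|q|^2 c,\\
c_t&=\phantom{-}b_{xx}+\kappa_1^2\,b+\kappa_1\kappa_2\,c+\tfrac12|q|^2 b.
\end{align*}

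Finally, forming $v=b+\ri c$, the second-derivative terms combine as $-c_{xx}+\ri b_{xx}=\ri v_{xx}$, the explicit potential terms as $-\tfrac12|q|^2 c+\tfrac{\ri}{2}|q|^2 b=\tfrac{\ri}{2}|q|^2 v$, and the curvature-quadratic terms factor as $(\ri q)(b\kappa_1+c\kappa_2)$. Recognizing $b\kappa_1+c\kappa_2=\Re(\qbar v)=\tfrac12(\qbar v+q\overline v)$ turns this last piece into $\tfrac{\ri}{2}|q|^2 v+\tfrac{\ri}{2}q^2\overline v$, so altogether $v_t=\ri v_{xx}+\ri|q|^2 v+\tfrac{\ri}{2}q^2\overline v$, which is exactly the linearization of \eqref{NLS} about $q$. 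The main obstacle is not any isolated step but the disciplined bookkeeping: getting the frame time-evolution right (in particular fixing $\rho$ and its constant) and carrying the $a$-terms through the cross products so that they visibly cancel. The one genuinely non-mechanical move is the final recombination, where rewriting $b\kappa_1+c\kappa_2$ as $\Re(\qbar v)$ is what simultaneously exposes the $|q|^2v$ and the conjugate $q^2\overline v$ terms characteristic of the NLS linearization.
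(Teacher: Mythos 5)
Your proposal is correct and is essentially the computation the paper itself invokes for Theorem \ref{T1}: insert the ansatz \eqref{ansatz} into the linearized VFE \eqref{LVFE}, use the natural-frame structure equations (including the time evolution of the frame, with the twist coefficient $\rho=-\tfrac12|q|^2$ fixed by the same normalization that makes $q$ solve \eqref{NLS}), and read off that $v=b+\ri c$ satisfies \eqref{LNLS}. The paper describes this as a ``lengthy but straightforward computation'' citing \cite{Ca11,Ca11b}, and also reproduces the same verification in complexified form (via the vector $\Ua=\Ua_1+\ri\Ua_2$ and condition \eqref{cond}) as the $n=1$ example following Lemma \ref{Lem1}; your direct real-component bookkeeping matches it term for term.
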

In other words, $b$ and $c$ are the real and imaginary parts of a solution of linearized NLS.
This remarkable observation reduces the problem of solving the linearized VFE to the problem of solving linearized NLS.

The linearized NLS can be solved systematically for large classes of solutions.
Indeed, one can construct solutions of the linearization on the NLS from the solutions of the AKNS system (\ref{AKNS}) through the so-called {\sl{squared eigenfunctions}} \cite{FoLe86, MOv}.
This fact often enables one to solve the linearization completely in the desired
space. For example, in the case where the solution $q$ is a soliton solution (where the space is $L^2\left(\mathbb{R}\right)$),
a complete
set of solutions of the linearization  (in the sense that for any fixed value of time $t$, it spans $L^2(\mathbb{R})$)  can be constructed using the squared eigenfunctions \cite{Kaup76a,Kaup76}.
 In the periodic case,
the question of completeness of the set of squared eigenfunctions remains unsolved. However, it is expected that the same type of results holds \cite{FoLe86}. Note that this question was also considered for  the sine-Gordon equation \cite{ErFoMc87,ErFoMc90}. In the case of some genus one (traveling wave) periodic NLS solution, the question of completeness has been solved recently (see for example \cite{Ivey08,Bottman11}).

Our first goal will be to extend the result of Theorem \ref{T1} to the whole VFE hierarchy.

\section{The VFE hierarchy}
\label{S3}

As a completely integrable system, the NLS has an infinite number of conserved quantities which are in involution
with respect to the Poisson bracket  associated to the NLS Hamiltonian (see for example \cite{La91}).
Each of the equations in the hierarchy is integrable and admits a corresponding AKNS linear system;
the spatial part of each system is the same as in the system \eqref{AKNS} for NLS.  (These AKNS systems
are constructed explicitly in Section \ref{hie}.)
The VFE is also part of a hierarchy of integrable geometric evolution equations for curves
\begin{equation}
\label{filn}
\gamma_t=\W_n, \qquad (\text{VFE}_n)
\end{equation}
for $n \ge 0$,
where $\W_n$ is a vector field along space curve $\gamma$; for example, $\W_0 = -\T$, and $\W_1 = \kappa \B$ gives the VFE.  (Further members of this hierarchy are generated recursively, as explained below.)

The VFE hierarchy is related
to the NLS hierarchy through the Hasimoto transformation \eqref{hasimoto} \cite{La91, La94, La99}.
More precisely, suppose the $m$th order NLS equation is written in evolution form as
\begin{equation}
q_t = F_m[q]\qquad (\text{NLS}_m)
\label{NLSm1}
\end{equation}
for $m\ge 1$. (For example, $F_1[q] = -q_x$, while $F_2[q] = \ri (q_{xx} + \tfrac12 |q|^2 q)$ gives the usual focusing NLS.)
Then, given a solution $\gamma$ of \eqref{filn} with natural curvatures $\kappa_1$ and $\kappa_2$,  the quantity $q=\kappa_1+\ri \kappa_2$ satisfies $q_t = F_{n+1}[q]$.
Conversely, given a solution $q$ of $\NLS_m$ \eqref{NLSm1} and a fundamental matrix solution $\bPsi$ of the corresponding AKNS system, then the Sym transformation \eqref{Symf} evaluated at $\lambda=0$ gives a curve that
(with our identification of $su(2)$ with $\bbR^3$) satisfies $\gamma_t = W_{m-1}$,
and whose natural curvatures are the real and imaginary parts of $q$ (see Prop.~26 in \cite{La99}).
More generally, if \eqref{Symf} is evaluated at a nonzero $\lambda=\lambda_0$, then the resulting
curve will evolve by a linear combination of {$\VFE_{m-1}$ and flows earlier in the hierarchy}. For example, the Sym transformation applied to the NLS equation gives rise to the modified version of the VFE given in \eqref{MVFE2}.  For the generalization to higher order, see Corollary \ref{modCor}.

In this section, we will describe the main properties of the VFE hierarchy.
Each vector field $\W_n$ can be expressed in terms of $\gamma$ and its derivatives with respect to $x$; for example,
\eq{
\begin{aligned}
\W_0&=-\T=-\gamma_x,\\
\W_1&=\kappa\B=\kappa_1\Ua_2-\kappa_2\Ua_1=\gamma_x\times \gamma_{xx},\\
\W_2&=\tfrac12\kappa^2\T+\kappa_x\N+\kappa\tau\B=\tfrac12\kappa^2\T+\kappa_{1x}\Ua_1+\kappa_{2x}\Ua_2=
\gamma_{xxx}+\tfrac{3}{2}|\gamma_{xx}|^2 \gamma_x,\\ 
\W_3&=\kappa^2\tau\T+(2\kappa_x\tau+\kappa\tau_x)\N+(\kappa\tau^2-\kappa_{xx}-\tfrac{1}{2}\kappa^3)\B\\
&=(\kappa_1\kappa_{2x}-\kappa_2\kappa_{1x})\T+\left(\kappa_{2xx}+\frac{\kappa_2(\kappa_1^2+\kappa_2^2)}{2}\right)\Ua_1-\left(\kappa_{1xx}+\frac{\kappa_1(\kappa_2^2+\kappa_1^2)}{2}\right)\Ua_2\\
&=\gamma_{xxxx}\times \gamma_{x}+\gamma_{xx}\times \gamma_{xxx}+{\tfrac{5}{2}} |\gamma_{xx}|^2 \gamma_{xx}\times\gamma_x.
\end{aligned}
}{hiee}

To define the recursion operator that links these vector fields, we first consider a general deformation
$\gamma_t = \W$ of a space curve.  Note that $\W$ is LAP if and only arclength derivative $\W_x$
satisfies $\W_x \cdot \T = 0$.
We define the reparametrization operator $\mathcal{P}$ which takes a arbitrary vector field and redefining its
${\bf{T}}$-component so that the LAP condition is satisfied.
If $\W=f\T+g{\Ua}_1+h{\Ua}_2$ in terms of a natural frame, then
the {\sl reparametrization operator} is
$$
\mathcal{P}(\W)=\int^x(\kappa_1\, g+\kappa_2\,h) dx\, \T+g{\Ua}_1+h{\Ua}_2.
$$
We define the {\sl recursion operator} $\R$ by $\R(\W) = -\mathcal{P}(\T \times \W_x)$
\cite{La91}.  Then the VFE vector fields satisfy the recursion relation
\begin{equation}\label{rec1}
\W_n = \R(\W_{n-1}).
\end{equation}

If we write the vector fields of the VFE hierarchy as
$$\W_n =f_n\T+g_n\Ua_1+h_n\Ua_2,$$
then the $\T$-coefficients $f_n$ (which are given by the integral in the reparametrization operator) are only defined up to an additive constant.  Nevertheless, we will make a `canonical' choice of $\T$-coefficient, expressible in terms of the natural curvatures and their derivatives, given by Langer's formula \eqref{fn}.

If we define the complex quantity $q=\kappa_1+\ri\kappa_2=\kappa e^{\ri\int^x\tau dx}$, then the equation satisfied by $q$ is given by \cite{La94}
\begin{equation}
\label{wave}
q_t=-{\Ua}\cdot \mathcal{R}^2(\W_n),
\end{equation}
 where we have used the following definition
\eqnn{\Ua:=\Ua_1+\ri \Ua_2,}
which satisfies
 \eq{
& \Ua\cdot\Ua=0,\;\;\Ua\cdot\overline{\Ua}=2,\\
&\Ua_x=-q\T,\;\;\T_x=\realpart{{\left(\kb\Ua\right)}}.
 }{Up}
When $n=1$ (i.e., when $\gamma$ evolves by the VFE), equation \eqref{wave} is the {usual focusing} NLS \eqref{NLS}, while for $n=2$
it is the CmKdV \eqref{CmKdV}.  In general, equation \eqref{wave} is the $(n+1)$st order member
of the NLS hierarchy, i.e.,~$\NLS_{n+1}$.


Alternatively, the recurrence relation  operator \eqref{rec1} can be written as \cite{La99}
\begin{equation}
J \W_n=\partial_x\W_{n-1},
\label{rec2}
\end{equation}
with $\W_0\equiv -\T$, and where $J$ denotes the operator $\T\times$. The tangential component (or $\T$-component) of $\W_n$ is determined by the LAP condition, but Langer
determines this tangential component in terms of vector fields earlier in the hierarchy.   For, \eqref{rec2} implies that
$$\W_n = f_n \T - J \di \W_{n-1}$$
for some scalar $f_n$ which must satisfy
$$\di f_n = \di \langle \W_n, \T\rangle = \langle \W_n, \di \T \rangle = -\langle \W_n, J \W_1\rangle = \langle \W_1, J \W_n\rangle.$$
(The LAP condition is used in the second equality.)
Langer observes that for $m<n$
$$\langle \W_m, J \W_n \rangle = \tfrac12 \sum_{j=1}^{n-m} \di \langle \W_{m+j-1}, \W_{n-j}\rangle,$$
so in particular, taking $m=1$, we have
\eq{f_n = \tfrac12 \sum_{j=1}^{n-1} \langle \W_{j}, \W_{n-j}\rangle}{fn}
{for $n\ge 2$}, up to constant of integration.  Throughout this paper, we choose the constant of integration to be zero,
{and note that $f_0=-1$, $f_1=0$ from \eqref{hiee}.}

\section{Linearized equations for the VFE hierarchy}
\label{S4}

For members of the hierarchy other than the VFE, we aim at showing a result similar to Theorem \ref{T1}, that is that the $\Ua_1$ and $\Ua_2$ components of any solution of the  linearization of the
filament equation  \eqref{filn} are the real and imaginary parts of a solution to the linearized corresponding member of the NLS hierarchy.
{For $q$ satisfying the $\NLS_{n+1}$ equation given by \eqref{wave}, the linearization reads}
\begin{equation}
\label{lin}
v_t=-\frac{\delta\beta_{n+2}}{\delta (q,\overline{q})}(v,\overline{v})^T
\end{equation}
where $\delta$ denotes the Fr\'echet derivative and
\begin{equation}
\beta_n\equiv \Ua\cdot \W_n.
\nonumber
\end{equation}
{More precisely, if  $\omega$ is a LAP solution of the linearization of the
filament equation  \eqref{filn}, i.e.
\begin{equation}
\label{linfil}
\omega_t=\frac{\delta \W_n}{\delta \gamma} \omega,
\end{equation}
then we aim to show that the quantity $v=\omega\cdot{{\Ua}}$ solves \eqref{lin}.}
\begin{Lem}
\label{Lem1}
The quantity $v=\omega\cdot{{\Ua}}$ solves (\ref{lin}) for any LAP solution $\omega$ of (\ref{linfil}) if and only if we have the equality
\begin{equation}
\label{cond}
\Ua\cdot\left(\frac{\delta \W_n}{\delta\gamma} {\bf{V}}\right) +{{{\Ua}_{t_n} }}\cdot {\bf{V}}+\frac{\delta\beta_{n+2}}{\delta (q,\overline{q})}\,\left({\Ua} \cdot {\bf{V}},\;\overline{{{\Ua}}}\cdot \overline{{\bf{V}}}\right)^T=0,
\end{equation}
where the derivative of $\Ua$ with respect to the $n^{\mbox{th}}$ flow is given by
\eq{
{\Ua}_{t_n}=-\left(\W_{nx}\cdot {\Ua}\right)\T+\ri\alpha_{n+1}{\Ua},
}{Utn}
and
$\alpha_n =\W_n\cdot\T$ and $\beta_n =\W_n\cdot\Ua$, for any LAP vector field ${\bf{V}}$.
\end{Lem}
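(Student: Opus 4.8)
The plan is to compute $v_t = (\omega \cdot \Ua)_t$ directly using the product rule, substitute the evolution equations for both factors, and then recognize that the resulting expression equals the right-hand side of the linearization \eqref{lin} precisely when \eqref{cond} holds. First I would write
\[
v_t = \omega_t \cdot \Ua + \omega \cdot \Ua_{t_n},
\]
where $\omega$ is assumed to solve the linearized filament equation \eqref{linfil}, so that $\omega_t = \tfrac{\delta \W_n}{\delta\gamma}\,\omega$. Substituting this in gives
\[
v_t = \Ua\cdot\left(\frac{\delta \W_n}{\delta\gamma}\,\omega\right) + \Ua_{t_n}\cdot \omega.
\]
On the other hand, the claimed target equation \eqref{lin} says $v_t = -\tfrac{\delta\beta_{n+2}}{\delta(q,\qbar)}(v,\bar v)^T$, and since $v = \omega\cdot\Ua$ we have $\bar v = \bar\omega\cdot\overline{\Ua} = \omega\cdot\overline{\Ua}$ (using that $\omega$ is real). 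Thus $v$ solves \eqref{lin} if and only if
\[
\Ua\cdot\left(\frac{\delta \W_n}{\delta\gamma}\,\omega\right) + \Ua_{t_n}\cdot \omega + \frac{\delta\beta_{n+2}}{\delta(q,\qbar)}\left(\Ua\cdot\omega,\;\overline{\Ua}\cdot\omega\right)^T = 0.
\]
This is exactly \eqref{cond} with $\V = \omega$. So one direction of the equivalence (condition implies $v$ solves the linearization) is essentially immediate once the evolution law \eqref{Utn} for $\Ua_{t_n}$ is in hand.

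The nontrivial direction is the \emph{only if}: I must show that if $v = \omega\cdot\Ua$ solves \eqref{lin} for \emph{every} LAP solution $\omega$ of \eqref{linfil}, then the pointwise identity \eqref{cond} holds for \emph{every} LAP vector field $\V$. The subtlety is that \eqref{cond} is asserted for arbitrary LAP $\V$, whereas the hypothesis only supplies those $\V$ that happen to be solutions of the linearized flow. The key structural fact that bridges this gap is that \eqref{cond} is a pointwise (in $x$ and $t$) linear-algebraic/differential identity in $\V$ and its $x$-derivatives evaluated at a single time: the operators $\tfrac{\delta \W_n}{\delta\gamma}$ and $\tfrac{\delta\beta_{n+2}}{\delta(q,\qbar)}$ are differential operators acting on $\V$ at fixed $t$, and $\Ua_{t_n}$ is a coefficient depending only on the background solution. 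Since any prescribed LAP field and its finitely many $x$-derivatives can be realized as the time-zero data of some solution $\omega$ of the (linear, well-posed) evolution \eqref{linfil}, the validity of \eqref{cond} on the solution set forces it as an identity in $\V$.

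I would therefore proceed by: (i) establishing formula \eqref{Utn} for $\Ua_{t_n}$ — this follows from differentiating the frame relations \eqref{Up} in time and using $\Ua\cdot\Ua = 0$, $\Ua\cdot\overline{\Ua}=2$ together with $\W_{nt}$-type structure of the flow, and identifying the tangential and normal contributions as $-(\W_{nx}\cdot\Ua)\T$ and $\ri\alpha_{n+1}\Ua$ respectively; (ii) performing the product-rule computation above to reduce the claim to \eqref{cond}; and (iii) invoking the genericity/reduction argument to upgrade from solutions $\omega$ to arbitrary LAP $\V$. I expect step (i), the derivation of \eqref{Utn}, to be the main obstacle: it requires carefully tracking how the natural frame $(\T,\Ua_1,\Ua_2)$ evolves under the $n$th flow, keeping the compatibility between the spatial structure \eqref{Up} and the time evolution $\gamma_t = \W_n$, and correctly isolating the phase rotation $\ri\alpha_{n+1}$ of $\Ua$ from the tangential term. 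Once \eqref{Utn} is secured, the remainder is bookkeeping.
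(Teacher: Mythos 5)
Your proposal is correct and follows essentially the same route as the paper: substitute $v=\omega\cdot\Ua$ into \eqref{lin} using \eqref{linfil} and the product rule, and derive \eqref{Utn} by differentiating the orthogonality relations ($\T\cdot\Ua=0$ for the tangential part, $\Ua\cdot\Ua=0$ to rule out an $\overline{\Ua}$-component) and equating the mixed derivatives $\Ua_{xt_n}=\Ua_{t_nx}$ via \eqref{Up}. Your additional remark on upgrading from solutions $\omega$ of the evolution to arbitrary LAP fields $\V$ is a point the paper leaves implicit, but it does not change the argument.
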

\begin{proof}
The equation (\ref{cond}) is obtained by inserting $v=\omega\cdot{{\Ua}}$ into (\ref{lin}), using (\ref{linfil}). The $\T$-component of ${\Ua}_{t_n}$ is found by differentiating the orthogonality condition of $\T$ and $\Ua$ with respect to $t_n$. The fact that ${\Ua_{t_n}}$ has no $\overline{\Ua}$-component is found by differentiating the orthogonality condition of $\Ua$ with itself.  The $\Ua$ component is found by equating the mixed derivatives $\Ua_{xt_n}$ and $\Ua_{t_nx}$ computed with the help of (\ref{Up}).
\end{proof}

As an example, take the VFE with $\W_1=\gamma_x\times\gamma_{xx}=\kappa_1{\Ua}_2-\kappa_2{\Ua}_1$.  We have that $\W_{1x}=-\kappa_{2x}{\Ua}_1+\kappa_{1x}{\Ua}_2$ and thus
$$\W_{1x}\cdot\Ua=\ri q_x.$$
Furthermore,
$$
\begin{aligned}
{\Ua}\cdot \frac{\delta \W_1}{ \delta \gamma}&={\Ua}\cdot\left(\gamma_x\times\partial_x^2-\gamma_{xx}\times\partial_x\right)\\
&={\Ua}\cdot\left(\T\times\partial_x^2-\kappa_1{\Ua}_1\times\partial_x-\kappa_2{\Ua}_2\times\partial_x\right).
\end{aligned}
$$
Thus
$${\Ua}\cdot \frac{\delta \W_1}{ \delta \gamma} ({\bf{V}})={\Ua}\cdot\left(\left(\T\times \partial_x^2\right) {\bf{V}}\right),
$$
for any LAP vector field ${\bf{V}}$, since ${\bf{V}}_x$ has no $\T$ component. A straightforward computation shows that
\begin{align}
\notag
&\R(\W_1)=\frac{q^2}{2} \T+\kappa_{1x}{\Ua}_1+\kappa_{2x}{\Ua}_2,\\ \notag
&\R^2(\W_1)=(\kappa_1\kappa_{2x}-\kappa_2\kappa_{1x})\T+(\kappa_{2xx}+\kappa_2|q|^2/2){\Ua}_1-(\kappa_{1xx}+\kappa_1|q|^2/2){\Ua}_2.
\notag
\end{align}
Hence
$$
\beta_3=\R^2(\W_1)\cdot{\Ua}=-\ri q_{xx}-\ri q |q|^2/2.
$$
We now take the Fr\'echet derivative to find
$$
\frac{\delta\beta_3}{\delta (q,\overline{q})}=(-\ri \partial_x^2-\ri |q|^2,\;-\ri q^2/2).
$$
Finally
$$
\alpha_2=\R(\W_1)\cdot \T=|q|^2/2.
$$
Substituting into the LHS of (\ref{cond}), we get
$$
{\Ua}\cdot\left(\T\times {\bf{V}}_{xx}\right)+\left(-\ri q_x\T+\frac{\ri}{2} |q|^2{\Ua}\right)\cdot {\bf{V}} +\left(-\ri \partial_x^2-\ri |q|^2\right)(\Ua\cdot {\bf{V}})-\ri q^2\overline{\Ua}\cdot \overline{{{\bf{V}}}}/2,
$$
which can immediately can be written as
\begin{equation}
\label{mbz}
{\Ua}\cdot\left(\T\times {\bf{V}}_{xx}\right)-\ri \partial_x^2(\Ua\cdot {\bf{V}})+\left(-\ri q_x\T-\frac{\ri}{2} |q|^2{\Ua}\right)\cdot {\bf{V}}-\frac{\ri q^2}{2} \overline{\Ua}\cdot \overline{{{\bf{V}}}}.
\end{equation}
Let ${\bf{V}}=a\T+b{\Ua}_1+c{\Ua}_2$ with LAP condition given by $a_{x}=\kappa_1 b+\kappa_2 c$. Then
$${\bf{V}}_x=(b_{x}+a\kappa_1){\Ua}_1+(c_{x}+a\kappa_2){\Ua}_2$$ and
$$
{\bf{V}}_{xx}=(b_{xx}+(\kappa_1b+\kappa_2c)\kappa_1+a\kappa_{1x}){\Ua}_1+
(c_{xx}+(\kappa_1 b+\kappa_2c)\kappa_2+a\kappa_{2x}){\Ua}_2+*\T,
$$
where we have intentionally not computed the coefficient of $\T$. A straightforward computation gives us
$$
{\Ua}\cdot\left(\T\times {\bf{V}}_{xx}\right)=\ri v_{xx}+\ri (q\overline{v}+\overline{q}{v})q/2+\ri aq_x,
$$
where $v=b+\ri c$. Substituting into (\ref{mbz}), and using the fact that ${\bf{U}}\cdot{\bf{V}}=v$, we get
$$
\begin{aligned}
\ri v_{xx}+\ri (q\overline{v}+\overline{q}{v})q/2+\ri aq_x-\ri a q_x -\ri v_{xx}-\ri v|q|^2/2-\ri \overline{v} q^2/2=0.
\end{aligned}
$$
Thus the VFE satisfies Lemma \ref{Lem1}.

We now want to prove the following theorem:
\begin{Thm}
\label{T2}
The quantity $v=\omega\cdot{{\Ua}}$ solves (\ref{lin}) for any LAP solution $\omega$ of (\ref{linfil}) if and only if, for any LAP vector field ${\bf{V}}$, we have the equality
\begin{equation}
\label{cond2}
\realpart{\left(
\frac{\delta \alpha_{n+2}}{\delta q}
\left({\Ua} \cdot {\bf{V}}\right)+
\frac{\delta \alpha_{n}}{\delta q}
\left({\Ua} \cdot {\bf{V}_{xx}}+\ri\mu q\right)
-\frac{1}{2}\left(\overline{\beta}_n {\Ua}\cdot {\bf{V}}_x
+\ri\overline{\beta}_{n+1}{{\Ua}} \cdot {\bf{V}}\right)
\right)}=0,
\end{equation}
where $\alpha_n =\W_n\cdot\T$, $\beta_n =\W_n\cdot\Ua$,
and $\mu$ satisfies the equation
\eq{
\mu_x = \imaginarypart{\left(\overline{q}{\bf V}_x \cdot  \Ua\right)}.
}{mu}
\end{Thm}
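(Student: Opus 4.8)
The plan is to convert the complex identity \eqref{cond} into the real identity \eqref{cond2} by computing the variation of the moving frame under a LAP perturbation $\delta\gamma=\V$ and then exploiting the recursion relating $\alpha_m=\W_m\cdot\T$ and $\beta_m=\W_m\cdot\Ua$. Since $\T=\gamma_x$, I would first record $\delta\T=\V_x$. Because $\Ua\cdot\Ua=0$, $\Ua\cdot\overline{\Ua}=2$ and $\Ua\cdot\T=0$ from \eqref{Up}, differentiating these three relations (exactly as in the proof of Lemma~\ref{Lem1}) forces $\delta\Ua=-(\V_x\cdot\Ua)\,\T+\ri\mu\,\Ua$ for some real $\mu$, with no $\overline{\Ua}$-component. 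Equating $\di(\delta\Ua)$ with $\delta(\Ua_x)=\delta(-q\T)$ and using $\Ua_x=-q\T$, $\T_x=\realpart(\overline{q}\Ua)$ then reproduces \eqref{mu}, $\mu_x=\imaginarypart(\overline{q}\,\V_x\cdot\Ua)$, and at the same time identifies the induced variation of the potential, $\delta q=\Ua\cdot\V_{xx}+\ri\mu q$ (here $\Ua\cdot\V_{xx}=(\V_x\cdot\Ua)_x$ by the LAP condition). This is the step that ties the nonlocal $\mu$ appearing in \eqref{cond2} to the geometry.

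Next I would rewrite $\Lambda_1$, the left-hand side of \eqref{cond}. Writing $\W_n=\alpha_n\T+\tfrac12\overline{\beta}_n\Ua+\tfrac12\beta_n\overline{\Ua}$ and using $\delta\beta_n=\delta\W_n\cdot\Ua+\W_n\cdot\delta\Ua$ gives $\Ua\cdot\delta\W_n=\delta\beta_n+\alpha_n(\V_x\cdot\Ua)-\ri\mu\beta_n$. From \eqref{Utn} together with $\W_{nx}\cdot\Ua=\ri\beta_{n+1}$ (obtained by differentiating $\beta_n=\W_n\cdot\Ua$ and applying the recursion), one gets $\Ua_{t_n}\cdot\V=-\ri a\,\beta_{n+1}+\ri\alpha_{n+1}v$, with $a=\V\cdot\T$ and $v=\V\cdot\Ua$. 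Substituting these into \eqref{cond} expresses $\Lambda_1$ as a local combination of $\alpha_n,\alpha_{n+1},\beta_n,\beta_{n+1}$, of $\delta\beta_n$, and of the Fréchet derivative $\tfrac{\delta\beta_{n+2}}{\delta(q,\overline{q})}(v,\overline{v})^T$.

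Writing $\Lambda_2$ for the left-hand side of \eqref{cond2}, the core of the argument is the identity
\[
\di\Lambda_2=\tfrac12\,\realpart\big(\overline{q}\,\Lambda_1\big).
\]
To prove it, note first that the real part in \eqref{cond2} turns the partial derivatives into full variations, $\Lambda_2=\tfrac12 D\alpha_{n+2}[v]+\tfrac12\delta\alpha_n-\tfrac12\realpart(\overline{\beta}_n(\V_x\cdot\Ua)+\ri\overline{\beta}_{n+1}v)$, where $D f[v]:=\tfrac{\delta f}{\delta q}v+\tfrac{\delta f}{\delta\overline{q}}\overline{v}$. Differentiating and using $\alpha_{m,x}=\realpart(\overline{q}\beta_m)$ (the $\T$-component of \eqref{rec2}) yields $\di(D\alpha_{n+2}[v])=\realpart(\overline{v}\beta_{n+2})+\realpart(\overline{q}\,D\beta_{n+2}[v])$ and $\di(\delta\alpha_n)=\realpart(\beta_n\overline{\delta q})+\realpart(\overline{q}\,\delta\beta_n)$; the terms $\realpart(\overline{q}\,\delta\beta_n)$ and $\realpart(\overline{q}\,D\beta_{n+2}[v])$ already match the corresponding pieces of $\realpart(\overline{q}\,\Lambda_1)$. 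The remaining terms then match after inserting $\delta q=(\V_x\cdot\Ua)_x+\ri\mu q$ and $\V_x\cdot\Ua=v_x+aq$, using the complex recursion $\ri\beta_{m+1}=\beta_{m,x}+\alpha_m q$ to eliminate $\beta_{n,x}$ and $\beta_{n+2}$, and integrating by parts the product-rule terms coming from $\di\realpart(\overline{\beta}_n(\V_x\cdot\Ua)+\ri\overline{\beta}_{n+1}v)$; the nonlocal pieces reorganize into $\mu\,\imaginarypart(\overline{q}\beta_n)$, $a\,\imaginarypart(\overline{q}\beta_{n+1})$ and $\alpha_{n+1}\imaginarypart(\overline{q}v)$ and cancel in pairs. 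This term-by-term bookkeeping, and in particular the control of the nonlocal quantities $a$ and $\mu$, is the step I expect to be the main obstacle.

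Finally I would deduce the equivalence claimed in the theorem from this identity together with Lemma~\ref{Lem1}. If \eqref{cond2} holds for every LAP $\V$, then $\realpart(\overline{q}\,\Lambda_1)=2\di\Lambda_2=0$ for every $\V$; varying the two real normal components of $\V$ independently and also replacing $v$ by $\ri v$ supplies both the real and the imaginary part of $\overline{q}\,\Lambda_1$, so that $\Lambda_1=0$ wherever $q\neq0$, which is \eqref{cond}. Conversely, if \eqref{cond} holds for all $\V$ then $\di\Lambda_2=0$, so $\Lambda_2$ is $x$-independent, and the constant is set to zero using the freedom in the additive constants of the nonlocal terms $a,\mu$ and of the $\alpha_m$ (equivalently, decay or a mean-zero normalization of $\V$). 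Inverting the projection $\realpart(\overline{q}\,\cdot)$ is delicate precisely because the nonlocal terms make $\Lambda_1$ only $\bbR$-linear, not $\bbC$-linear, in $v$, and this is the second point that requires care.
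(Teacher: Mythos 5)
Your setup (the variation of the frame, $\del{q}=\Ua\cdot\V_{xx}+\ri\mu q$ with $\mu$ as in \eqref{mu}, and the rewriting of the left-hand side $L_n$ of \eqref{cond} as $\alpha_n(\Ua\cdot\V_x)+\del{\beta_n}-\ri\mu\beta_n-\ri a\beta_{n+1}+\ri\alpha_{n+1}v+\delkk{\beta_{n+2}}(v,\overline{v})^T$) agrees with the paper, and your central identity $\di T_n=\tfrac12\realpart(\qbar L_n)$, with $T_n$ the left-hand side of \eqref{cond2}, is in fact correct: I checked it term by term using $\di\alpha_m=\realpart(\qbar\beta_m)$ and $\ri\beta_{m+1}=\di\beta_m+q\alpha_m$. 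But it is a genuinely different identity from the one the paper establishes, namely $L_{n+1}+\ri\,\di L_n=-2\ri q\,T_n$ \eqref{finc}: yours relates $T_n$ and $L_n$ at the \emph{same} index through an $x$-derivative and a single real projection, whereas the paper's shifts the index and produces $T_n$ pointwise.

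The gap is in deducing the stated equivalence from your identity, and it occurs in both directions. Forward: $L_n\equiv 0$ only gives $\di T_n=0$, i.e.\ $T_n$ constant in $x$, and the normalization you invoke to kill the constant is vacuous: $T_n$ is actually \emph{invariant} under shifting the integration constants of $a$ and $\mu$. (Shifting $a$ by $c$ changes $T_n$ by $c\bigl(\realpart(\tfrac{\delta\alpha_n}{\delta q}q_x)-\tfrac12\realpart(\qbar\beta_n)\bigr)=0$ since $\realpart(\tfrac{\delta\alpha_n}{\delta q}q_x)=\tfrac12\di\alpha_n$; shifting $\mu$ by $c$ changes it by $c\,\realpart(\tfrac{\delta\alpha_n}{\delta q}(\ri q))$, which is half the derivative of $\alpha_n$ along $q\mapsto e^{\ri\epsilon}q$ and vanishes by phase invariance.) Decay is unavailable for the closed curves of interest. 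Backward: from $T_n\equiv0$ you obtain only the one real equation $\realpart(\qbar L_n[\V])=0$ per perturbation; writing $L_n[\V]=\mathcal{A}v+\mathcal{B}\overline{v}$ with (nonlocal) $\bbC$-linear operators $\mathcal{A},\mathcal{B}$, this is equivalent to the single operator relation $\qbar\mathcal{A}+q\overline{\mathcal{B}}=0$, which does not force $\mathcal{A}=\mathcal{B}=0$, and the substitution $v\mapsto\ri v$ does not return $\ri L_n[\V]$ precisely because of the $\overline{v}$-dependence and the nonlocal terms you yourself flag. The paper's index-shifting identity dissolves both problems at once: with $L_n=L_{n+1}=0$ it yields $qT_n=0$ pointwise, and with $T_n=0$ for all $n$ it yields $L_{n+1}=-\ri\,\di L_n$, so $L_n\equiv0$ follows by induction from the directly verified base case $L_0=0$. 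To complete your route you would need either a second identity controlling $\imaginarypart(\qbar L_n)$ or an inductive relation of the paper's type.
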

\begin{proof}
To prove this theorem, we need to show that the LHS of (\ref{cond}) is equal to zero if and only if the LHS of (\ref{cond2}) is also for all $n\geq 0$.
Let $L_n$ denote the LHS of \eqref{cond}, which we rewrite as
\eq{L_n :=\Ua\cdot\del{\W_n}+\Ua_{t_n}\cdot\V+\delkk{\beta_{n+2}}(\Ua\cdot\V,\overline{\Ua}\cdot \V)^T,}{first}
where we use $\del{}$ to denote the Fr\'echet derivative with respect to $\gamma$ in the $\V$-direction, that is
\eq{\del{p} := \displaystyle{\delg{p}(\V)},}{FV}
for any differentiable function $p$ of $\gamma$ and its derivatives.
We are going to prove  that
\eq{
L_{n+1}+\ri \partial_x L_n=-2\ri q T_n,}{finc}
where we define $T_n$ to be the LHS of (\ref{cond2}).
From (\ref{finc}) it is clear that if $L_n=0$ for all $n\geq 0$ then $T_n=0$ for all $n\geq 0$. Now, if $T_n=0$, we use induction and the fact that $L_1=0$ (as verified explicitly for the VFE after the statement of Lemma \ref{Lem1}) to show that $L_n=0$. ($L_0$ can easily be verified to be zero).

To prove (\ref{finc}), we begin by linearizing the recursion relation (\ref{rec2}) by applying the operation $\del{}$:
\begin{equation}
\V_x\times \W_{n+1}+\T\times \del{\W_{n+1}}=\partial_x\del{\W_n},
\label{reclin}
\end{equation}
where we have used the fact that $\del{\T}=\V_x$.
We compute
\eqnn{
\Ua\cdot \del{\W_{n+1}}&=-\ri\Ua\cdot \T\times\del{\W_{n+1}}
\\&=-\ri \Ua\cdot\left(\partial_x\left(\del{\W_n}\right)-\V_x\times \W_{n+1}\right)
\\&=-\ri\left(\partial_x\left(\Ua\cdot\del{\W_n}\right)-\Ua_x\cdot\del{\W_n}\right)+\ri\Ua\cdot \V_x\times \W_{n+1},
}
where on the first line we used the orthonormality of the natural frame $(\T,\Ua_1,\Ua_2)$ and on the second line the relation (\ref{reclin}).  We then substitute for $\Ua\cdot\del{\W_n}$
an expression involving  $L_n$ given in (\ref{first}) to get
\eq{&\Ua\cdot \del{\W_{n+1}}=\\
&\ri\partial_x\left(-L_n+\Ua_{t_n}\cdot\V+\delkk{\beta_{n+2}}(\Ua\cdot\V,\overline{\Ua}\cdot \V)^T\right)-\ri q \T\cdot\del{\W_n}+\ri\Ua\cdot \V_x\times \W_{n+1},
}{indh}
where we have used the expression for $\Ua_x$ given in (\ref{Up}). We now aim at simplifying the expression above by computing $\displaystyle{\delkk{\beta_{n+2}}}$. To do so, we take the dot product on both sides of the recursion relation (\ref{rec2}) with $\Ua$ to obtain
\eqnn{\ri \beta_{n+1}&=\partial_x\left(\Ua\cdot \W_n\right)-\partial_x\Ua\cdot \W_n\\
&=\partial_x\beta_n+q\alpha_n,}
where, on the first line, we have used the orthonormality of the natural frame and, on the second line, we have used the expression for $\Ua_x$ given in (\ref{Up}) and the definitions of $\alpha_n$ and $\beta_n$ given in Theorem \ref{T2}.
We take the Fr\'echet derivative of the above expression with respect to both $q$ and $\overline{q}$ to obtain the following operator equality:
\eq{
\ri\frac{\delta\beta_{n+1}}{\delta q}&=\partial_x\circ \frac{\delta\beta_{n}}{\delta q}+\alpha_n+q\frac{\delta\alpha_{n}}{\delta q},\\
\ri\frac{\delta\beta_{n+1}}{\delta \kb}&=\partial_x\circ \frac{\delta\beta_{n}}{\delta \kb}+q\frac{\delta\alpha_{n}}{\delta \kb}.
}{dis}
By shifting the index $n$ up by 2 in (\ref{dis}) and using  (\ref{indh}), we get the equality
\eq{&\Ua\cdot\del{\W_{n+1}}+\delkk{\beta_{n+3}}(\Ua\cdot\V,\overline{\Ua}\cdot \V)^T+\ri \partial_x f_{n}=\\&\ri\partial_x\left(\Ua_{t_n}\cdot\V\right)-
\ri \alpha_{n+2}(\Ua\cdot\V)-\ri q\delkk{\alpha_{n+2}}(\Ua\cdot\V,\overline{\Ua}\cdot \V)^T-\ri q \T\cdot\del{\W_n}+\ri\Ua\cdot\V_x\times \W_{n+1}.}{tog}
From (\ref{Utn}), we use the recursion relation (\ref{rec2}) to rewrite $\Ua_{t_n}$ as
\eq{
\Ua_{t_n}&=-\left(\Ua\cdot\T\times \W_{n+1}\right)\T+\ri\alpha_{n+1}\Ua\\
&=-\ri \left(\Ua\cdot\W_{n+1}\right)\T+\ri\alpha_{n+1}\Ua,}{Utn2}
where, on the second line, we have used a triple product property and the fact that $\Ua\times \T=\ri\Ua$ from (\ref{Up}).
We now take the dot product with $\V$ on both sides
of the second equality in (\ref{Utn2}) and differentiate with respect to $x$ and obtain
\eq{\ri\partial_x\left(\Ua_{t_n}\cdot\V\right)&
=\left(\Ua\cdot\partial_x \W_{n+1}\right)\left(\T\cdot\V\right)-\alpha_{n+1}\left(\Ua\cdot\V_x\right)
\\&+\frac{q}{2}\left(\beta_{n+1}\left(\overline{\Ua}\cdot\V\right)-\overline{\beta}_{n+1}\left(\Ua\cdot\V\right)\right).
}{UtnVd}
Note that to obtain (\ref{UtnVd}), we used the derivatives given in (\ref{Up}) and the fact that $\W_{n+1}$ is LAP, i.e.~$\partial_x\alpha_{n+1}=\left(\kb\beta_{n+1}+q\overline{\beta}_{n+1}\right)/2$.
Combining  (\ref{tog}), (\ref{Utn2}), and (\ref{UtnVd}) and using definition (\ref{first}), one gets
\eq{
&L_{n+1}+\ri\partial_x L_{n}=\\
&-\ri q\delkk{\alpha_{n+2}}(\Ua\cdot\V,\overline{\Ua}\cdot \V)^T-\ri q \T\cdot\del{\W_n}+\frac{q}{2}\left(\beta_{n+1}\overline{\Ua}-\overline{\beta}_{n+1}\Ua\right)\cdot\V,
}{int}
where we have used the fact that $-\alpha_{n+1}\left(\Ua\cdot \V_x\right)+\ri\Ua\cdot\V_x\times\W_{n+1}=0$ due to the triple product property and the fact that $\Ua\times \T=\ri\Ua$ from (\ref{Up}). We also used $\Ua\cdot\partial_x \W_{n+1}=\ri\Ua\cdot\W_{n+2}$, which follows from (\ref{rec2}) and (\ref{Up}).
Now since $\alpha_n=\T\cdot \W_n$, we have that
\eq{
\T\cdot \del{\W_n}=\del{\alpha_n}-\W_n\cdot \V_x.
}{4}
Using the chain rule and the definition (\ref{FV}), we compute
\eq{
\del{\alpha_n}=\delk{\alpha_n}\circ\del{q}+\delkb{\alpha_n}\circ \del{\kb}.
}{3}
Since $q=\T_x\cdot \Ua$ (from (\ref{Up})),
\eq{
\del{q}&=\V_{xx}\cdot\Ua+\T_x\cdot \del{\Ua}\\
 &=\V_{xx}\cdot \Ua+\T_x\cdot \left(-(\V_x\cdot\Ua)\T+\ri\mu \Ua\right)\\
 &=\V_{xx}\cdot \Ua+\ri\mu q,}{2}
 where $\mu$ satisfies the equation (\ref{mu}) (Note that the $\T$-component of $\del{\Ua}$ is found by applying the operation $\delta$ to the condition $\Ua\cdot \T=0$ and using the fact that $\del{\T}=\V_x$. The fact that $\del{\Ua}$ has no $\overline{\Ua}$-component is found by applying the operation $\del{}$ to the condition of $\Ua\cdot\Ua=0$. The $\Ua$-component is found by equating  the two mixed derivatives $\left(\del{\Ua}\right)_{x}$ and $\del{\left(\Ua_{x}\right)}$ computed with the help of (\ref{Up}).) Finally, by the definition of $\beta_n$ given in Theorem \ref{T2}, we have that
 \eq{
 \W_n\cdot\V_x=\frac{1}{2}\left(\overline{\beta}_n\Ua+\beta_n\overline{\Ua}\right)\cdot\V_x.}{1}
From (\ref{1}), (\ref{2}). (\ref{3}), and (\ref{4}) we find that
\eqnn{
\T\cdot \del{\W_n}=\frac{\delta \alpha_{n}}{\delta q}
\left({\Ua} \cdot {\bf{V_{xx}}}+\ri\mu q\right)+
\frac{\delta \alpha_{n}}{\delta \overline{q}}
\left({\overline{\Ua}} \cdot {\bf{V_{xx}}}-\ri\mu \kb\right)
-\frac{1}{2}\left(\overline{\beta}_n\Ua+\beta_n\overline{\Ua}\right)\cdot\V_x.
}
Inserting the relation above into (\ref{int}), one finds (\ref{finc}).
\end{proof}

While we were not able to prove that the relation (\ref{cond2}) holds for all $n$, this condition is particularly amenable to verification since it only involves the expression of $\W_n$-components in terms of the
natural curvatures. This is to be {{contrasted with}} the first condition found in Lemma \ref{Lem1}, which requires the expression of $\W_n$ as a function of $\gamma$ as well. We were thus able to verify the condition of Theorem \ref{T2} explicitly {(using Maple) up to $n=14$}. We thus make the following conjecture:

\begin{Conj}
\label{conj}
The quantity $v=\omega\cdot{{\Ua}}$ solves (\ref{lin}) for any LAP solution $\omega$ of (\ref{linfil}).
\end{Conj}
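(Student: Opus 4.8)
The plan is to build on Theorem~\ref{T2}, which already reduces the conjecture to the single family of scalar identities $T_n=0$, where $T_n$ denotes the left-hand side of~\eqref{cond2}. Indeed, by the master relation~\eqref{finc}, once $T_n=0$ is known for every $n$, the vanishing of $L_n$ (and hence the conjecture, via Lemma~\ref{Lem1}) follows by induction from the base cases $L_0=L_1=0$ recorded in the text. So the whole problem collapses to proving~\eqref{cond2} for all $n\ge 0$, and the payoff is that~\eqref{cond2} is expressed purely through the scalars $\alpha_n=\W_n\cdot\T$, $\beta_n=\W_n\cdot\Ua$ and their Fréchet derivatives in $q,\kb$ --- quantities that are polynomial in the natural curvatures and governed by explicit recursions, with no reference to $\W_n$ as a function of $\gamma$. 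The base cases $T_0=0$ and $T_1=0$ are already among the symbolically verified instances, and the strategy is to promote the finite verification to an induction on $n$.

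The inductive engine would be assembled from recursions already present in the proof of Theorem~\ref{T2}. The scalar recursion $\ri\beta_{n+1}=\di\beta_n+q\alpha_n$, its Fréchet-differentiated form~\eqref{dis}, and the LAP relation $\di\alpha_{n+1}=\tfrac12(\kb\beta_{n+1}+q\overline{\beta}_{n+1})$ together express the ``$+1$'' and ``$+2$'' index shifts appearing in~\eqref{cond2} in terms of lower indices. The one missing ingredient is a recursion for $\delta\alpha_{n+2}/\delta q$, which I would obtain by Fréchet-differentiating Langer's formula~\eqref{fn}; in natural-frame variables this reads $\alpha_n=\tfrac12\sum_{j=1}^{n-1}\bigl(\alpha_j\alpha_{n-j}+\realpart(\beta_j\overline{\beta}_{n-j})\bigr)$. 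Feeding these into $T_{n+1}$ and collecting terms, the aim is to show $T_{n+1}=\mathcal{O}\,T_n$ modulo a total $x$-derivative, for some operator $\mathcal{O}$, so that $T_n\equiv 0$ propagates. The decisive algebraic point is that the convolution structure of~\eqref{fn} should let the $\delta\alpha_{n+2}/\delta q$ contribution recombine with the $\overline{\beta}_n$ and $\overline{\beta}_{n+1}$ terms after the outer $\realpart$ is taken --- precisely the cancellation one observes in the cases $n\le 14$.

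An alternative, and ultimately more robust, route is to pass to generating functions and exploit the underlying AKNS structure, using the recursive construction of AKNS systems and squared eigenfunctions from Appendix~\ref{A}. Writing $A(\zeta)=\sum_{n\ge0}\alpha_n\zeta^{n}$ and $B(\zeta)=\sum_{n\ge0}\beta_n\zeta^{n}$, the recursion~\eqref{rec2} together with Langer's formula~\eqref{fn} should collapse into closed first-order (Riccati-type) relations in $x$, with coefficients built from $q,\kb$ and the formal parameter $\zeta$; concretely, one expects $A$ and $B$ to coincide with the $\zeta$-expansions of the diagonal and off-diagonal entries of the AKNS $M$-matrix, whose $x$-dependence is governed by a single linear system. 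Summing~\eqref{cond2} against $\zeta^{n}$ then converts the entire family $\{T_n=0\}$ into one identity $T(\zeta)\equiv 0$, which could be checked by differentiating in $x$ and invoking the generating-function equations, rather than by inducting on $n$ at all.

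I expect the main obstacle to be the nonlocal term $\mu$, defined only through $\mu_x=\imaginarypart(\kb\,\V_x\cdot\Ua)$ in~\eqref{mu}, together with the coupling between $\delta/\delta q$ and $\delta/\delta\kb$ forced by the outer $\realpart$. Because $\mu$ is an antiderivative, its contribution to $T_{n+1}$ cannot be cancelled pointwise: one must show it enters only inside a total $x$-derivative, or else reconstruct it from the generating-function identity, so that it telescopes across the induction. Controlling this nonlocality and the conjugate mixing uniformly in $n$ is exactly the feature that keeps the finite symbolic verification tractable while making the all-$n$ statement resistant, and any complete proof will have to tame it --- most plausibly through the generating-function/AKNS reformulation, where $\mu$ ought to correspond to a single well-defined spectral quantity rather than an index-dependent integral.
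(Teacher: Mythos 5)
Your proposal is not a proof; it is a research plan, and the decisive step is missing. The reduction you lean on --- that the conjecture follows once $T_n=0$ is known for all $n$, by combining the relation \eqref{finc} with the base cases $L_0=L_1=0$ --- is exactly the content of Theorem~\ref{T2}, which the paper already supplies. Everything beyond that in your write-up is conditional: the inductive claim that $T_{n+1}=\mathcal{O}\,T_n$ modulo a total $x$-derivative is asserted as an ``aim,'' not derived, and no candidate operator $\mathcal{O}$ is produced or shown to exist; the generating-function route rests on an expectation (``one expects $A$ and $B$ to coincide with\dots'') that is never verified, and the single identity $T(\zeta)\equiv 0$ is never established. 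You yourself flag the nonlocal term $\mu$ from \eqref{mu} and the $q$--$\kb$ mixing under the outer real part as the obstruction that ``any complete proof will have to tame,'' and you do not tame it. As written, the proposal proves nothing beyond what Theorem~\ref{T2} already gives.

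For calibration: the paper does not prove this statement either. It is explicitly labelled a Conjecture, supported only by the reduction of Theorem~\ref{T2} and symbolic verification of \eqref{cond2} up to $n=14$; the authors state plainly that they were not able to prove \eqref{cond2} for all $n$. So your identification of the right reduction, of the relevant recursions ($\ri\beta_{n+1}=\di\beta_n+q\alpha_n$, the differentiated form \eqref{dis}, the LAP relation for $\di\alpha_{n+1}$, and the convolution structure of Langer's formula \eqref{fn}, which you correctly rewrite as $\alpha_n=\tfrac12\sum_{j=1}^{n-1}\bigl(\alpha_j\alpha_{n-j}+\realpart(\beta_j\overline{\beta}_{n-j})\bigr)$), and of the $\mu$-nonlocality as the core difficulty are all sound and match the paper's own analysis. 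But none of this closes the gap: to count as a proof of the Conjecture you would need to actually carry out one of your two programs and exhibit the uniform-in-$n$ cancellation, which neither you nor the paper has done.
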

In other words, if we have a LAP solution of the form \eqref{ansatz} of the linearization of $\VFE_n$, then $b+\ri c$ satisfies the linearization of $\NLS_{n+1}$. A useful corollary to this conjecture is the fact that this result also applies to linear combinations of different members of the hierarchy:

\begin{Cor}
\label{cor}
If we have a LAP solution of the form \eqref{ansatz} of the linearization of a {linear combination $\sum_{j=0}^n c_j\VFE_j$
for constants $c_j$, then $b+\ri c$ satisfies the linearization of $\sum_{j=0}^n c_j \NLS_{j+1}$}.
\end{Cor}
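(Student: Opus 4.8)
The plan is to assume Conjecture \ref{conj} and exploit the linearity of every operation in sight, so that the single-flow statement passes term-by-term to the linear combination. First I would record what the Conjecture buys us: by the equivalence in Lemma \ref{Lem1}, assuming Conjecture \ref{conj} is the same as knowing that the pointwise identity \eqref{cond} holds for every $n\ge 0$ and every LAP vector field $\V$. This is the only input needed from the preceding section.

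Next, set $\W:=\sum_{j=0}^n c_j\W_j$, so the combined flow is $\gamma_t=\W$. Since the Fréchet derivative is linear, the linearization \eqref{linfil} of this flow reads $\omega_t=\sum_{j=0}^n c_j\,\frac{\delta\W_j}{\delta\gamma}\omega$. Let $\omega$ be any LAP solution and put $v:=\omega\cdot\Ua$. I would compute $v_t=\omega_t\cdot\Ua+\omega\cdot\Ua_t$ and substitute this expression for $\omega_t$. The crucial auxiliary fact is that the frame also evolves linearly, namely $\Ua_t=\sum_{j=0}^n c_j\,\Ua_{t_j}$ with each $\Ua_{t_j}$ given by \eqref{Utn}. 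Granting this, the two contributions group as
\[
v_t=\sum_{j=0}^n c_j\left[\Ua\cdot\!\left(\frac{\delta\W_j}{\delta\gamma}\omega\right)+\Ua_{t_j}\cdot\omega\right].
\]
Applying \eqref{cond} with $\V=\omega$ for each index $j$, and using $\Ua\cdot\omega=v$ together with $\overline{\Ua}\cdot\overline{\omega}=\overline{v}$ (valid because $\omega$ is real), each bracket equals $-\frac{\delta\beta_{j+2}}{\delta(q,\overline{q})}(v,\overline{v})^T$, so that $v_t=-\sum_{j=0}^n c_j\,\frac{\delta\beta_{j+2}}{\delta(q,\overline{q})}(v,\overline{v})^T$. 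Since each summand is exactly the right-hand side of \eqref{lin} for $\NLS_{j+1}$, this is the linearization of $\sum_{j=0}^n c_j\NLS_{j+1}$, as claimed.

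The step I expect to require the most care is justifying $\Ua_t=\sum_j c_j\Ua_{t_j}$, i.e.\ that the time-evolution of the natural frame under $\gamma_t=\W$ is the same linear combination of the individual frame evolutions. I would establish this by rerunning the derivation of \eqref{Utn} sketched in the proof of Lemma \ref{Lem1}: the $\T$-component of $\Ua_t$ is fixed by differentiating $\Ua\cdot\T=0$ (with $\T_t=\W_x=\sum_j c_j\W_{jx}$), the vanishing of the $\overline{\Ua}$-component by differentiating $\Ua\cdot\Ua=0$, and the $\Ua$-component by equating $\Ua_{xt}$ and $\Ua_{tx}$ via \eqref{Up}. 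Each of these determining relations is linear in the flow data $(\W_x,q_t)$ and fixes $\Ua_t$ uniquely given the frame; since $q_t=\sum_j c_j F_{j+1}[q]$ by linearity of \eqref{wave} in $\W$, the weighted sum $\sum_j c_j\Ua_{t_j}$ solves the determining relations for $\W$, and uniqueness identifies it with $\Ua_t$. Everything else is bookkeeping, so once this frame-linearity is in hand the corollary follows at once.
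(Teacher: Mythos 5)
Your proof is correct and follows the same (essentially tautological) route the paper intends: the corollary is stated without proof precisely because it is meant to follow from Conjecture \ref{conj} by linearity of the Fr\'echet derivative, of the identity \eqref{cond} in the index $n$, and of the frame evolution \eqref{Utn} in the flow data. Your explicit verification that $\Ua_t=\sum_j c_j\Ua_{t_j}$ (via the linear determining relations for the $\T$-, $\overline{\Ua}$-, and $\Ua$-components) supplies the one step the paper leaves implicit, and it is sound.
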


\section{The General Sym Formula}
\label{S5}

In the introduction, we explained how, given a solution $q(x,t)$ of the NLS equation \eqref{NLS}, and a fundamental matrix solution $\bPsi$ for the AKNS system (such that $\bPsi(0,\lambda)$ is the identity matrix),
the Sym formula \eqref{Symf} at a real value $\lambda=\lambda_0$ gives a curve $\gamma$ which satisfies a modification
of the VFE \eqref{MVFE2}. The modification is adding a constant multiple (proportional to $\lambda_0$) of the unit tangent vector; in particular, when $\lambda_0=0$ we get a solution of the original VFE \eqref{VFE3}.
The fundamental matrix $\bPsi$ used in the Sym formula also provides a moving frame along $\gamma$.  For, if we
fix the following basis for $\su(2)$,
\eq{\me_0 = \dfrac12 \begin{pmatrix} - \ri & 0 \\ 0 & \ri \end{pmatrix}, \quad
\me_1 = \dfrac12\begin{pmatrix} 0 & 1 \\ -1 & 0 \end{pmatrix},\quad
\me_2 = \dfrac12\begin{pmatrix} 0 & \ri \\ \ri & 0 \end{pmatrix},
}{es}
and define $\langle \me_i \rangle = \bPsi^{-1} \me_i \bPsi$, then the spatial part of the AKNS
system
\begin{equation}\label{AKNSmx}
\vpsi_x = U  \vpsi, \qquad U = \dfrac{\ri}2 \begin{pmatrix} -\lambda & q \\ \qbar & \lambda \end{pmatrix},
\end{equation}
implies that
\eqnn{
\langle \me_0 \rangle_x &= \langle [\me_0, U] \rangle = (\Re q) \langle\me_1 \rangle+ (\Im q) \langle \me_2\rangle, \\
\langle \me_1 \rangle_x &= \langle [\me_1, U] \rangle = -(\Re q) \langle \me_0\rangle + \lambda \langle \me_0\rangle, \\
\langle \me_2 \rangle_x &= \langle [\me_2, U] \rangle = -(\Im q) \langle \me_0 \rangle + \lambda \langle \me_1\rangle.
}
If we let
\begin{equation}\label{jangle}
\T = \sj \langle \me_0 \rangle, \qquad \Um_1 = \sj \langle \me_1\rangle,\quad\Um_2 = \sj \langle \me_2 \rangle,
\end{equation}
{using the fundamental matrix evaluated at $\lambda=\lambda_0$}, where $\sj$ is defined in \eqref{sj}, then
\eqnn{&{\T}_x=\kappa_1{{\Um}}_{1}+\kappa_2{{\Um}}_{2},\\
&{{\Um}}_{1x}=-\kappa_1{\T}+\lambda_0\Um_2,\\
 &{{\Um}}_{2x}=-\kappa_2{\T}-\lambda_0 \Um_1,
}where $\kappa_1$ and $\kappa_2$ are the real and imaginary parts, respectively, of the potential $q$ in the
AKNS system.  Again, when $\lambda_0=0$ we obtain a natural frame along $\gamma$ with natural curvatures
$\kappa_1, \kappa_2$; but when $\lambda=\lambda_0$ is nonzero we call $(\T, \Um_1, \Um_2)$ a {\sl twisted
natural frame} along $\gamma$.  (Note that, since the spatial part of the AKNS system is the same
for all evolution equations in the NLS hierarchy, when we begin with a fundamental matrix for
$\AKNS_{n+1}$ this construction yields a natural frame along a solution to $\VFE_n$ when
$\lambda_0=0$.)

In this paper, we are in particular interested in the stability of periodic (i.e., closed) solutions
of the VFE hierarchy.  Intuitively, periodic {$\VFE_n$} solutions should be related to periodic $\NLS_{n+1}$ solutions
by the Hasimoto map, but this is not the case in general.  When the curve $\gamma$ is closed
of length $L$ (i.e., $\gamma(x+L,t) = \gamma(x,t)$)
the potential $q(x)$ associated to $\gamma$ by the Hasimoto map \eqref{hasimoto} is not necessarily periodic; instead
$q(x+L) = e^{\ri a} q(x)$, where $a=\int_0^L \tau(x) \,dx$ is the total torsion.  Similarly, the natural frame
of the curve is not periodic, but satisfies $\Ua(x+L) = e^{\ri a} \Ua(x)$, where $\Ua=\Ua_1 + \ri \Ua_2$.
Similarly, if $q(x,t)$ is a periodic $\NLS_{n+1}$ potential, neither $\gamma$
given by the original Sym formula (with $\lambda=0$) nor the natural frame constructed above are necessarily periodic.

However, it may be possible to choose a nonzero real value {$\lambda_0$} at which the Sym formula yields
a closed curve and such that the moving frame defined by \eqref{jangle} is periodic.  (Conditions specific to the case where $q$ is a periodic finite-gap solutions of the NLS hierarchy are given in Prop.~\ref{closureprop} below.)  In that case,
the curve evolves by the following modified VFE (cf. Corollary \ref{modCor} in Appendix \ref{apgauge}):
\begin{equation}\label{modVFEn}
\gamma_t = \sum_{k=0}^n (-\lambda_0)^k \binom{n+1}{k} \W_{n-k}.
\end{equation}

\section{Linear stability}
\label{S6}
In this section we will relate solutions of the linearized $m$th order NLS equation to solutions of
the linearization of the modified $\VFE_{m-1}$, where the modification is as in \eqref{modVFEn}.
Our aim is to extend an earlier result to higher-order flows in the VFE hierarchy, which we first review.


\subsection{Linearized NLS and VFE}\label{base}
As explained in Section \ref{S5}, evaluating the Sym formula \eqref{Symf} at $\lambda=\lambda_0$ and applying the map $\sj$
defined in \eqref{sj} results in a
an evolving curve in $\bbR^3$ that satisfies the modified VFE
\begin{equation}
\label{MVFE}
\gamma_t = \W_1 -2 \lambda_0 \W_0.
\end{equation}
Using $W_1[\gamma] = \gamma_x \times \gamma_{xx}$ and $W_0 = -\gamma_x$, we obtain the linearization
\begin{equation}\label{MLVFE}
\gamma_{1t}=\gamma_{1x} \times \gamma_{xx}+\gamma_{x} \times \gamma_{xx}+2\lambda_0 \gamma_{x}.
\end{equation}
Here, $\gamma_1$ is a vector field along $\gamma$.  Any such vector field can be expanded in terms of the twisted natural frame obtained as in
\eqref{jangle} from the fundamental matrix used in the Sym formula:
\begin{equation}
\label{ansatzm}
\gamma_1=a\,{\bf{T}}+b\,{{\Um}}_1+c\,{{\Um}}_2.
\end{equation}
When we insert the expression \eqref{ansatzm} into the linearized version \eqref{MLVFE} of the modified VFE, a lengthy but straightforward computation yields the following result:
\begin{Thm}[\!\!\cite{Ca11}]
\label{T2m}
Let $\gamma_1$ be a LAP vector field along a solution of VFE \eqref{MVFE}, related to NLS solution $q$ by the general Sym formula \eqref{Symf}.
Then $\gamma_1$ satisfies linearized modified VFE \eqref{MLVFE} if and only if the quantity $v:= b+\ri c$  satisfies the linearization of the NLS about $q$.
\end{Thm}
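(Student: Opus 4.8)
The plan is to establish this equivalence by direct substitution, in close parallel with the (unmodified) computation behind Theorem \ref{T1}; the only genuinely new features are that the twisted natural frame carries an extra $\lambda_0$-term in its structure equations and that the linearized flow \eqref{MLVFE} carries the extra term $2\lambda_0\gamma_{1x}$. First I would pass to the complex combinations $\Um := \Um_1 + \ri\Um_2$ and $v := b + \ri c$, and record the spatial structure equations of the twisted frame from Section \ref{S5} in the form
\[
\T_x = \tfrac12\bigl(\qbar\,\Um + q\,\overline{\Um}\bigr), \qquad \Um_x = -q\,\T - \ri\lambda_0\,\Um,
\]
which differ from \eqref{Up} only through the term $-\ri\lambda_0\Um$ (here $\Um\cdot\Um=0$ and $\Um\cdot\overline{\Um}=2$, as for $\Ua$). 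I would then obtain the temporal structure equations $\T_t$ and $\Um_t$ from the temporal AKNS matrix in \eqref{AKNS} evaluated at $\lambda=\lambda_0$, via $\langle\me_i\rangle_t = \langle[\me_i,V]\rangle$ with the $\me_i$ and $\langle\,\cdot\,\rangle$ as in \eqref{es}--\eqref{jangle}; equivalently, one may differentiate $\T=\gamma_x$ using \eqref{MVFE} and close the system using the compatibility relations $\T_{xt}=\T_{tx}$ and $\Um_{xt}=\Um_{tx}$. Either route produces temporal frame equations with explicit $\lambda_0$-dependence.

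Next I would impose the LAP condition on $\gamma_1 = a\T + b\Um_1 + c\Um_2$. Computing the $\T$-component of $\gamma_{1x}$ with the structure equations above shows that $\gamma_x\cdot\gamma_{1x}=0$ is equivalent to $a_x = \kappa_1 b + \kappa_2 c = \Re(\qbar\,v)$, exactly as in the untwisted case; thus the tangential coefficient $a$ is tied to $v$ and enters the subsequent expressions only through this combination.

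For the forward implication I would substitute \eqref{ansatzm} into \eqref{MLVFE}, expand $\gamma_{1x}$, $\gamma_{1xx}$ and $\gamma_{1t}$ in the frame, and project the resulting vector identity onto the normal plane by dotting with $\overline{\Um}$. The expectation is that the terms involving $a$ collapse by means of the LAP relation, that every explicitly $\lambda_0$-dependent contribution (those from the $-\ri\lambda_0\Um$ terms in the frame and from $2\lambda_0\gamma_{1x}$) cancels, and that what remains is precisely the linearization of \eqref{NLS} about $q$, namely
\[
v_t = \ri v_{xx} + \ri|q|^2 v + \tfrac{\ri}{2}\,q^2\,\overline{v}.
\]
For the reverse implication I would start from a solution $v$ of this linearized NLS, fix $a$ by the LAP relation $a_x = \Re(\qbar\,v)$, and verify that the $\T$-component of \eqref{MLVFE} is then satisfied as well, which pins down the remaining freedom in $a$ (a function of $t$) and shows the LAP property is propagated; this upgrades the one-directional statement of Theorem \ref{T1} to the equivalence claimed here.

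The step I expect to be the main obstacle is the bookkeeping needed to confirm the total cancellation of the $\lambda_0$-dependent terms: they enter through three separate channels---the twisted spatial frame, the additional $2\lambda_0\gamma_{1x}$ in \eqref{MLVFE}, and the $\lambda_0$-dependent temporal frame equations---and the normal-plane projection must reduce to the $\lambda_0$-free linearized NLS. (A useful consistency check is that the general Sym formula at $\lambda_0$ produces a curve whose genuine natural-frame potential is the gauge-rotated $e^{\ri\lambda_0 x}q$, while $q$ itself is the twisted-frame potential, so the surviving equation for $v$ should indeed be insensitive to $\lambda_0$.) A secondary delicate point is the reverse direction: since a LAP field determines $a$ from $v$ only up to a function of $t$, one must confirm that the $\T$-component of \eqref{MLVFE} is an identity compatible with the LAP relation rather than an independent constraint, so that the correspondence is a true equivalence.
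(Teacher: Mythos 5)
Your proposal is correct and follows essentially the same route as the paper: the paper proves Theorem \ref{T2m} (citing \cite{Ca11}) by inserting the twisted-frame expansion \eqref{ansatzm} directly into the linearized modified VFE \eqref{MLVFE} and carrying out the "lengthy but straightforward" frame computation, which is exactly the substitution-and-projection argument you outline, including the use of the $\lambda_0$-twisted structure equations and the LAP relation $a_x=\kappa_1 b+\kappa_2 c$. Your added attention to the cancellation of the $\lambda_0$-terms and to the converse direction is consistent with, not divergent from, that computation.
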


The importance of this theorem lies in the fact that potentially relates the {\em periodic} solutions of the linearized modified VFE (mVFE) and linearized NLS. Indeed, assuming we have an $L$-periodic solution of the mVFE \eqref{MVFE2} obtained through the Sym transformation \eqref{Symf}, with periodic twisted natural frame given by \eqref{jangle}, then a periodic solution of the linearized mVFE gives a periodic solution of linearized NLS; the converse is not automatic because the coefficient $a$ in \eqref{ansatzm} might not be periodic.  But, 
as we will see in Prop.~\ref{prop} and Theorem \ref{fund} below, this periodicity is automatic when we use solutions of the linearized NLS that are obtained through squared eigenfunctions.  Thus, if the NLS solution is linearly unstable
with respect to an $L$-periodic perturbation obtained in this way---i.e., if the solution to the linearized
equation has unbounded growth in time---then the corresponding solution of the mVFE  itself is linearly unstable.

\subsection{Extension to the VFE hierarchy}
We now will extend the Theorem \ref{T2m} to the rest of the hierarchy.  In other words, we now have a solution $q$ of $\NLS_{n+1}$
\eqref{NLSm1} and the solution
of the corresponding modified VFE flow \eqref{modVFEn} obtained by applying the Sym transformation \eqref{Symf} evaluated at $\lambda=\lambda_0 \in\bbR$.
As in Section \ref{base}, we have a $\lambda_0$-twisted natural frame $(\T, \Um_1, \Um_2)$ with curvatures $\kappa_1, \kappa_2$
 such that $q=\kappa_1+ \ri \kappa_2$.  We now prove the following theorem:
\begin{Thm}
\label{T3}
We assume that Conjecture \ref{conj} holds.  Let $\gamma_1$ be a LAP vector field along a solution of modified $\VFE_n$ \eqref{modVFEn}, related to $\NLS_{n+1}$ solution $q$ by the general Sym formula \eqref{Symf}.  We expand $\gamma_1$ in terms of
the twisted natural frame as
\begin{equation}
\label{ansatzm3}
{\gamma}_1=a\,{\bf{T}}+{b}\,\widetilde{\bf{U}}_1+{c}\,\widetilde{\bf{U}}_2.
\end{equation}
Then $\gamma_1$ is a solution of the linearization of modified $\VFE_n$ \eqref{modVFEn} if and only if
the quantity $v:= b+\ri c$  satisfies the linearization of the $\NLS_{n+1}$ about $q$.
\end{Thm}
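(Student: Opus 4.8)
The plan is to reduce the twisted, modified setting of the theorem to the untwisted natural-frame setting already governed by Conjecture \ref{conj} and its Corollary \ref{cor}, using the gauge (``untwisting'') transformation of the moving frame. First I would pass to the complex twisted frame vector $\Um := \Um_1 + \ri \Um_2$; the structure equations of the $\lambda_0$-twisted frame recorded in Section \ref{S5} give $\Um_x = -q\,\T - \ri \lambda_0 \Um$. Setting $\Ua := e^{\ri \lambda_0 x}\Um$ cancels the extra term and yields $\Ua_x = -\tilde q\,\T$ with $\tilde q := e^{\ri \lambda_0 x} q$, while a one-line check against $\T_x = \realpart(\qbar\,\Um)$ shows that $(\T, \Ua_1, \Ua_2)$, with $\Ua = \Ua_1 + \ri \Ua_2$, is an honest natural frame along the same curve $\gamma$, whose natural curvatures $\tilde\kappa_1, \tilde\kappa_2$ are the real and imaginary parts of $\tilde q$. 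Since $\gamma_1$ is a fixed geometric object, rewriting \eqref{ansatzm3} in this frame gives $\gamma_1 = a\,\T + \tilde b\,\Ua_1 + \tilde c\,\Ua_2$ with $\tilde b + \ri \tilde c = \tilde v := e^{\ri \lambda_0 x} v$, whereas the evolution $\gamma_t$ given by \eqref{modVFEn} and the LAP condition are unchanged.

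Next I would identify the flows in the untwisted picture. Because the $\W_j$ are intrinsic, the right-hand side of \eqref{modVFEn} is literally the linear combination $\sum_{j=0}^{n} (-\lambda_0)^{n-j}\binom{n+1}{j+1}\,\W_j$ of genuine VFE-hierarchy flows, and, the curve-to-curvature map \eqref{wave} being linear in the flow, its natural curvature $\tilde q$ satisfies the matching combination $\sum_{j=0}^{n} (-\lambda_0)^{n-j}\binom{n+1}{j+1}\,\NLS_{j+1}$. This is exactly the gauge-transformation content behind Corollary \ref{modCor}: the spectral shift $\lambda \mapsto \lambda - \lambda_0$ induced by the diagonal gauge factor carries the degree-$(n+1)$ temporal AKNS matrix of $\NLS_{n+1}$ into that binomial combination, so the gauge map $q \mapsto \tilde q$ intertwines $\NLS_{n+1}$ with the combined flow (a time-dependent phase, if needed, is absorbed into the gauge factor and does not affect the spatial frame relation). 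The key consequence I would extract is that this gauge map is invertible, so at the NLS level the correspondence $v \leftrightarrow \tilde v = e^{\ri \lambda_0 x} v$ between linearized solutions is an equivalence: $v$ solves the linearization of $\NLS_{n+1}$ about $q$ if and only if $\tilde v$ solves the linearization of the combined flow about $\tilde q$.

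With these identifications the forward implication is immediate. If $\gamma_1$ solves the linearization of modified $\VFE_n$, then in the untwisted frame it is a LAP solution of the linearization of $\sum_j c_j \VFE_j$, so Corollary \ref{cor} gives that $\tilde v$ solves the linearization of $\sum_j c_j \NLS_{j+1}$ about $\tilde q$, and the gauge equivalence of the previous paragraph then returns that $v$ solves the linearization of $\NLS_{n+1}$ about $q$. For the converse I would run the construction backwards: starting from a solution $v$ of linearized $\NLS_{n+1}$, set $\tilde v = e^{\ri \lambda_0 x} v$, take $\tilde b, \tilde c$ to be its real and imaginary parts, determine the tangential coefficient $a$ from the LAP condition $a_x = \tilde\kappa_1 \tilde b + \tilde\kappa_2 \tilde c$, and assemble $\gamma_1$. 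Writing $E := \gamma_{1t} - \tfrac{\delta(\sum_j c_j \W_j)}{\delta\gamma}\gamma_1$ for the defect, the identity of Theorem \ref{T2} (namely that \eqref{cond}, equivalently \eqref{cond2}, holds for every LAP field once Conjecture \ref{conj} is granted) shows that $\Ua\cdot E$ equals precisely the linearized-NLS defect of $\tilde v$; hence $\Ua\cdot E = 0$, i.e.\ the two transverse ($\Ua_1,\Ua_2$) components of $E$ vanish.

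The step I expect to be the main obstacle is the remaining tangential ($\T$) component of $E$ in this converse. The frame projection $\gamma_1 \mapsto \tilde v$ sees only the transverse part of the defect, and Corollary \ref{cor} supplies only the $\VFE \to \NLS$ direction, so to upgrade the conclusion to $E = 0$ I must still show that the scalar $a$ can be taken consistent in $t$ with the tangential part of the linearized flow. Concretely, this amounts to checking the compatibility $\partial_t a_x = \partial_x a_t$, where $a_t$ is read off from the $\T$-component of the linearized equation, so that $\partial_x(\T\cdot E) = 0$ and $\T\cdot E$ can be absorbed into the residual freedom in $a$; I expect this closure to follow from $\tilde v$ solving the combined linearized NLS together with the LAP relation, exactly as in the base case of Theorem \ref{T2m}, but it is the one place where a direct computation, rather than an appeal to the results already in hand, appears to be unavoidable.
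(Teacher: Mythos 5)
Your proposal follows essentially the same route as the paper's proof: untwist via the gauge transformation $\qhat = e^{\ri(\lambda_0 x + (-\lambda_0)^{n+1}t)}q$ of Prop.~\ref{Vgaugeprop}, note that the Sym curve at $\lambda=\lambda_0$ coincides with the Sym curve at $\hlambda=0$ for the potential $\qhat$ (which evolves by the binomial combination \eqref{qhatflow}), apply Conjecture \ref{conj} and Corollary \ref{cor} in the resulting untwisted natural frame, and translate back through the phase factor relating $b+\ri c$ to $\hat b+\ri\hat c$. The only divergence is that you explicitly isolate the tangential component of the defect in the converse direction as a remaining check, a point the paper passes over by invoking the conjecture as an equivalence; this extra caution is reasonable but does not alter the structure of the argument.
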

\begin{proof}
{We use Prop.~\ref{Vgaugeprop} from Section \ref{apgauge}, with $m=n+1$, which states} that if $\vpsi$ satisfies {$\AKNS_{n+1}$} as written in \eqref{AKNSm}, then
\eqnn{\hatpsi = \exp( -(\alpha x+(-\alpha)^{n+1}t) \me_0) \vpsi,}
satisfies a modification of \eqref{AKNSm} with $V_{n+1}$ being replaced by a certain linear combination of $V_1,\;V_2,\dots,V_{n+1}$, $\lambda$ being replaced by $\hlambda = \lambda-\alpha$, and the potential $q$ being replaced by
\eq{\qhat := e^{\ri(\alpha x+(-\alpha)^{n+1} t)} q.}{hq}
The evolution equation satisfied by $\qhat$ is (cf.~Prop.~\ref{Vgaugeprop})
\begin{equation}\label{qhatflow}
\qhat_t = \sum_{k=0}^{n} (-\alpha)^k \binom{n+1}k F_{n+1-k}[\qhat],
\end{equation}
{where the $F_{j}[\qhat]$ denote the right-hand sides of the $\NLS_j$ equation, as in \eqref{NLSm1}}.
We choose {$\alpha=\lambda_0$}, and as in the proof of Corollary \ref{modCor}, the curve $\Gamma$ in \eqref{Symf}
is now given by
$$\Gamma = \left.\bPsi^{-1} \dfrac{\partial \bPsi}{\partial \lambda} \right\vert_{\lambda = \lambda_0}
    = \left.\widehat\bPsi^{-1} \dfrac{\partial \widehat\bPsi}{\partial \widehat\lambda} \right\vert_{\widehat\lambda = 0}.
$$
In other words, the same evolving curve arises from the Sym transformation {\em evaluated at $\hlambda=0$} with potential $\qhat$.
{This in turn}
enables us to apply Conjecture \ref{conj} and Corollary \ref{cor}.
{Thus,}  if we expand $\gamma_1$ in terms of an (untwisted) natural frame
\begin{equation}
\nonumber
{\gamma}_1=\hat{a}\,{\bf{T}}+\hat{b}\,{\bf{U}}_1+\hat{c}\,{\bf{U}}_2,  
\end{equation}
then $\hat{b}+\ri\hat{c}$ satisfies the linearization of \eqref{qhatflow}
if and only if $\gamma_1$ satisfies the linearization of modified VFE \eqref{modVFEn}.
The {untwisted} natural frame in question is given by
 \eqnn{
 \T=\widehat{\bPsi}^{-1} \me_0 \widehat{\bPsi}\vert_{\hlambda=0},\;\; \Ua_1=\widehat{\bPsi}^{-1} \me_1 \widehat{\bPsi}\vert_{\hlambda=0},\;\; \Ua_2=\widehat{\bPsi}^{-1} \me_2 \widehat{\bPsi}\vert_{\hlambda=0}.
 }
However, if we expand $\gamma_1$ in terms of the twisted natural frame, as in \eqref{ansatzm3}, then
$b+\ri c = e^{-\ri(\alpha x + (-\alpha)^{n+1} t)} (\hat{b} + \ri \hat{c})$.  Since the change of variables \eqref{hq} takes
solutions of $\NLS_{n+1}$ \eqref{NLSm1} to solutions of the modified $\NLS_{n+1}$ \eqref{qhatflow}, it also takes solutions of the linearization
at $q$ of the first equation to solutions of the linearization of modified $\NLS_{n+1}$ at $\qhat$.  Thus, the expression in \eqref{ansatzm3}
satisfies the {linearized} modified $\VFE_{n}$ if and only if $b+\ri c$ satisfies the linearization of $\NLS_{n+1}$.
\end{proof}

For example, let us take the complex mKdV (CmKdV), which corresponds to the third member of the hierarchy after the NLS
\eq{
q_t=q_{xxx}+\tfrac{3}{2}|q|^2q_x.
}{CmKdV}
Following the construction made in Section \ref{hie}, the coefficient
matrix giving the time dependency in the AKNS system for CmKdV is
\eqnn{
V_3&=\lambda^3P_0-\lambda^2 P_1+\lambda P_2-P_3\\
&= \dfrac{1}{2}\begin{pmatrix} \ri\lambda^3-\frac{\ri}{2}\lambda |q|^2+\frac{1}{2}({\qbar}q_x-q\qbar_x) & -\ri\lambda^2 q+\lambda q_x+\ri(q_{xx}+\frac{1}{2}q|q|^2)\\
-\ri\lambda^2 \qbar-\lambda \qbar_x+\ri(\qbar_{xx}+\frac{1}{2}\qbar|q|^2) &  -\ri\lambda^3+\frac{\ri}{2}\lambda |q|^2-\frac{1}{2}({\qbar}q_x-q\qbar_x) \end{pmatrix}.
}
It can then be computed that the solution obtained from the Sym transformation (\ref{Symf}) evaluated at $\lambda=\lambda_0$ satisfies the equation
\eq{
\gamma_t=\W_2-2\lambda_0\W_1+3\lambda_0^2\W_0,
}{HiFi}
where the $\W$'s are given in (\ref{hiee}). Thus, Theorem \ref{T3} implies that if $b+\ri c$ satisfies the linearized version of the CmKdV, then the LAP vector field given in (\ref{ansatzm3}) satisfies the linearized version of the equation above.

\section{Squared eigenfunctions}
\label{S7}

In this section, we explain the method of squared eigenfunctions, which is used to solve the linearization of  $\NLS_{n+1}$ \eqref{NLSm1},
We first state the result for the NLS and then explains how this generalizes to higher-order members of the hierarchy.
{Then we use this, in conjunction with Theorem \ref{T3}, to give a criterion for linear instability of solutions of modified
$\VFE_n$ \eqref{modVFEn} that arise from finite-gap solutions of $\NLS_{n+1}$.}

For a given NLS potential $q$, we write the linearization of the NLS system  \eqref{NLS} by {substituting} $q\rightarrow q+g$ and retaining terms up to first order in $g$:
\eq{
 g_{t}=\ri g_{xx}+{\tfrac12 \ri q^{2}\overline{g}+\ri |q|^2 g}.
 }{LNLS}

It is well-known \cite{FoLe86,MOv} that solutions of the (\ref{LNLS}) can be constructed from certain quadratic combinations of the solutions of the AKNS system. More precisely, we have the following proposition,
\begin{Pro}
\label{prop}
Given $\vphi$ and $\vpsi$  (not necessarily distinct) solutions  of the AKNS system \eqref{AKNS} at $({q}, \lambda)$,
the following {pairs $(a,g)$ of scalar functions satisfy the system of equations consisting of the linearized NLS \eqref{LNLS} and the equation 
\begin{equation}a_x = \Re( g \overline{q})\label{quasialp}.\end{equation}
\begin{enumerate}
\item[(a)] $a = \Im(\phi_1 \psi_2+ \phi_2\psi_1)$, $g= \phi_1{\psi}_1 +\bar{\phi}_2\bar{\psi}_2$;
\item[(b)] $a= \Re(\phi_1 \psi_2 + \phi_2\psi_1)$, $g=\ri( \phi_1{\psi}_1 -\bar{\phi}_2\bar{\psi}_2)$.
\end{enumerate}
}
\end{Pro}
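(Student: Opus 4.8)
The plan is to verify both assertions by direct substitution, organizing the whole computation around three scalar bilinear combinations of the two eigenfunctions so that the cancellations become transparent. Writing the spatial part of \eqref{AKNS} componentwise gives $\psi_{1x} = \tfrac{\ri}{2}(-\la\psi_1 + q\psi_2)$ and $\psi_{2x} = \tfrac{\ri}{2}(\qbar\psi_1 + \la\psi_2)$, and identically for $\vphi$. I would introduce
$$P := \phi_1\psi_1,\qquad Q := \phi_2\psi_2,\qquad R := \phi_1\psi_2 + \phi_2\psi_1,$$
and check that they obey the closed first-order system
$$P_x = -\ri\la P + \tfrac{\ri}{2}qR,\qquad Q_x = \ri\la Q + \tfrac{\ri}{2}\qbar R,\qquad R_x = \ri(\qbar P + qQ).$$
The key structural observation is that, in \emph{both} cases, $g$ has the unified form $g = \al P + \overline{\al}\,\overline{Q}$ (with $\al = 1$ in (a) and $\al = \ri$ in (b)), while $a = \Im R$ in (a) and $a = \Re R$ in (b). I would prove the general statement that $g = \al P + \overline{\al}\,\overline{Q}$ solves \eqref{LNLS} for every constant $\al\in\C$, and recover the two cases at the end.

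The quasiperiodicity equation \eqref{quasialp} follows at once from the $R$-equation. Since $R_x = \ri(\qbar P + qQ)$, one has $R_x - \overline{R_x} = \ri\big(\qbar(P+\overline{Q}) + q(\overline{P}+Q)\big)$ and $R_x + \overline{R_x} = \ri\big(\qbar(P-\overline{Q}) - q(\overline{P}-Q)\big)$. Recognizing $g$ and $\overline{g}$ inside these expressions, the first collapses $a_x = \Im(R_x)$ in case (a) to $\Re(g\qbar)$, and the second collapses $a_x = \Re(R_x)$ in case (b) to the same, using $g = \ri(P-\overline{Q})$ there.

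For the linearized NLS \eqref{LNLS} I would differentiate the first-order system once more to express $P_{xx}$ and $Q_{xx}$ in terms of $P,Q,R$, and separately compute $P_t$ and $Q_t$ from the temporal part of \eqref{AKNS}. The payoff is that $P_t$ reproduces exactly the $P$- and $R$-type terms of $\ri g_{xx} + \tfrac12\ri q^2\overline{g} + \ri|q|^2 g$, and $\overline{Q_t}$ the $\overline{Q}$- and $\overline{R}$-type terms, with the $\la^2$ and $\overline{\la}^2$ contributions remaining attached consistently to $P$ and $\overline{Q}$ respectively. I expect the only delicate point to be the cross terms: substituting $g = \al P + \be\overline{Q}$ into the right-hand side leaves residual terms proportional to $q^2 Q$ and $q^2\overline{P}$ with coefficients $\tfrac{\ri}{2}(\overline{\be}-\al)$ and $\tfrac{\ri}{2}(\overline{\al}-\be)$, and these vanish precisely when $\be = \overline{\al}$ — exactly the constraint realized in both (a) and (b). The whole argument is thus pure bookkeeping once the auxiliary system for $(P,Q,R)$ is in hand; I would also record that, because no cancellation ever forced $\overline{\la} = \la$, the identity holds for arbitrary $\la\in\C$, so no reality assumption on $\la$ is needed.
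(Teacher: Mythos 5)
Your argument is correct. I checked that $P=\phi_1\psi_1$, $Q=\phi_2\psi_2$, $R=\phi_1\psi_2+\phi_2\psi_1$ do satisfy the closed system you write down, that substituting $g=\alpha P+\beta\overline{Q}$ into \eqref{LNLS} leaves exactly the two residual cross terms $\tfrac{\ri}{2}(\overline{\beta}-\alpha)q^2Q$ and $\tfrac{\ri}{2}(\overline{\alpha}-\beta)q^2\overline{P}$, which vanish precisely when $\beta=\overline{\alpha}$, and that \eqref{quasialp} follows in both cases from $R_x=\ri(\qbar P+qQ)$ exactly as you describe. For comparison: the paper does not actually prove Prop.~\ref{prop}; it is quoted as well known from \cite{FoLe86,MOv} (and \cite{Kaup76a} in the whole-line setting), where such identities are typically verified by differentiating each quadratic combination case by case. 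Your organization buys three things over that route: the closed $(P,Q,R)$ system concentrates all the structure into three first-order equations; the one-parameter family $g=\alpha P+\overline{\alpha}\,\overline{Q}$ delivers cases (a) and (b) simultaneously (as $\alpha=1$ and $\alpha=\ri$) and shows that any constant $\alpha$ works, so (a) and (b) are just a real basis for a complex line of solutions; and your closing remark that no step forces $\lambda=\overline{\lambda}$ is not a throwaway --- it is needed later, since Theorem \ref{fund} applies the proposition at complex points of the Floquet spectrum. One cosmetic caution: your symbol $Q$ collides with the off-diagonal matrix $Q$ of \eqref{defofU} in Appendix \ref{A}, so rename it if this verification is incorporated into the text.
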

{The equation \eqref{quasialp} will be used below in constructing LAP vector fields.}

We observe that, while for the soliton case standard inverse scattering techniques can be used to prove that the family of squared eigenfunctions form a complete basis of solutions of the linearized NLS equation in the space $L^2(\mathbb{R})$ for any fixed value of $t$ \cite{Kaup76a,Kaup76}, for periodic boundary conditions the issue of completeness of the squared eigenfunction family is not entirely resolved due to the non-self-adjointness of the spectral problem. Road maps for  completeness proofs for  the periodic sine-Gordon equation (whose spectral problem is also non self-adjoint) are given in \cite{ErFoMc87,ErFoMc90,Krp}. Since we do not have a completeness result in the periodic case, our method applied to closed VFE solutions will yield instability results in the cases where we have periodic solution of the linearized VFE that grow in time, but we will not be able to establish linear stability.

It is believed that Prop.~\ref{prop} generalizes to the whole NLS hierarchy. While there is no proof of this fact,  Prop.~\ref{prop} does extend for a class of solutions that includes the multi-solitons \cite{Yang00,Yang03,Yang2010}. For the example of the Hirota considered in Section \ref{Hirotasec} below, we verify explicitly that  Prop.~\ref{prop} holds.

{Suppose that $q$ is an $L$-periodic solution to $\NLS_{n+1}$, and $\gamma$ is a $L$-periodic (closed) solution to modified $\VFE_n$ \eqref{modVFEn},
associated to $q$ via the Sym transformation \eqref{Symf}.  If eigenfunctions of $q$ can be used to produce, via the quadratic
expressions for $g$ given in Prop.~\ref{prop}, a solution to linearized $\NLS_{n+1}$ that is $L$-periodic then Theorem \ref{T3} lets us construct an $L$-periodic solution $\gamma_1$ to linearized modified $\VFE_n$.
(When $a$ is given by the corresponding formula in Prop.~\ref{prop}, then \eqref{quasialp} ensures that $\gamma_1$ is LAP.)
Moreover, if the eigenfunctions have exponential growth in time, then we conclude that $\gamma$ is linearly unstable.
If $q$ is a finite-gap solution, the solutions of its AKNS system are given
by the Baker eigenfunctions \eqref{Baker}, and the quadratic expressions have period $L$ if $\lambda$ is a periodic or antiperiodic point
of the Floquet spectrum of $q$ (see Remark \ref{quasiclosure} for definitions).  Moreover, the eigenfunction formulas \eqref{Baker} show that these have exponential growth if and only if the Abelian integral $\Omega_{n+1}$ has nonzero imaginary part.}

{In fact, more is true. As discussed at the end of Section \ref{S5}, closed curves in general correspond under the Hasimoto transformation to potentials $q$ that are only quasiperiodic, since the normal vector $\bf{U} = \bf{U}_1 + \ri \bf{U}_2$ is itself only quasiperiodic.
Conversely, by Remark \ref{quasiclosure} an $L$-quasiperiodic finite-gap solution to $\NLS_{n+1}$ gives rise to a closed filament $\gamma$ when
the Sym formula \eqref{Symf} is evaluated at a point $\lambda_0$ satisfying the closure conditions of Prop.~\ref{prop2}.
Using eigenfunctions at a point $\lambda$ belonging to the periodic spectrum of $q$ (again, see Remark \ref{quasiclosure} for
what this means in the quasiperiodic case) yields a solution $g$ of linearized $\NLS_{n+1}$ that has the same quasiperiodic behavior as $q$ given by \eqref{finitegapq}, i.e.,
\begin{equation}\label{quasiv}
g(x+L,t) = e^{-iEL} g(x,t)
\end{equation}
(compare with equation \eqref{quasiq}).  On the other hand, a fundamental matrix $\Psi$ comprised of Baker eigenfunctions evaluated at $\lambda_0$ satisfies
$\bPsi(x+L,t) = \exp( EL e_0) \bPsi(x,t)$ (see \eqref{Baker}), so the twisted natural frame defined by \eqref{jangle} satisfies
$\widetilde{\bf{U}}(x+L,t) = e^{-EL} \widetilde{\bf{U}}(x,t)$.  It then follows easily from \eqref{quasiv} and \eqref{quasialp} that $\gamma_1$, as defined by \eqref{ansatzm3}
with $b+\ri c = g$ and $a$ given by the appropriate formulas in Prop.~\ref{prop}, is periodic and LAP.  Again, if $g$ has exponential growth in time, then $\gamma$ is linearly unstable.}

{We summarize these discussions in the following theorem:}

\begin{Thm}
\label{fund}
{Let $\gamma$ be an $L$-periodic solution $\gamma$ of modified $\VFE_n$ \eqref{filn} corresponding to a {quasiperiodic finite-gap} solution $q$ of $\NLS_{n+1}$ \eqref{NLSm1} through the Sym {formula} \eqref{Symf}.}  If the Floquet spectrum of $q$ includes an $L$-periodic or $L$-antiperiodic
{point} at which $\Im{\Omega_{n+1}(P)}\neq 0$, then $\gamma$ is linearly unstable.
\end{Thm}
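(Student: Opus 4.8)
The plan is to assemble the machinery of the previous sections into a single chain: Theorem~\ref{T3} reduces the linear stability of $\gamma$ to that of the $\NLS_{n+1}$ potential $q$, and the squared-eigenfunction construction of Prop.~\ref{prop} then supplies an explicit $L$-periodic perturbation that grows in time. First I would invoke Theorem~\ref{T3} (which in turn rests on Conjecture~\ref{conj}): since $\gamma$ arises from $q$ through the Sym formula \eqref{Symf} at $\lambda=\lambda_0$, a LAP field $\gamma_1 = a\,\T + b\,\Um_1 + c\,\Um_2$ written in the twisted natural frame \eqref{jangle} solves the linearization of modified $\VFE_n$ \eqref{modVFEn} if and only if $v := b+\ri c$ solves the linearization of $\NLS_{n+1}$. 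Hence it suffices to exhibit an $L$-periodic, LAP field $\gamma_1$ whose frame coefficients encode a solution $v$ of linearized $\NLS_{n+1}$ that is unbounded in $t$; the growth of $\gamma_1$ is then inherited from that of $v$.

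Next I would produce $v$ from eigenfunctions. Taking $\vphi = \vpsi$ to be a Baker eigenfunction of $q$ at the chosen spectral point $P$, the $\NLS_{n+1}$ version of Prop.~\ref{prop} returns a pair $(a,g)$ in which $g$ solves linearized $\NLS_{n+1}$ and $a$ satisfies \eqref{quasialp}, both given as explicit quadratics in the eigenfunction components. Setting $v = g = b + \ri c$ and using this same $a$ in the ansatz \eqref{ansatzm3} makes $\gamma_1$ automatically LAP, since \eqref{quasialp}, namely $a_x = \Re(g\qbar)$, is precisely the LAP condition $a_x = \kappa_1 b + \kappa_2 c$. The time behaviour of $\gamma_1$ is governed by the exponential $\exp(\Omega_{n+1}(P)\,t)$ carried by the Baker eigenfunction \eqref{Baker}; because $g$ is quadratic in the eigenfunction, its modulus grows at the rate set by $\Im\Omega_{n+1}(P)$, which is unbounded exactly when $\Im\Omega_{n+1}(P)\neq 0$. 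This is the origin of the asserted instability.

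The substantive step, and the one I expect to be most delicate, is checking that $\gamma_1$ is genuinely $L$-periodic even though $q$, the twisted frame $\Um = \Um_1 + \ri\Um_2$, and the eigenfunction quadratics are each only quasiperiodic. Here the hypothesis that $P$ is an $L$-periodic or $L$-antiperiodic point is decisive: it ensures that the squared eigenfunction $g$ acquires exactly the quasiperiodic factor of the potential, $g(x+L,t)=e^{-\ri EL}g(x,t)$ as in \eqref{quasiv}, while the fundamental-matrix relation \eqref{Baker} gives the \emph{same} factor for the frame, $\Um(x+L,t)=e^{-\ri EL}\Um(x,t)$. Since $q$ is the potential of a closed curve its modulus $\kappa=|q|$ is periodic, so this factor has unit modulus ($E$ real). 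Writing the normal part of $\gamma_1$ as $b\,\Um_1 + c\,\Um_2 = \tfrac12(\bar g\,\Um + g\,\overline{\Um})$, the two Floquet factors combine as $e^{-\ri EL}\,\overline{e^{-\ri EL}}=1$, so this vector is $L$-periodic; the quadratic $a$ from Prop.~\ref{prop} is likewise $L$-periodic because its two eigenfunction factors carry reciprocal multipliers (the AKNS coefficient matrix being trace-free), leaving no secular drift. Thus $\gamma_1$ closes up, and Theorem~\ref{T3} transfers the exponential growth of $v$ to $\gamma_1$, yielding linear instability of $\gamma$.

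The remaining work is bookkeeping: verifying the quasiperiodic transformation laws of the Baker eigenfunctions and their quadratics in the genuinely quasiperiodic regime of Remark~\ref{quasiclosure}, and confirming that the closure conditions at $\lambda_0$ are compatible with the spectral point $P$. Two caveats should be recorded. The argument presupposes Conjecture~\ref{conj}, inherited through Theorem~\ref{T3}; and it uses the extension of Prop.~\ref{prop} to $\NLS_{n+1}$, which we verify case by case (as for the Hirota equation in Section~\ref{Hirotasec}) rather than prove in general. Because completeness of the squared eigenfunctions in the periodic setting is not established, this route yields an instability criterion only, not a stability conclusion.
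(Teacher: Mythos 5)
Your proposal is correct and follows essentially the same route as the paper, which likewise assembles Theorem~\ref{T3}, Prop.~\ref{prop}, and the quasiperiodic transformation laws of the Baker eigenfunctions (with growth rate set by $\Im\Omega_{n+1}(P)$) into the instability criterion, and records the same caveats about Conjecture~\ref{conj} and the hierarchy version of Prop.~\ref{prop}. One small slip worth noting: the $L$-periodicity of $a=\Im(2\phi_1\psi_2)$ with $\vphi=\vpsi$ comes from $e^{2\ri\Omega_1(P)L}=1$ at a periodic or antiperiodic point (the $e^{\mp\ri EL/2}$ factors on the two components cancelling), not from trace-freeness of the AKNS matrix, which relates the multipliers of the two independent solutions rather than the two components of a single Baker vector.
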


\section{Examples}
\label{Hirotasec}

This section is devoted to examples. We will consider solutions of the CmKdV equation \eqref{CmKdV}, which 
is a particular case of the Hirota equation \eqref{hirotaeqn}.
From (\ref{MVFE2}) and (\ref{HiFi}) we find that under the Sym transformation (\ref{Symf}), the CmKdV equation corresponds to 
\eq{
\gamma_t&=\W_2-2\lambda_0\W_1+3\lambda_0^2\W_0\\
&=\gamma_{xxx}+\tfrac{3}{2}|\gamma_{xx}|^2 \gamma_x-2\lambda_0(\gamma_x\times \gamma_{xx})+3\lambda_0^2\gamma_x.
}{modCMK}
where the $\W$'s are given in (\ref{hiee}).

In order to obtain examples, we aim at choosing periodic solutions of the CmKdV whose Floquet spectrum contains a value $\lambda=\lambda_0$ at which the Sym transformation \eqref{Symf} yields a periodic curve, i.e.~a value $\lambda=\lambda_0$ satisfying the closure conditions specified by Prop.~\ref{closureprop} below.  This is done using the construction described in Section \ref{Apfg} by choosing the $2g+2$ branch points $\lambda_i$ ($g$ is the genus of the solution) for which the potential $q$ given in \eqref{finitegapq} is {quasiperiodic} with Floquet spectrum containing an appropriate value $\lambda=\lambda_0$.

\subsubsection*{Genus One Examples}

Finite-gap solutions of the NLS hierarchy generated by genus one Riemann surfaces correspond under the Hasimoto map to initial data that are elastic rod centerlines.  Langer \cite{La99} points out that along such curves all VFE vector fields are {\em slide-Killing}, i.e., they differ from the restriction of a Euclidean Killing
field by a constant multiple of the unit tangent vector.  In particular, the shape of such curves is preserved
exactly under any linear combination of flows in the VFE hierarchy.

Using explicit formulas for the Floquet discriminant available for genus one solutions (see, e.g., \cite[\S3.5]{CI05})
it is relatively easy to determine sets of branch points for which the closure conditions in Prop.~\ref{prop2}
are satisfied.  For example, the branch points $\lambda_1 = -0.674894 +1.23316 \ri$,
$\lambda_2 = 0.674894 + 0.242667 \ri$ (along with their complex conjugates) satisfy these conditions at $\lambda_0 = -.664983$.  Using this
data to produce a genus one solution of the CmKdV equation \eqref{CmKdV}, and using this value in
the Sym formula produces an elastic rod in the form of a nearly planar trefoil knot, whose motion under \eqref{modCMK} gives relatively large weight
to vector field $\W_1$, i.e., equation \eqref{modCMK} with $\lambda_0 = -.664983$ gives
\eq{
\gamma_t=\W_2+1.320 \W_1+1.327 \W_0.
}{mVFE2}
This generates motion along an approximately spiral path that is elongated in the
direction of the axis of the trefoil (see Figure \ref{trefoilone}).  The Floquet spectrum for this potential appears in Figure \ref{trefspectrum},
in which the spines of the continuous spectrum terminate at the branch points, and each of the left-hand
spines contains a complex double point $\lambda_d$.  We compute that $\Omega_3(\lambda_d)$ has a nonzero
imaginary part, thus indicating that this solution of \eqref{modCMK} is linearly unstable.

\begin{figure}[hb]
\centering

\includegraphics[width=5in]{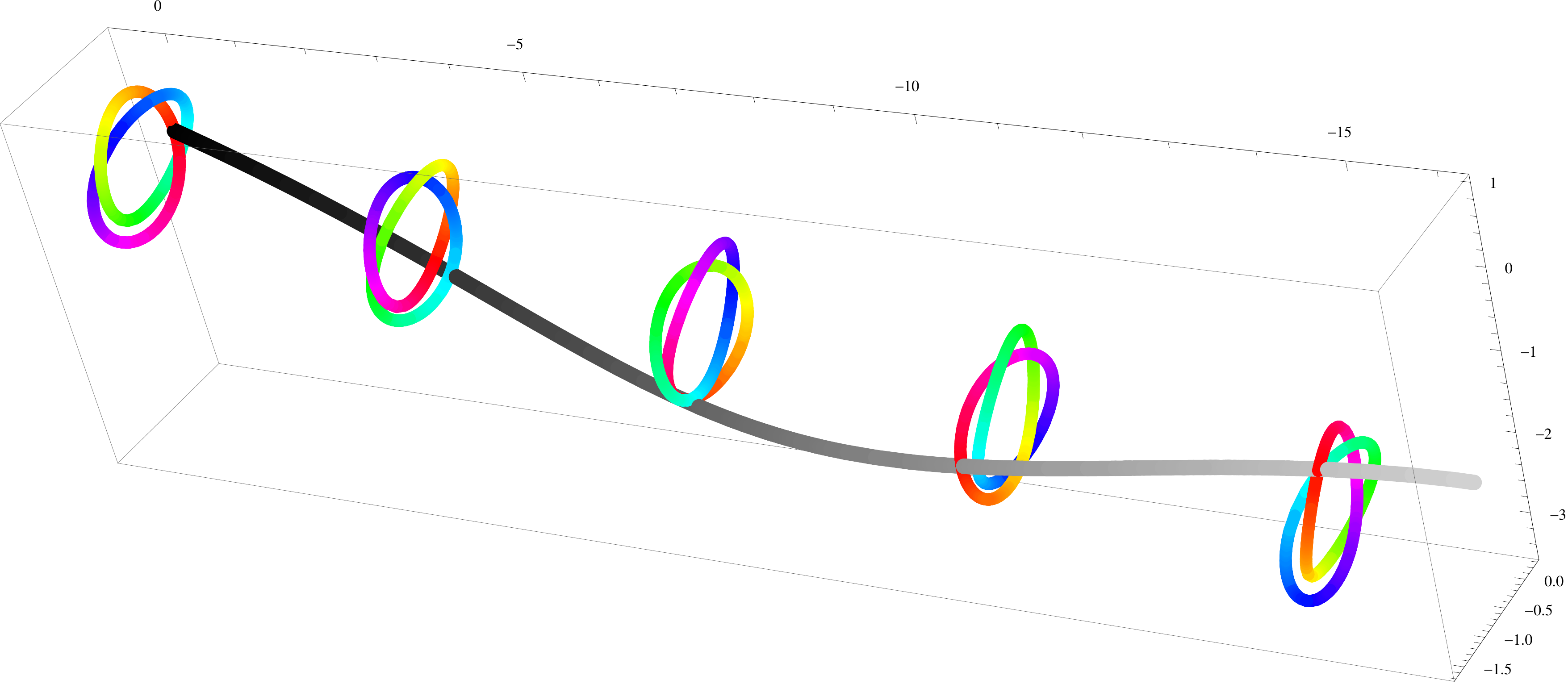}

\caption{A genus one solution of the {modified $\VFE_2$ flow} {\eqref{mVFE2}}.  The black curve
traces the evolution of a fixed point on the filament, fading to gray in the direction
of increasing time.  The shape of the filament is preserved because the VFE vector fields
restrict to genus one filaments to be slide-Killing.}
\label{trefoilone}
\end{figure}

\begin{figure}[hb]
\centering
\includegraphics[width=4in]{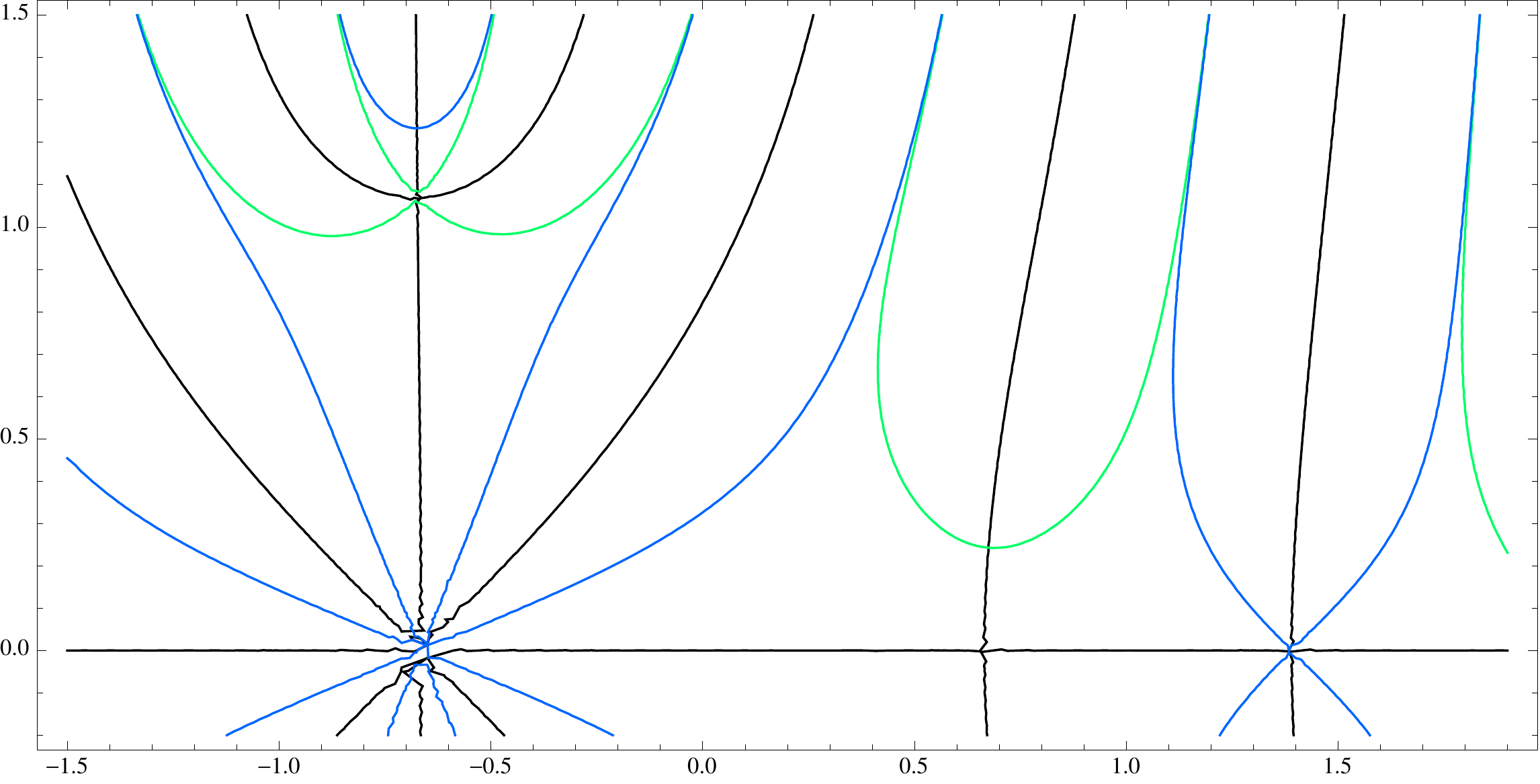}

\caption{Level set diagram for the Floquet discriminant $\Delta(\lambda)$ of the AKNS system associated
to the solution shown in Figure \ref{trefoilone}; see \cite[\S1.1]{CI07} for more details.
The spectrum is symmetric
under complex conjugation, so only the upper half of the $\lambda$-plane is shown.
Black curves are where $\Im \Delta = 0$,
while green and blue curves are where $\Re \Delta =$ 2 or -2, respectively.
The continuous spectrum consists of the real axis, and the parts of the black curves
between the real axis and the branch points, which are at simple intersections of blue and green curves with the black spines. Multiple periodic points appear along the real axis, and near the end of the spine on the left.}
\label{trefspectrum}
\end{figure}

{As discussed in Remark \ref{quasiclosure}}, the closure conditions are preserved when the branch points are translated to the right or left
by adding a real constant.  In particular, if we use the branch points $\tilde\lambda_1 = \lambda_1 - \lambda_0$ and
$\tilde\lambda_2 = \lambda_2 - \lambda_0$, then the closure conditions are satisfied at $\lambda=0$.  Thus,
evaluating the Sym formula at the origin in this case produces a trefoil knot that moves purely by the
{$\VFE_2$ flow} $\gamma_t = \W_2$.  As shown in Figure \ref{trefoiltwo}, under this flow the trefoil undergoes
many more oscillations along its plane, and has a relatively small component of motion perpendicular to its plane.
This makes sense, since $\W_2 = \tfrac12 \kappa^2 \T + \kappa' \N + \kappa\tau \B$ preserves
planarity, and has a small transverse component when the torsion is small.
Since the Floquet spectrum of this solution is merely translated to the right, the value of
$\Omega_3(\lambda_d)$ and the solution is again linearly unstable.

\begin{figure}[hb]
\centering

\includegraphics[width=4in]{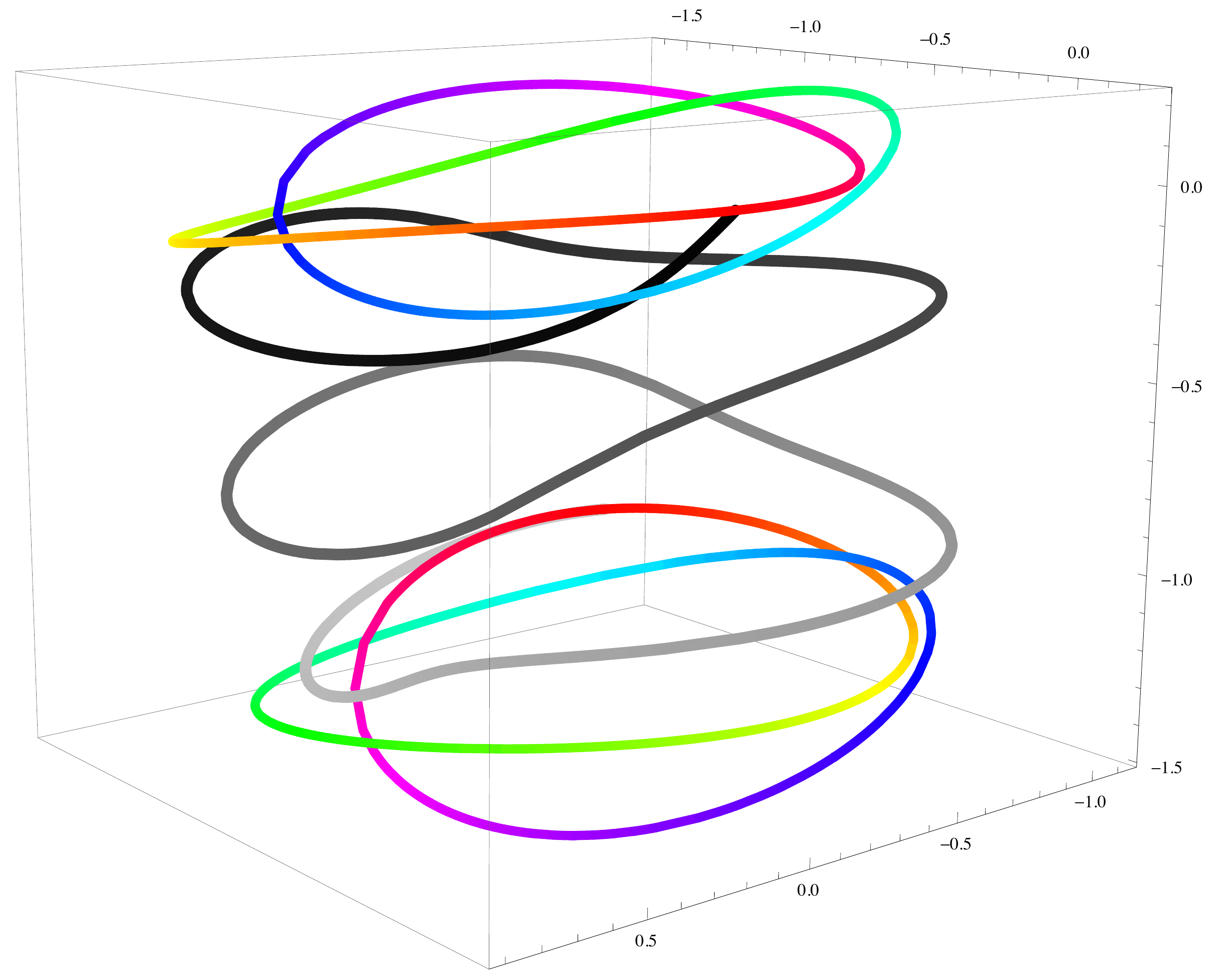}

\caption{A genus one solution of the (unmodified) $\VFE_2$ flow.}
\label{trefoiltwo}
\end{figure}

\subsection*{Higher Genus Examples}
It is difficult, in general, to find configurations of branch points for higher-genus Riemann surfaces
that satisfy the closure conditions. However, closure conditions can be preserved by isoperiodic
deformations (see \cite{CI07}) of the branch points, which can also be used to increase the genus
by opening up double periodic points along the real axis.

For our third example, the branch point configuration is obtained by starting with the Floquet
spectrum of the double cover of the genus one planar figure-8 elastic rod, and symmetrically opening up two real double points using isoperiodic deformations.
(These spectra, which are also symmetric about the origin, are shown in Figure \ref{beforeafter}; closure
conditions are satisfied at $\lambda=0$ before and after the deformation.)
The resulting branch points are
$\lambda_1 =  -2.2271 + 0.8015 \ri$, $\lambda_2 =  -0.8390 + 1.974\ri$, together with their reflections
in the origin and the real and imaginary axes.
(This same configuration of branch points can be used to generate a solution
which maintains a plane of reflection symmetry and is periodically planar (see Figure 8 in \cite{CI05}),
provided that one chooses $\vD = (0,0,0)$ in the finite-gap formulas \eqref{Baker} and \eqref{finitegapq}; in this case,
we break symmetry by using $\vD = (-1, .2, -.1)$.)

Reconstructing at the origin gives a closed genus 3 solution of the $\W_2$-flow, shown at three successive times
in Figure \ref{highergone}.  (Note that this curve initially has knot type $6_3$, is unknotted in the
next snapshot, then returns to its original knot type.)
The continuous spectrum contains four complex double points,
located $\pm .79065 \pm 1792\ri$, one along each of the spines extending from the origin; evaluating $\Omega_3$ at these points indicates that this solution is again linearly unstable.

\begin{figure}[hb]
\centering

\includegraphics[width=4in]{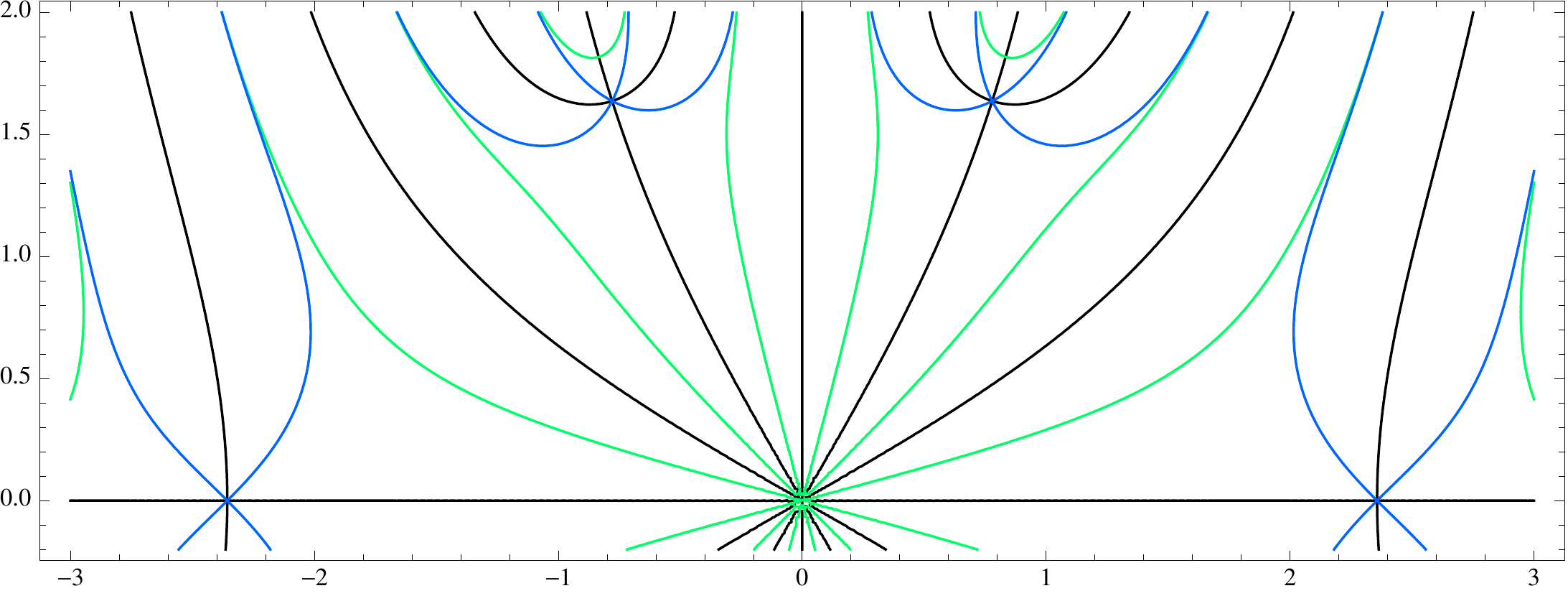}

\includegraphics[width=4in]{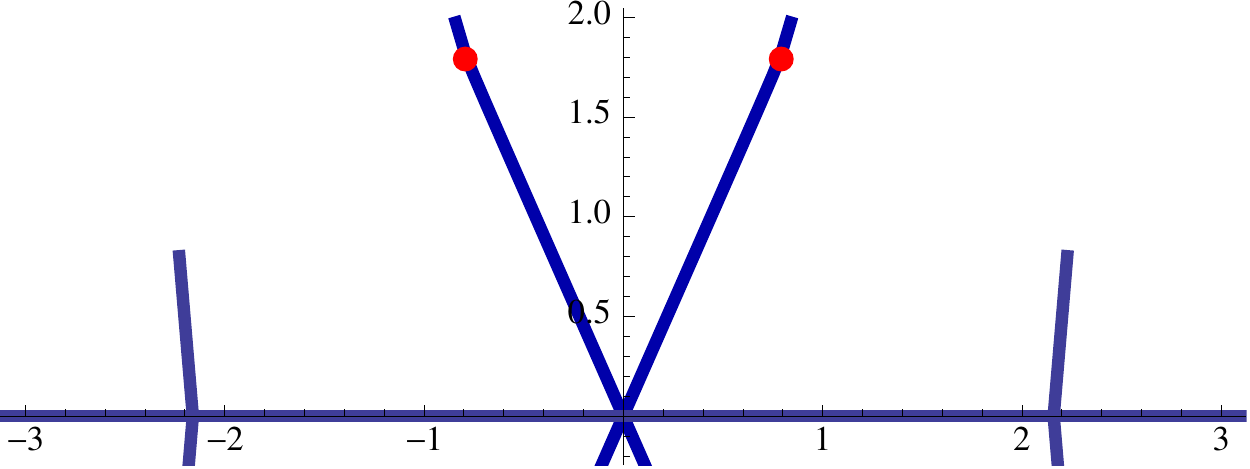}
\caption{The upper plot shows the level curve diagram for the Floquet spectrum of a double-covered figure 8 elastic rod; the green, blue and black curves have the same meaning as in Figure \ref{trefspectrum}.  Two spines emerge from the origin into the upper half-plane, each carrying a double periodic point; additional double points appear along the real axis.
The lower plot shows the continuous spectrum after isoperiodic deformations have been used to symmetrically opening up the real double
points nearest the origin; the complex double points are indicated by red dots.}
\label{beforeafter}
\end{figure}

\begin{figure}[hb]

\includegraphics[width=4in]{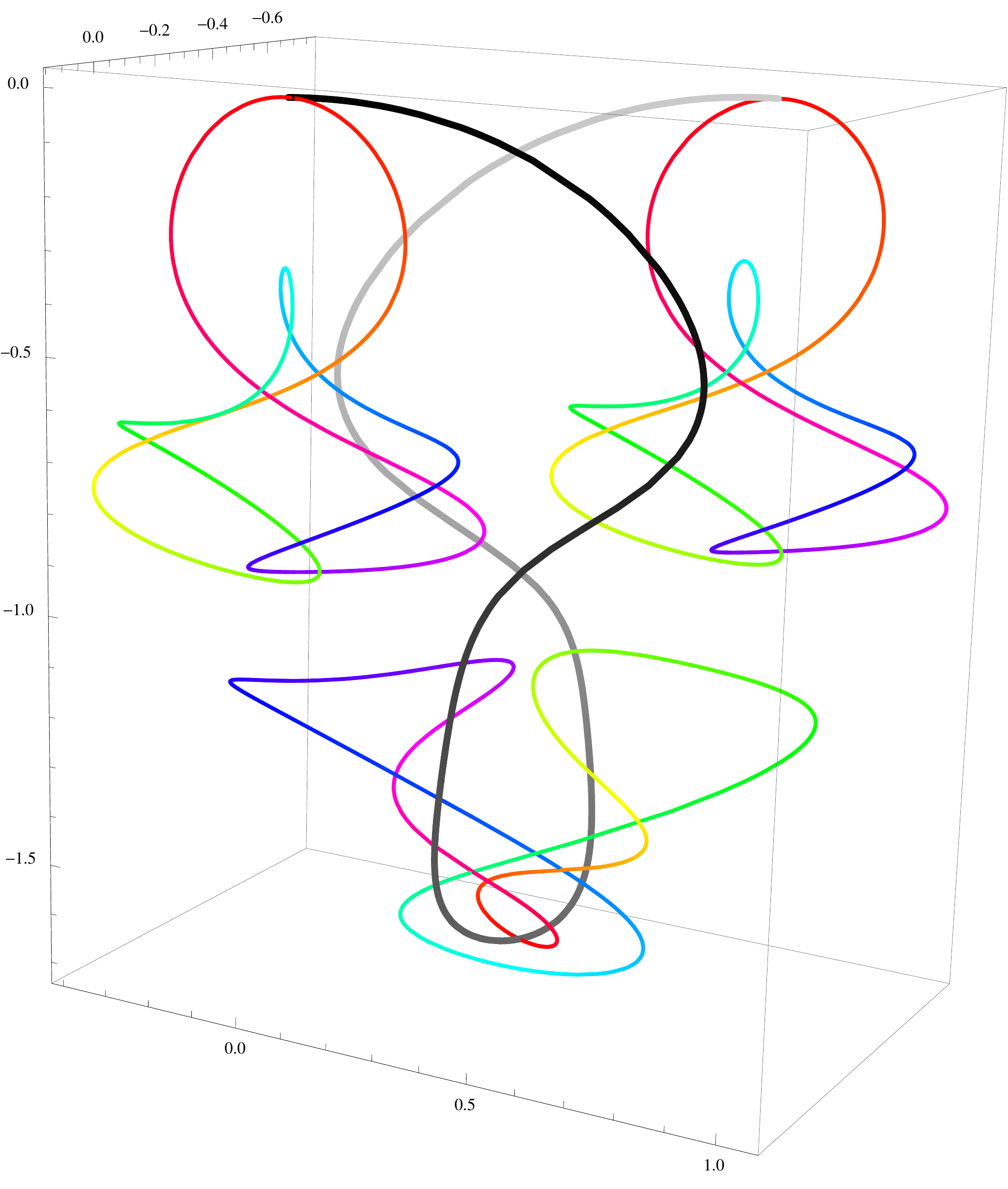}
\centering
\caption{A genus-3 solution to the $\VFE_2$ flow.}
\label{highergone}
\end{figure}

\section{Discussion and Conclusions}
\label{Discussion}

In this paper, we introduced a method to study the linear stability of solutions to the VFE hierarchy. The main result given in Conjecture \ref{conj}  enables us to find linear instabilities
of closed solutions as it provides a way to construct the solutions of the linearized $\VFE_{n}$ in terms of the linearized $\NLS_{n+1}$.
{This conjecture has been verified by symbolic computations up to $n=14$.}
{Since solutions
of the linearized version of members of the NLS hierarchy can be constructed through the squared eigenfunctions,
the same can be said of the linearized versions of the
equations of the VFE hierarchy---that is, solutions of the linearized version of the  $\VFE_n$ (or a modified version of it) can be obtained through the squared eigenfunctions associated to $\NLS_{n+1}$.}

{Our general scheme, then, is to consider a periodic $\NLS_{n+1}$ solution that is associated to
a closed (periodic) curve solution of a modified version of $\VFE_n$.  (Again, the term `modified' refers to
the fact that the Sym transformation takes a solution of the $\NLS_{n+1}$ to a curve satisfying $\VFE_n$ modified by the addition of a linear combination of lower order members of the hierarchy, as in \eqref{modVFEn}.)
If one finds a solution of the linearized version of $\NLS_{n+1}$, with period equal to the
length of the space curve and which exhibits exponential growth in time, then the closed filament is
shown to be linearly unstable.}
This is in essence what Theorem \ref{fund} says, as it relates instabilities of closed solutions of a modified $\VFE_n$ to the Floquet spectrum of the corresponding $\NLS_{n+1}$ solution.

While we have focussed our discussion on closed solutions, our method can be used for any type of solutions. The method will be most useful when a large set of solutions of the linearized version of the $\NLS_{n+1}$  is known. In the case of the (multi-)soliton solutions, the set of solutions of the linearization of $\NLS_{n+1}$ about a soliton solution obtained through the squared eigenfunctions is complete in $L^2(\mathbb{R})$ for any fixed value of $t$\cite{Kaup76,Kaup76a,Kaup90,Yang00,Yang03,Yang2010}. {Having a complete set may enable one to establish stability in a certain sense.  For example, in the case of one-soliton solutions the squared eigenfunctions} can be used to show that the point spectrum of the eigenvalue problem arising from the linearization is restricted to the imaginary axis \cite{Kaup90,Yang00}, thus implying the spectral stability for the corresponding one-soliton solutions of the VFE hierarchy through Conjecture \ref{conj}.

\subsubsection*{Acknowledgements}

The authors are grateful to Annalisa Calini and Alex Kasman for useful discussions. This work was  supported by the National Science Foundation through grant DMS-0908074 (S. Lafortune).


\begin{appendices}
  \renewcommand\thetable{\thesection\arabic{table}}
  \renewcommand\thefigure{\thesection\arabic{figure}}
\section{The NLS Hierarchy}
\label{A}
\subsection{Construction}
\label{hie}

We begin with the AKNS system for $m$th flow of the NLS
hierarchy,
\begin{equation}\label{AKNSm}
\vpsi_x = U\vpsi, \qquad \vpsi_t = V_m \vpsi
\end{equation}
where the first equation is the same as the first one in \eqref{AKNS}.
Following Fordy \cite{Fordy}, we assume that $V_m$ is a polynomial of degree $m$ in $\lambda$:
\begin{equation}\label{Vmex}
V_m = \sum_{j=0}^m (-\lambda)^{m-j} P_j,
\end{equation}
where the $P_j(x,t)$ take value in $\su(2)$.
(Comparing with \eqref{AKNS} shows that $m=2$ corresponds to the usual NLS.)
The compatibility condition for \eqref{AKNSm} is
\begin{equation}\label{compatt}
U_t =  [V_m , U] + \di V_m.
\end{equation}
For purposes of expanding this in terms of $\lambda$, we write
\begin{equation}\label{defofU}
U = \lambda \me_0 + Q, \qquad Q := \dfrac{\ri}2 \begin{pmatrix} 0 & q \\ \qbar & 0 \end{pmatrix},
\end{equation}
with $\me_0$ given in \eqref{es}. Substituting this and the expansion for $V_m$ into \eqref{compatt}, we get
\begin{align}
Q_t &= [P_m, Q] + \di P_m, \label{Qtime} \\
0 &= [P_j, Q] - [P_{j+1}, \me_0] + \di P_j, \qquad 0 \le j < m, \label{mittelj} \\
0 &= [P_0, \me_0], \notag
\end{align}
from the coefficients of $\lambda$ to the power $0$, $m-j$ and $m+1$ respectively.
The last equation implies that $P_0$ is a multiple of $\me_0$, and taking the $\me_0$ component
of \eqref{mittelj} with $j=0$ shows that the multiple must be constant in $x$.  We will take the {\it ansatz} that $P_0= \alpha \me_0$
for a real constant $\alpha$.  (This will be chosen later, in order to make $V_2$ match the usual AKNS
system for NLS.)  Similarly, each $P_{j+1}$ is determined by \eqref{mittelj} up to an additive constant
times $\me_0$.

Notice that \eqref{mittelj} gives a recursively-defined infinite sequence $\{ P_j\}$.  Langer \cite[\S5.2]{La99} observes that
when we set $q=\kappa_1 + \ri \kappa_2$,
the components of $P_j$ with respect to the $\me_a$ satisfy the same recursion relations
as the components of $\W_j$ with respect to the natural frame.  In more detail, if we write
$$\W_n = f_n \T + g_n \Ua_1 + h_n \Ua_2,$$
then the recursion relation \eqref{rec2} $J \W_{n+1}= \di \W_n$ gives
$$g_{n+1} = \kappa_2 f_n + \di h_n, \qquad -h_{n+1} = \kappa_1 f_n + \di g_n,$$
while the LAP condition gives
$$\di f_n = \langle \W_n, \di \T \rangle = \kappa_1 g_n + \kappa_2 h_n.$$
On the other hand, taking $P_n= f_n \me_0 + g_n \me_1 + h_n \me_2$ gives
\eqnn{
0 = [P_n, Q] + [\me_0, P_{n+1}] + \di P_n& = [ f_n \me_0 + g_n \me_1 + h_n \me_2, \kappa_1 \me_2 - \kappa_2 \me_1]
+ [\me_0,  f_{n+1} \me_0 + g_{n+1} \me_1 + h_{n+1} \me_2]  \\
&+ (\kappa_1 g_n + \kappa_2 h_n) \me_0 +(\di g_n) \me_1 + (\di h_n) \me_2\\
=
(&\kappa_1 f_n + h_{n+1} + \di g_n)  \me_1 + (\kappa_2 f_n - g_{n+1} + \di h_n)\me_2
.}

Thus, we will solve the NLS recursion \eqref{mittelj} by taking
$$P_n = \alpha ( f_n \me_0 + g_n \me_1 + h_n \me_2).$$
For example, by expressing the first few vector fields $\W_n$ as given in \eqref{hiee} in terms of the natural frame, we obtain
\begin{align*}
P_0 &= -\alpha \me_0, \\
P_1 &= \alpha(\kappa_1 \me_2 - \kappa_2 \me_1) = \alpha Q,\\
P_2 &= \alpha(\kappa_1' \me_1 + \kappa_2' \me_2+ \tfrac12(\kappa_1^2 + \kappa_2^2) \me_0)
= \dfrac{\alpha}2 \begin{pmatrix} -\tfrac12 \ri |q|^2  & q' \\ -\qbar' & \tfrac12 \ri |q|^2  \end{pmatrix}.
\end{align*}
From \eqref{Qtime} we get the induced evolution equations for scalar $q$:
\begin{itemize}
\item For $m=0$, we get $\ri q_t = -\ri \alpha q$, an infinitesimal symmetry of NLS.
\item For $m=1$, we get $q_t = \alpha q_x$, another NLS symmetry.
\item For $m=2$, we get $\ri q_t = \alpha (q_{xx} + \tfrac12 |q|^2 q)$.
\end{itemize}
So we will choose $\alpha=-1$ to obtain the usual (focusing) NLS.  In particular, the coefficient
matrix giving the time dependency in the AKNS system for NLS is
$$
\begin{aligned}
V_2 &= \lambda^2 P_0 - \lambda P_1 + P_2 \\
      &= \lambda^2 \me_0 + \lambda Q  - \kappa_1' \me_1 - \kappa_2' \me_2 -\tfrac12 (\kappa_1^2 + \kappa_2^2) \me_0\\
	&= \dfrac{1}{2}\begin{pmatrix} \ri(-\lambda^2+\tfrac12|q|^2) & \ri \lambda q - q' \\
					 \ri\lambda \qbar + \qbar' & \ri(\lambda^2 - \tfrac12|q|^2) \end{pmatrix}.
\end{aligned}
$$

The benefit of constructing the NLS hierarchy in a way that is closely linked with the recursive construction
of the VFE hierarchy is that the Sym transformation automatically generalizes to higher-order flows {(cf. Prop.~26 in \cite{La99})}:

\begin{Pro}\label{allSym0} Let $\Psi(x,t,\lambda)$ be a matrix solution, assumed to be $SU(2)$-valued when $\lambda$ is real,
of the $m$th AKNS system \eqref{AKNSm} at $q(x,t)$.
Then the Sym formula at $\lambda = 0$, i.e.,
\begin{equation}\label{Sym0}
\Gamma = \Psi^{-1}\left. \dfrac{\partial \Psi}{\partial \lambda}\right\vert_{\lambda=0}
\end{equation}
gives a solution of $\gamma_t = \W_{m-1}$,  after applying the identification $\sj$ defined in \eqref{sj}.
\end{Pro}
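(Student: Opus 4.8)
The plan is to compute the $x$- and $t$-derivatives of $\Gamma = \Psi^{-1}\,\partial_\lambda \Psi$ directly from the AKNS system \eqref{AKNSm} and then specialize to $\lambda = 0$. The two derivatives behave quite differently, and it is precisely this asymmetry that produces a hierarchy vector field on the right-hand side.

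First I would handle $\Gamma_x$. Differentiating $\Psi_x = U\Psi$ with respect to $\lambda$ and using $\partial_\lambda U = \me_0$ (since $U = \lambda\me_0 + Q$ from \eqref{defofU}) gives $(\partial_\lambda\Psi)_x = \me_0\Psi + U\,\partial_\lambda\Psi$. Combining this with $\partial_x\Psi^{-1} = -\Psi^{-1}U$, the terms involving $\partial_\lambda\Psi$ cancel, leaving
\[
\Gamma_x = \Psi^{-1}\me_0\Psi = \langle \me_0\rangle .
\]
Applying $\sj$ and comparing with \eqref{jangle} shows $\gamma_x = \T$, so the curve is parametrized by arclength and the moving frame $(\T,\Ua_1,\Ua_2)$ built from $\Psi$ is the (untwisted, since $\lambda_0=0$) natural frame along $\gamma$, with curvatures $\kappa_1 = \Re q$, $\kappa_2 = \Im q$.

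The identical manipulation applied to $\Psi_t = V_m\Psi$, using $\partial_t\Psi^{-1} = -\Psi^{-1}V_m$, yields the key identity
\[
\Gamma_t = \Psi^{-1}(\partial_\lambda V_m)\Psi = \langle \partial_\lambda V_m\rangle .
\]
I would then evaluate $\partial_\lambda V_m$ at $\lambda = 0$. From the expansion $V_m = \sum_{j=0}^m (-\lambda)^{m-j}P_j$ of \eqref{Vmex} one finds $\partial_\lambda V_m = \sum_{j=0}^{m-1}(m-j)(-1)^{m-j}\lambda^{m-j-1}P_j$, and setting $\lambda = 0$ annihilates every term except $j = m-1$, leaving $\partial_\lambda V_m|_{\lambda=0} = -P_{m-1}$. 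Finally I would invoke the correspondence established in Section \ref{hie}, namely $P_n = \alpha(f_n\me_0 + g_n\me_1 + h_n\me_2)$ with the normalization $\alpha = -1$, where $f_n,g_n,h_n$ are the natural-frame components of $\W_n$. Hence $-P_{m-1} = f_{m-1}\me_0 + g_{m-1}\me_1 + h_{m-1}\me_2$, and applying $\sj$ together with $\sj\langle \me_0\rangle = \T$, $\sj\langle \me_1\rangle = \Ua_1$, $\sj\langle \me_2\rangle = \Ua_2$ gives
\[
\gamma_t = \sj\big(\langle -P_{m-1}\rangle\big) = f_{m-1}\T + g_{m-1}\Ua_1 + h_{m-1}\Ua_2 = \W_{m-1},
\]
as claimed.

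The differential-geometric manipulations are routine; the substantive content is the identification $P_{m-1}\leftrightarrow \W_{m-1}$, but this has already been front-loaded into the construction of the hierarchy in Section \ref{hie}, where the AKNS recursion \eqref{mittelj} and the curve recursion \eqref{rec2} are shown to coincide under $q = \kappa_1 + \ri\kappa_2$. Consequently the only genuinely delicate point I expect is the sign and index bookkeeping in the $\lambda$-expansion: one must confirm that exactly $P_{m-1}$, with the correct sign, survives at $\lambda = 0$, and that the choice $\alpha = -1$ (fixed earlier to reproduce focusing NLS when $m=2$) is carried consistently through. I anticipate no essential obstacle beyond keeping these conventions straight.
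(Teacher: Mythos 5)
Your proposal is correct and follows essentially the same route as the paper: differentiate the Sym formula in $t$ to get $\Gamma_t = \Psi^{-1}(\partial_\lambda V_m)\Psi$, observe that only $-P_{m-1}$ survives at $\lambda=0$, and use the identification $P_n = -(f_n\me_0+g_n\me_1+h_n\me_2)$ from the hierarchy construction together with \eqref{jangle}. Your additional computation of $\Gamma_x=\langle\me_0\rangle$ (establishing arclength parametrization and the natural frame) is a detail the paper delegates to Section \ref{S5} rather than the proof itself, but it is the same argument.
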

\begin{proof}  Differentiating \eqref{Sym0} with respect to time gives
$$\Gamma_t = \left.\Psi^{-1} \dfrac{\partial V_m}{\partial \lambda} \Psi \right\vert_{\lambda=0}.$$
Using \eqref{Vmex},
$$\left.\dfrac{\partial V_m}{\partial \lambda} \right\vert_{\lambda=0} = -P_{m-1} = f_{m-1} \me_0 + g_{m-1} \me_1 + h_{m-1} \me_2.$$
Then, using the notation of \eqref{jangle} with $\lambda=0$,
$\Gamma_t =f_{m-1} \langle\me_0\rangle + g_{m-1} \langle\me_1\rangle+ h_{m-1} \langle\me_2\rangle$,
and the conclusion follows by applying $\sj$.

\end{proof}

\subsection{Gauge Transformations} \label{apgauge}

In this section we calculate the effect of gauge transformations on solutions of the AKNS system for $m$th order NLS.
What we find will ultimately let us say what evolution equation is satisfied by a curve produced by using
these AKNS solutions in the Sym formula.

The gauge transformation we consider replaces the AKNS solution $\vpsi$ with
$$\hatpsi = \exp( -(\alpha x+\beta t) \me_0) \vpsi = \dfrac{\ri}{2}\begin{pmatrix} e^{\alpha x+\beta t} & 0 \\ 0 & -e^{-(\alpha x+\beta t)} \end{pmatrix} \vpsi$$
for $\alpha,\beta\in \bbR$.
In particular, we will show that $\hatpsi$ satisfies a modified AKNS system, for a modified potential
and a shifted spectral parameter.

It is easy to see the modification of the potential and the shift in $\lambda$ by computing the $x$-derivative
of $\hatpsi$.  Recall from \eqref{AKNSmx} and \eqref{defofU} that $\vpsi_x = U \vpsi = (\lambda \me_0 + Q)\vpsi$.
Then
\eqnn{
\hatpsi_x &= -\alpha \me_0 \hatpsi + \left(\Ad_{\exp( -(\alpha x+\beta t) \me_0)} U\right) \hatpsi\\
&= ((\lambda-\alpha) \me_0 + \Ad_{\exp( -(\alpha x+\beta t) \me_0)} Q) \hatpsi = \left( (\lambda-\alpha) \me_0 + \Qhat \right) \hatpsi,
}
where we define
$$\Qhat:= \dfrac{\ri}2\begin{pmatrix} 0 & \qhat \\ \overline{\qhat} & 0 \end{pmatrix}\quad\text{and}\quad \qhat(x,t) := e^{\ri(\alpha x+\beta t)} q(x,t).$$
More generally, in the rest of this subsection a `hat' accent will denote the result of replacing $q$ by $\qhat$, and for a matrix $M$
we will let $\{M \} = \Ad_{\exp( -(\alpha x+\beta t) \me_0)} M$.  Sometimes these coincide, as with $\{ P_0 \} = \Phat_0 = \me_0$ (which is a constant
anyway) and $\{ Q\} = \Qhat$.

Using $\vpsi_t = V_m \vpsi$, we can similarly compute that
\begin{equation}\label{hatpsitime}
\hatpsi_t = (-\beta \me_0 + \{ V_m \} ) \hatpsi.
\end{equation}
Recall that $V_m$ is an $m$th degree polynomial in $\lambda$, with matrix-valued coefficients
depending on $q$ and its derivatives.
It turns out that $\{V_m(q,\lambda)\}$ can be expressed as a linear combination of $V_j(\qhat, \lambda-\alpha)$ for $j\le m$.

%
\begin{Pro}\label{Vgaugeprop}
\begin{equation}\label{Vconjexp}
\{ V_m(q,\lambda)\} = \sum_{k=0}^m (-\alpha)^k \binom{m}k V_{m-k}(\qhat, \lambda-\alpha),
\end{equation}
where we let $V_0=P_0$.  In particular, if $\beta=(-\alpha)^m$, then from \eqref{hatpsitime} we have
$$\hatpsi_t = \sum_{k=0}^{m-1} (-\alpha)^k \binom{m}k V_{m-k}(\qhat,\lambda-\alpha) \hatpsi,$$
and it follows that
\begin{equation}\label{qhatty}
\qhat_t = \displaystyle\sum_{k=0}^{m-1} (-\alpha)^k \binom{m}k F_{m-k}[\qhat],
\end{equation}
{where $F_{j}[\qhat]$ denotes the RHS of $\NLS_j$, as in \eqref{NLSm1}}.
\end{Pro}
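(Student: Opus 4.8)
The plan is to prove the expansion \eqref{Vconjexp} by induction on $m$ and then read off the evolution equation \eqref{qhatty} as a corollary. The engine of the induction is the ladder relation $V_{m+1}(q,\lambda) = -\lambda\,V_m(q,\lambda) + P_{m+1}(q)$, which is immediate from \eqref{Vmex}. Applying the automorphism $\{\,\cdot\,\} = \Ad_{\exp(-(\alpha x+\beta t)\me_0)}$ and writing $\lambda = \hlambda+\alpha$, I get $\{V_{m+1}\} = -(\hlambda+\alpha)\{V_m\} + \{P_{m+1}\}$. Substituting the inductive hypothesis for $\{V_m\}$, using the hatted ladder relation $-\hlambda\,V_j(\qhat,\hlambda) = V_{j+1}(\qhat,\hlambda) - \Phat_{j+1}$, and collecting the $V$-terms with Pascal's rule $\binom{m}{k}+\binom{m}{k-1}=\binom{m+1}{k}$, those terms assemble exactly into $\sum_k (-\alpha)^k\binom{m+1}{k}V_{m+1-k}(\qhat,\hlambda)$. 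The step therefore closes precisely when the remaining terms cancel, i.e. when
\[ \{P_{m+1}\} = \sum_{k\ge 0}(-\alpha)^k\binom{m}{k}\Phat_{m+1-k}. \]
So the whole Proposition reduces to this single transformation law for the recursion coefficients $P_j$.

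To prove that law (by an inner induction on $j$) I would conjugate the defining recursion \eqref{mittelj}, in the form $[P_{j+1},\me_0] = [P_j,Q]+\di P_j$. Since $\{\,\cdot\,\}$ is a Lie-algebra automorphism that fixes $\me_0$ and sends $Q$ to $\Qhat$, it commutes with every bracket; the only subtlety is that it does \emph{not} commute with $\di$, because $\exp(-(\alpha x+\beta t)\me_0)$ depends on $x$. A short computation gives the commutation rule $\di\{P_j\} = \{\di P_j\} - \alpha[\me_0,\{P_j\}]$, and feeding this into the conjugated recursion turns it into $[\{P_{j+1}\}+\alpha\{P_j\},\,\me_0] = [\{P_j\},\Qhat]+\di\{P_j\}$. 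Inserting the inductive hypothesis for $\{P_j\}$ and using the hatted recursion $[\Phat_{i+1},\me_0]=[\Phat_i,\Qhat]+\di\Phat_i$ together with Pascal's rule again reproduces the claimed formula for $\{P_{j+1}\}$; the base cases $\{P_0\}=\me_0$ and $\{P_1\}=\Phat_1$ (since $P_1$ is a constant multiple of $Q$ and $\{Q\}=\Qhat$) are immediate.

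The step I expect to be the main obstacle is the treatment of the $\me_0$ (diagonal) components. Bracketing against $\me_0$ annihilates the diagonal, so the recursion above controls directly only the $\me_1,\me_2$ parts; the $\me_0$-component of $P_j$ is Langer's scalar $f_j$, recovered by integrating the $\me_0$-component of \eqref{mittelj} with the integration constant set to zero. Matching diagonals then amounts to a non-obvious scalar identity relating the $f_j$ of $q$ and of $\qhat$, which reflects the fact that the gauge shifts the torsion by the constant $\alpha$. I would close this without computing those scalars: because $\{\,\cdot\,\}$ preserves the diagonal of every matrix, and because the $\me_0$-component of any bracket $[X,\Qhat]$ depends only on the off-diagonal part of $X$, both sides of the desired identity have diagonals satisfying the \emph{same} first-order ODE in $x$ (the local-arclength-type condition) once their off-diagonal parts are known to agree. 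With the common (vanishing) integration constant, equality of the off-diagonal parts forces equality of the diagonals, completing the reduction.

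Finally, for the evolution equation I would set $\beta=(-\alpha)^m$ in \eqref{hatpsitime}. The $k=m$ term of \eqref{Vconjexp} is $(-\alpha)^m\binom{m}{m}V_0 = (-\alpha)^m\me_0 = \beta\me_0$ (using $V_0 = P_0 = \me_0$), which cancels the $-\beta\me_0$ in $\hatpsi_t = (-\beta\me_0+\{V_m\})\hatpsi$ and leaves $\hatpsi_t = \sum_{k=0}^{m-1}(-\alpha)^k\binom{m}k V_{m-k}(\qhat,\hlambda)\hatpsi$. Since the $\hlambda^0$-part of the compatibility condition \eqref{compatt} is linear in the time-matrix and reproduces \eqref{Qtime}, each summand $V_{m-k}(\qhat,\hlambda)$ induces the flow $\qhat_t = F_{m-k}[\qhat]$; by linearity the full time-matrix induces $\qhat_t = \sum_{k=0}^{m-1}(-\alpha)^k\binom{m}k F_{m-k}[\qhat]$, which is \eqref{qhatty}.
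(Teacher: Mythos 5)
Your proposal is correct and rests on exactly the same key lemma as the paper, namely the transformation law $\{P_{m+1}\}=\sum_{k}(-\alpha)^k\binom{m}{k}\Phat_{m+1-k}$, which you also prove the same way: conjugate the recursion \eqref{mittelj}, absorb the extra $-\alpha[\me_0,\cdot]$ term coming from $\di$, and then dispose of the diagonal ($\me_0$-component) ambiguity by noting that both sides obey the same first-order ODE in $x$ with the same (zero) integration constant --- this is precisely the paper's remark that neither sequence contains constant multiples of $\me_0$, just spelled out more carefully. Where you diverge is the assembly of \eqref{Vconjexp} from that lemma: the paper substitutes the lemma into the $\lambda$-expansion of $V_m$, reindexes the resulting double sum, and reduces the coefficient match to the binomial identity $\sum_{r=0}^{\ell-k}(-1)^r\binom{m}{\ell-k-r}\binom{k+r}{k}=\binom{m-k-1}{\ell-k}$, proved by a separate induction; you instead run a single induction on $m$ using the ladder relation $V_{m+1}=-\lambda V_m+P_{m+1}$ and Pascal's rule, which makes the $P$-lemma appear as exactly the obstruction to closing the step and avoids the double-sum manipulation entirely. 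The two routes carry the same content, but yours is arguably tidier in that the binomial identity is replaced by Pascal's rule applied once per step. Your treatment of the final claim (cancellation of the $k=m$ term $(-\alpha)^m\me_0$ against $-\beta\me_0$, then reading off \eqref{qhatty} from the linearity of the compatibility condition in the time matrix) fills in a step the paper leaves implicit and is correct.
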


\begin{Cor}\label{modCor} Let $q$ satisfy $\NLS_m$ \eqref{NLSm1} and let $\Gamma$ be obtained from the Sym formula \eqref{Symf}
evaluated at $\lambda_0 \in \bbR$.  Then $\gamma = \sj\circ \Gamma$  satisfies
$$\gamma_t = \sum_{k=0}^{m-1} (-\lambda_0)^k \binom{m}k \W_{m-1-k}.$$
\end{Cor}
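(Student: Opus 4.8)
The plan is to reduce the statement to Proposition~\ref{allSym0}, which already handles the Sym formula at $\lambda=0$, by exploiting the gauge transformation analyzed in Proposition~\ref{Vgaugeprop}. Concretely, I would set $\alpha=\lambda_0$ and $\beta=(-\lambda_0)^m$ and introduce the gauge-transformed solution $\hatpsi=\exp(-(\alpha x+\beta t)\me_0)\vpsi$. The crucial observation is that the prefactor $\exp(-(\alpha x+\beta t)\me_0)$ is independent of the spectral parameter. Hence, writing $\hlambda=\lambda-\alpha$ so that $\partial/\partial\hlambda=\partial/\partial\lambda$, the $\lambda$-independent factors cancel in
\[
\hatpsi^{-1}\frac{\partial\hatpsi}{\partial\hlambda}
=\vpsi^{-1}\exp((\alpha x+\beta t)\me_0)\exp(-(\alpha x+\beta t)\me_0)\frac{\partial\vpsi}{\partial\lambda}
=\vpsi^{-1}\frac{\partial\vpsi}{\partial\lambda},
\]
so that evaluating at $\hlambda=0$ (i.e.\ $\lambda=\lambda_0$) reproduces exactly the matrix $\Gamma$ of \eqref{Symf}. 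In other words, the same curve $\Gamma$ arises from the Sym formula at $\hlambda=0$ applied to $\hatpsi$, which is precisely the situation treated in Proposition~\ref{allSym0}, but now for the \emph{modified} AKNS system satisfied by $\hatpsi$.

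Next I would invoke Proposition~\ref{Vgaugeprop} with the choice $\beta=(-\alpha)^m$, which tells us that $\hatpsi$ satisfies $\hatpsi_t=\widehat V\,\hatpsi$ with time-evolution matrix
\[
\widehat V=\sum_{k=0}^{m-1}(-\alpha)^k\binom{m}{k}V_{m-k}(\qhat,\hlambda),
\]
a polynomial of degree $m$ in $\hlambda$ whose coefficients are built from the gauged potential $\qhat$ of \eqref{hq}. Since the Sym time-derivative $\Gamma_t=\hatpsi^{-1}(\partial_{\hlambda}\widehat V)\hatpsi|_{\hlambda=0}$ is linear in $\widehat V$, I would differentiate term by term. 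For each standard evolution matrix $V_{j}(\qhat,\hlambda)=\sum_i(-\hlambda)^{j-i}\Phat_i$, the computation in the proof of Proposition~\ref{allSym0} gives $\partial_{\hlambda}V_{j}|_{\hlambda=0}=-\Phat_{j-1}$, and $-\hatpsi^{-1}\Phat_{j-1}\hatpsi|_{\hlambda=0}$ maps under $\sj$ to the filament vector field $\W_{j-1}$ along $\gamma$. Applying this to each summand with $j=m-k$ yields $\sj(\Gamma_t)=\sum_{k=0}^{m-1}(-\alpha)^k\binom{m}{k}\W_{m-1-k}$, and substituting $\alpha=\lambda_0$ gives the asserted formula after the trivial reindexing $m-k-1=m-1-k$.

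The step requiring the most care is the identification in the previous paragraph: to apply Proposition~\ref{allSym0} term by term I must be sure that the moving frame $\hatpsi^{-1}\me_a\hatpsi$ evaluated at $\hlambda=0$ is a genuine (untwisted) natural frame along $\gamma$, and that its natural curvatures are the real and imaginary parts of $\qhat$. This is exactly what makes the matrices $\Phat_j$, built recursively from $\qhat$ as in Section~\ref{hie}, correspond under $\sj$ to the geometric vector fields $\W_j$; the spatial part of the AKNS system is unchanged by the gauge (only $\lambda\mapsto\hlambda$ and $q\mapsto\qhat$), so the frame relations of \eqref{jangle} hold with $\lambda_0$ replaced by $0$. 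A minor point to record is that $\hatpsi$ remains $SU(2)$-valued for real $\hlambda$, since $\exp(-(\alpha x+\beta t)\me_0)\in SU(2)$ for real $\alpha,\beta$, so the hypotheses of Proposition~\ref{allSym0} are met. With these consistency checks in place, the corollary follows.
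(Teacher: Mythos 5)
Your proposal is correct and follows essentially the same route as the paper's own proof: gauge-transform by $\exp(-(\lambda_0 x+(-\lambda_0)^m t)\me_0)$, observe that the $\lambda$-independent prefactor cancels in the Sym formula so that $\Gamma$ equals the Sym transformation at $\hlambda=0$ for the potential $\qhat$, and then read off the flow from Proposition~\ref{Vgaugeprop} together with Proposition~\ref{allSym0}. The only difference is cosmetic: you carry out the term-by-term differentiation $\partial_{\hlambda}V_{m-k}|_{\hlambda=0}=-\Phat_{m-k-1}\mapsto\W_{m-k-1}$ explicitly, whereas the paper simply cites Proposition~\ref{allSym0} and the evolution \eqref{qhatty} to conclude.
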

\begin{proof}
As in the proposition just above, we take $\beta=(-\alpha)^m$ and $\alpha=\lambda_0$.
We let
$$\widehat{\bPsi} = \exp(-(\alpha x+\beta t)\me_0) \bPsi.$$
Then, using the change of variable $\widehat\lambda = \lambda -\alpha$, we see that
$$\Gamma = \left.\bPsi^{-1} \dfrac{\partial \bPsi}{\partial \lambda} \right\vert_{\lambda = \lambda_0}
    = \left.\widehat\bPsi^{-1} \dfrac{\partial \widehat\bPsi}{\partial \widehat\lambda} \right\vert_{\widehat\lambda = 0}.
$$
In other words, the same evolving curve arises from the Sym transformation {\em evaluated at zero} with potential $\qhat$.
Because Prop.~\ref{allSym0} shows that, under the Sym formula evaluated at zero, a solution of $\NLS_m$ \eqref{NLSm1}
generates a solution of $\VFE_{m-1}$ \eqref{filn}, then $\gamma$ evolves by the linear combination of VFE flows corresponding
to \eqref{qhatty}.
\end{proof}

To prove Prop.~\ref{Vgaugeprop} we first need to express $\{ P_m \}$ in terms of the $\Phat_j$ for $j\le m$.
\begin{Lem}
\begin{equation}\label{Pconjexp}
\{ P_m \} = \sum_{k=0}^{m} (-\alpha)^k \binom{m-1}k \Phat_{m-k},
\end{equation}
with the convention that $\binom{j}0 = 1$ for any integer $j$, and $\binom{m-1}m = 0$ for $m>0$.
\end{Lem}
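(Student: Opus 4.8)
The plan is to prove \eqref{Pconjexp} by induction on $m$, exploiting two features of the operation $\{\cdot\} = \Ad_{\exp(-(\al x+\beta t)\me_0)}$. First, since the conjugating factor is the exponential of a multiple of $\me_0$, the map $\{\cdot\}$ is a Lie-algebra homomorphism that fixes $\me_0$ and merely rotates the $\me_1$--$\me_2$ plane by a $q$-independent phase; in particular it leaves the $\me_0$-component (the diagonal part) of any element of $\su(2)$ unchanged, and $\{Q\}=\Qhat$. Second, the hatted matrices $\Phat_j$ satisfy the recursion \eqref{mittelj} verbatim with $Q$ replaced by $\Qhat$, i.e.\ $[\me_0,\Phat_{j+1}] = -[\Phat_j,\Qhat]-\di\Phat_j$. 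The base cases are immediate from the binomial conventions: $\{P_0\}=\me_0=\Phat_0$, and $\{P_1\}=\Phat_1$ since $P_1$ is a constant multiple of $Q$ while the claimed right-hand side contributes only the $k=0$ term because $\binom{0}{1}=0$.

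For the inductive step I would apply $\{\cdot\}$ to the recursion written as $[\me_0,P_{m+1}]=-[P_m,Q]-\di P_m$. The one subtlety is that $\di$ and $\{\cdot\}$ do not commute: differentiating $\{P_m\}=gP_mg^{-1}$ and using $g_x=-\al\me_0 g$ gives $\{\di P_m\}=\di\{P_m\}+\al[\me_0,\{P_m\}]$. Hence
\[
[\me_0,\{P_{m+1}\}] = -[\{P_m\},\Qhat] - \di\{P_m\} - \al[\me_0,\{P_m\}].
\]
Substituting the inductive hypothesis for $\{P_m\}$ and invoking the hatted recursion $-[\Phat_j,\Qhat]-\di\Phat_j=[\me_0,\Phat_{j+1}]$ turns the first two terms into $\sum_k(-\al)^k\binom{m-1}{k}[\me_0,\Phat_{m+1-k}]$, while the last term yields $\sum_k(-\al)^{k+1}\binom{m-1}{k}[\me_0,\Phat_{m-k}]$. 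After shifting the index in the second sum and applying Pascal's rule $\binom{m-1}{k}+\binom{m-1}{k-1}=\binom{m}{k}$, both collapse into $[\me_0,R_{m+1}]$ with $R_{m+1}:=\sum_{k=0}^{m+1}(-\al)^k\binom{m}{k}\Phat_{m+1-k}$, the claimed value of $\{P_{m+1}\}$. Thus $[\me_0,\{P_{m+1}\}-R_{m+1}]=0$, so the two sides agree in their off-diagonal parts and can differ only by a term $D\,\me_0$.

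The main obstacle is showing $D\equiv0$, i.e.\ matching the $\me_0$-components, which the commutator identity cannot detect. Here I would use that the $\me_0$-component of each $P_j$ is a differential polynomial in $q,\qbar$ (up to sign, Langer's tangential coefficient from \eqref{fn}), together with the $\me_0$-part of \eqref{mittelj}, which reads $\di(\me_0\text{-comp of }P_{j+1})=-\,\me_0\text{-comp of }[P_{j+1},Q]$ (the LAP relation), since commutators with $\me_0$ have no diagonal part. Because $\{\cdot\}$ is a homomorphism with $\{Q\}=\Qhat$ and preserves $\me_0$-components, one gets $\me_0\text{-comp}[P_{m+1},Q]=\me_0\text{-comp}[\{P_{m+1}\},\Qhat]=\me_0\text{-comp}[R_{m+1},\Qhat]$ (the extra $[\me_0,\Qhat]$ from the $D\me_0$ term being off-diagonal), while the same LAP relation applied to the $\Phat_j$ shows $\di(\me_0\text{-comp of }R_{m+1})=-\,\me_0\text{-comp}[R_{m+1},\Qhat]$. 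Comparing gives $\di D=0$. Reading \eqref{mittelj} as an identity of differential polynomials valid for arbitrary $q$ (it does not require $q$ to solve $\NLS_m$), a differential polynomial annihilated by $\di$ must be a numerical constant; evaluating at $q\equiv0$, where every $f_j$ with $j\ge1$ vanishes and the lone constant term $\Phat_0$ is absent from $R_{m+1}$ because $\binom{m}{m+1}=0$, forces $D=0$. This completes the induction and establishes \eqref{Pconjexp}.
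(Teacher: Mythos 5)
Your proof is correct and follows essentially the same route as the paper's: conjugate the recursion \eqref{mittelj}, account for the failure of $\di$ to commute with the gauge conjugation via the extra $\alpha[\me_0,\{P_m\}]$ term, and use Pascal's rule to show the claimed sum satisfies the same conjugated recurrence, leaving only an ambiguity along $\me_0$. The paper dismisses that ambiguity in one line (neither sequence contains constant multiples of $\me_0$ for $m\ge 1$, by the normalization of the tangential coefficients $f_n$), whereas you spell out the $\di D=0$ argument and the evaluation at $q\equiv 0$ --- a more detailed justification of the same point.
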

\begin{proof}
Recall from \eqref{mittelj} that the $P_m$ satisfy the recurrence relation $[ P_{m+1}, \me_0] = \di P_m - [Q,P_m]$.
Conjugating by $\exp( -(\alpha x+\beta t) \me_0)$ gives
$$[ \{ P_{m+1} \}, \me_0] = \{  \di P_m - [Q,P_m] \} = \{  \di P_m \} - [ \Qhat, \{ P_m\} ].$$
But $ \di \{ P_m \} =  \di Ad_{\exp(-(\alpha x + \beta t) \me_0)} P_m = [-\alpha \me_0, \{ P_m\} ] + \{  \di P_m \}$, so
\begin{equation}\label{conjrel}
[ \{ P_{m+1} + a P_m \}, \me_0 ] = ( \di - \ad_\Qhat) \{ P_m \}.
\end{equation}

Let $S_m$ denote the right-hand side of \eqref{Pconjexp}; we will show that the $S_m$
satisfy the same recurrence relation \eqref{conjrel} {as satisfied by} the $\{P_m\}$.
The left-hand side of this relation is
\begin{align*}
[S_{m+1} + a S_m , \me_0] =&\sum_{k=0}^{m+1} (-\alpha)^k \binom{m}k [ \Phat_{m+1-k}, \me_0] + a \sum_{k=0}^m (-\alpha)^k \binom{m-1}k [ \Phat_{m-k}, \me_0] \\
=& [\Phat_{m+1}, \me_0] + \sum_{k=1}^{m+1} (-\alpha)^k \left( \binom{m}k - \binom{m-1}{k-1} \right) [ \Phat_{m-k+1}, \me_0] \\
=& \sum_{k=0}^m (-\alpha)^k  \binom{m-1}k [ \Phat_{m-k+1}, \me_0].
\end{align*}
The right-hand side of the relation is
$$
(\di - \ad_{\Qhat}) S_m = \sum_{k=0}^m (-\alpha)^k \binom{m-1}k ( \di - \ad_\Qhat) \Phat_{m-k}
= \sum_{k=0}^m (-\alpha)^k \binom{m-1}k[ \Phat_{m-k+1}, \me_0],
$$
where on the right we used the fact that the $\Phat_k$ satisfy the same recurrence relations, but with $\Qhat$ in place of $Q$.

Since the $S_m$ satisfy the same recurrence as the $\{P_m\}$, these sequences
of matrices coincide up to adding constant multiples of $\me_0$.  But neither sequence (after $m=0$) includes
constant multiplies of $\me_0$, so we conclude that $\{P_m\} = S_m$ for all $m\ge 0$.
\end{proof}

\begin{proof}[Proof of Prop.~\ref{Vgaugeprop}]
Using \eqref{Pconjexp}, we compute the left-hand side of \eqref{Vconjexp} as
\begin{align}
\{ V_m(q,\lambda)\} =\sum_{k=0}^m (-\lambda)^k \{ P_{m-k}\} &= \sum_{k=0}^m (-\lambda)^k \sum_{j=0}^{m-k} (-\alpha)^j \binom{m-k-1}j \Phat_{m-k-j} \notag \\
		&= \sum_{\ell=0}^m (-1)^\ell \left(\sum_{k=0}^\ell \alpha^{\ell-k} \lambda^k  \binom{m-k-1}{\ell-k} \right)\Phat_{m-\ell}, \label{VmLHS}
\end{align}
letting $\ell=j+k$ and reversing the order of summation.
We compute the right-hand side of \eqref{Vconjexp} as
\begin{equation}\nonumber
\sum_{k=0}^m (-\alpha)^k \binom{m}k \sum_{j=0}^{m-k} (a-\lambda)^j \Phat_{m-k-j}
= \sum_{\ell=0}^m (-1)^\ell \left( \sum_{j=0}^\ell \binom{m}{\ell-j} \alpha^{\ell-j} (\lambda-\alpha)^j\right) \Phat_{m-\ell}.
\end{equation}
We expand the coefficient in parentheses using the
binomial theorem:
\begin{align*}
\sum_{j=0}^\ell \binom{m}{\ell-j} \alpha^{\ell-j} \sum_{k=0}^j \binom{j}k \lambda^k (-\alpha)^{j-k}
&= \sum_{k=0}^\ell \alpha^{\ell-k} \lambda^k \sum_{j=k}^\ell (-1)^{j-k} \binom{m}{\ell-j} \binom{j}k \\
&= \sum_{k=0}^\ell \alpha^{\ell-k} \lambda^k \sum_{r=0}^{\ell-k} (-1)^r\binom{m}{\ell-k-r} \binom{k+r}{k},
\end{align*}
where $r=j-k$. By comparing this with the coefficient in parentheses in \eqref{VmLHS},
we see that the result follows from the binomial coefficient identity
$$\sum_{r=0}^{\ell-k} (-1)^r\binom{m}{\ell-k-r} \binom{k+r}{k} = \binom{m-k-1}{\ell-k}$$
for $0\le k \le \ell \le m$.  This identity is easily proven by induction on $m$.
\end{proof}


\subsection{Finite-Gap Solutions}
\label{Apfg}
Let $\bPsi(x,t;\lambda)$, $\lambda \in \C$ be a fundamental matrix solution of the AKNS system
for $\NLS_m$: 
\begin{equation}\label{NLSm}
\vpsi_x = U \vpsi,\qquad \vpsi_t = V_m \vpsi.
\end{equation}
Suppose $|q(x,t)|$ is bounded in $x$ and $t$.  Based on the leading-order parts of $U$ and $V_m$, the asymptotic
behavior of $\bPsi$ as $\lambda \to \infty$ is given by
$$\bPsi \sim \exp\left( (\lambda x -(-\lambda)^m t) \me_0\right),$$
up to right multiplication by a matrix independent of $x$ and $t$.  The algebro-geometric solutions of $\NLS_m$ are constructed
by supposing that there is matrix solution of the form
\begin{equation}\label{Phiasy}
\bPsi =\left( I + \sum_{k=0}^\infty \lambda^{-k}\bPsi_k(x,t) \right)
\exp\left( (\lambda x -(-\lambda)^m t) \me_0\right).
\end{equation}
The matrix $\bPsi$ is assembled from the two branches of a {\em Baker-Akhiezer} function on a compact hyperelliptic Riemann surface $\Sigma$.
(The following development is a summary of more detailed discussions in \cite{BBEIM} and \cite{CI05}.)

Let $\Sigma$ be such a Riemann surface of genus $g$, with $2g+2$ distinct finite
branch points arranged in complex conjugate pairs.  Thus, $\Sigma$ is defined by
$$\mu^2 = \sum_{j=1}^{g+1} (\lambda-\lambda_j)(\lambda-\overline{\lambda_j}).$$
Let $\pi: (\lambda, \mu) \mapsto \lambda$ be the branched double cover from $\Sigma$ to the Riemann sphere.
The inverse image under $\pi$ of $\lambda=\infty$ is a pair of points $\infty_\pm$ characterized by $\mu/\lambda^{g+1} \sim \pm 1$.
In general, a Baker-Akhiezer function is meromorphic in the finite part $\Sigma -\{\infty_+, \infty_-\}$, with prescribed essential singularities at $\infty_\pm$.  To construct $\bPsi$, we want a $\C^2$-valued function $\vpsi$ with the following behavior at $\infty_\pm$:
\begin{align*}
\vpsi(P,x,t) &\sim \exp(-\tfrac12 \ri (\lambda x -(-\lambda)^m t))\begin{bmatrix}1 \\ 0\end{bmatrix} \quad \text{ as }P\to \infty_-,\\
\vpsi(P,x,t) &\sim \exp(\tfrac12 \ri (\lambda x -(-\lambda)^m t))\begin{bmatrix}0 \\ 1\end{bmatrix} \quad \text{ as } P\to \infty_+,\\
\end{align*}
corresponding to the first and second columns of $\bPsi$ respectively.

\begin{Pro}[Prop.~2.2 in \cite{CI05}]\label{closureprop} Let $\D$ be a positive divisor of degree $g+1$ on $\Sigma$, such that $\D -\infty_+ -\infty_-$ is not linearly equivalent to a positive divisor.  Then there exists a unique $\C^2$-valued
function $\vpsi(x,t;P)$ with the above asymptotic behavior as $P \to \infty_\pm$, and which is
meromorphic in the finite part of $\Sigma$, with pole divisor contained in $\D$.
\end{Pro}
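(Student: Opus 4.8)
The plan is to prove existence and uniqueness by the standard Baker--Akhiezer construction, converting the prescribed exponential essential singularities at $\infty_+,\infty_-$ into an ordinary line-bundle problem on $\Sigma$ to which the Riemann--Roch theorem applies. Observe first that each scalar component $\psi_1,\psi_2$ of the sought vector $\vpsi$ is a function on $\Sigma$ with the \emph{same} essential-singularity type, namely $\exp\big(\pm\tfrac12\ri(\lambda x-(-\lambda)^m t)\big)$ at $\infty_\pm$, and with pole divisor contained in $\D$; the two components are distinguished only by the normalization of their leading coefficients at the two punctures. So the core task is to determine the dimension of the space $\mathcal L$ of such functions and then impose the normalizations.

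First I would remove the essential singularities. Let $\eta=\eta(x,t)$ be the unique Abelian differential of the second kind whose only poles are at $\infty_\pm$, with principal parts agreeing with $\pm\tfrac12\ri(\lambda x-(-\lambda)^m t)$ there (read off from the leading parts of $U$ and $V_m$), normalized to have vanishing $a$-periods. Then $\exp\!\big(\int^{P}\eta\big)$ is a multivalued function on $\Sigma$ carrying exactly the required growth at $\infty_\pm$, its multivaluedness being governed by the $b$-periods of $\eta$, i.e.\ by a flat (degree-zero) line bundle $L_{x,t}$ whose class moves linearly in $(x,t)$ along the image of the Abel map in the Jacobian. Dividing by this factor identifies $\mathcal L$ with $H^0$ of the line bundle $\mathcal L':=L_{x,t}\otimes\mathcal O(\D)$, whose degree is $\deg\D=g+1$. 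Riemann--Roch then gives $h^0(\mathcal L')-h^1(\mathcal L')=(g+1)-g+1=2$, so generically $\dim\mathcal L=2$.

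Next I would pin down the two components. Within the two-dimensional space $\mathcal L$, the condition that $\psi_1$ be subdominant at $\infty_+$ (vanishing leading coefficient) is one linear condition, cutting out a line, on which requiring the leading coefficient at $\infty_-$ to equal $1$ fixes the scalar and yields a unique $\psi_1$; symmetrically for $\psi_2$. For uniqueness I would take the difference of two admissible solutions: its first component is subdominant at \emph{both} $\infty_+$ and $\infty_-$, hence after dividing by $\exp(\int\eta)$ it is a section of $L_{x,t}\otimes\mathcal O(\D-\infty_+-\infty_-)$, a bundle of degree $g-1$. The hypothesis that $\D-\infty_+-\infty_-$ is not linearly equivalent to a positive divisor forces this space of sections to be trivial, so the difference vanishes. (The same vanishing, via $h^0(\mathcal L')\le h^0\big(\mathcal L'(-\infty_+-\infty_-)\big)+2=2$ combined with the Riemann--Roch lower bound, upgrades $h^0(\mathcal L')=2$ to an equality, securing existence as well.)

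The main obstacle will be the cohomology-vanishing step and its bookkeeping. Concretely, I must show that $h^0\big(L_{x,t}\otimes\mathcal O(\D-\infty_+-\infty_-)\big)=0$ is genuinely equivalent to the stated nonspeciality of $\D-\infty_+-\infty_-$, which requires controlling the flat twist $L_{x,t}$: the hypothesis is phrased for the untwisted divisor, so I would need to verify that translating by the degree-zero class $L_{x,t}$ does not create sections for the relevant range of $(x,t)$ (equivalently, that the theta-function denominator in the explicit Its--Matveev representation does not vanish). Establishing this compatibility between the geometric hypothesis on $\D$ and the $(x,t)$-dependent twist is the delicate point; the remaining steps are the routine Riemann--Roch dimension count and the linear-algebra of the normalization conditions.
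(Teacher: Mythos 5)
Your proposal is correct in outline but takes a genuinely different route from the paper. The paper does not argue abstractly: it cites Prop.~2.2 of \cite{CI05} and then settles existence by explicit construction, exhibiting the Baker--Akhiezer function in closed form \eqref{Baker} in terms of the Abelian integrals $\Omega_1,\Omega_m$, the theta function, and the normalizing functions $g_\pm$; one checks directly that this expression has the prescribed essential singularities, is path-independent, and has pole divisor contained in $\D$, and uniqueness follows from the same ``difference of two solutions'' argument you give. Your Riemann--Roch count --- $h^0\ge 2$ for the degree-$(g+1)$ twisted bundle, $h^0\le 2$ from the nonspeciality of $\D-\infty_+-\infty_-$, and then the injectivity (hence bijectivity) of the map sending a section to its pair of leading coefficients at $\infty_\pm$ --- is the cleaner way to obtain bare existence and uniqueness, but it does not produce the formula \eqref{Baker}, which the paper needs downstream: the instability criterion of Theorem \ref{fund} is read off from the factor $\exp(\ri\Omega_{n+1}(P)t)$ in that formula, so the constructive route is not optional here.

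The point you flag --- that the hypothesis concerns the untwisted divisor $\D-\infty_+-\infty_-$ while the vanishing you need is for the bundle twisted by the flat class $L_{x,t}$ --- is indeed the only non-routine step, and you are right that it is exactly the non-vanishing of the denominator $\theta(\ri\vV x+\ri\vW t-\vD)$ in \eqref{Baker}. At $(x,t)=(0,0)$ the twist is trivial and your argument closes as written, but for other $(x,t)$ you cannot leave this as an unresolved ``obstacle'': without further input, your argument only gives existence and uniqueness off the proper analytic subset of $(x,t)$ where the translate of $\vD$ meets the theta divisor. The standard resolution, implicit in \cite{CI05} and \cite{BBEIM}, is that the reality constraints already in force here --- branch points in complex-conjugate pairs and a divisor $\D$ compatible with the $\su(2)$ reality of the potential --- force the theta denominator to be nonzero for all real $(x,t)$ (this is the same fact that makes $q$ in \eqref{finitegapq} smooth and bounded). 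You should either invoke this reality argument explicitly or weaken your conclusion to generic $(x,t)$; with that addition your proof is complete.
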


We will give a formula for the Baker-Akhiezer function below.  Its asymptotic behavior is attained using a pair of Abelian integrals on $\Sigma$ defined
as follows.  Fix a basis $(a_j, b_j)$, $j=1, \ldots, g$  of homology cycles on $\Sigma$ such that $\pi(a_j)$
encloses only the branch points $\lambda_j$ and $\overline{\lambda_j}$, and which
has the standard intersection pairings $a_j \cdot a_k =  b_j \cdot b_k =0$, $a_j \cdot b_k =\delta_{jk}$.
(For the sake of definiteness, let $\overline{\lambda_{g+1}}$ be the basepoint not enclosed by any of the $a$- or $b$-cycles.)  Let $\omega_j$ be holomorphic differentials on $\Sigma$ satisfying
$\int_{a_k} \omega_j = 2\pi \ri \delta_{jk}.$
Let $d\Omega_1$ and $d\Omega_m$ be meromorphic differentials on $\Sigma$ that have zero $a$-periods and
satisfy
$$d\Omega_1 \sim \pm \tfrac12 d\lambda, \quad d\Omega_m \sim \pm \tfrac12 m (-\lambda)^{m-1} d\lambda \qquad \text{as } \lambda \to \infty_\pm.$$
(These are constructed by beginning with the meromorphic differentials $(\lambda^{g+1}/\mu)d\lambda$ and
$(\lambda^{g+m}/\mu)d\lambda$, subtracting off multiples of the $\omega_j$'s to cancel the $a$-periods, and scaling with
appropriate constants.)
Then, using $\overline{\lambda_{g+1}}$ as basepoint, we define the integrals
$$\Omega_1(P) = \int^P d\Omega_1,\qquad \Omega_m(P) =\int^P d\Omega_m.$$
By our choice of basepoint, and because
 because $\imath^* d\Omega_1 = -d\Omega_1$ and $\imath^* d\Omega_m = -d\Omega_m$ (where $\imath$  is the sheet interchange automorphism of $\Sigma$), we have
$$\Omega_1(P)  = \pm\tfrac12(\lambda-E)+O(\lambda^{-1}),\quad \Omega_m(P) = \pm\tfrac12 (N-(-\lambda)^m)+O(\lambda^{-1})
$$
as $P \to \infty_\pm$, for some real constants $E$ and $N$.

The formulas for the components of the Baker-Akhiezer function are then
\eq{
\psi_{1} (x, t; P) &= \exp\left({\ri( (\Omega_1(P) - \tfrac{E}2)x+(\Omega_m(P)+\tfrac{N}2)t)}\right)
\\ &\times
 \frac{ \theta(\A(P) + \ri \vV x + \ri \vW t  -\vD) \theta(\vD)}
{\theta(\ri \vV x + \ri \vW t  - \vD) \theta(\A(P) -\vD)} g_+(P)
\\ \\
\psi_{2}(x,t;P) &=\exp\left( \ri (\Omega_1(P)+\tfrac{E}2)x +(\Omega_m(P)-\tfrac{N}2)t)\right) \\
 &\times \frac{\theta(\A(P) +\ri \vV x +\ri \vW t -\vD -\vr)\theta(\vD)}
{\theta(\ri\vV x+\ri\vW t-\vD) \theta(\A(P)-\vD-\vr)} g_-(P),
}{Baker}
where
\begin{itemize}
\item $\theta$ is the Riemann theta function determined by the period matrix
$B_{jk} = \int_{b_k} \omega_j$:
$$\theta(\vz) = \sum_{n \in \ZZ^g} \exp \langle \vz + \tfrac12 Bn,n \rangle, \qquad \vz \in \C^g.$$
\item $\A$ is the Abel map $P \mapsto \int_{\infty_-}^P \vomega$.
(The paths of integration for $\A(P)$ and $\Omega_1(P)$ (or $\Omega_m(P)$) are
yoked, in the sense that they differ by a fixed path from $\infty_-$ to $\overline{\lambda_{g+1}}$
that avoids the $a$- and $b$-cycles.  This makes $\psi_1, \psi_2$ path-independent.)
\item $\vV, \vW$ are vectors with components $V_j = \int_{b_j} d\Omega_1$ and $W_j =\int_{b_j} d\Omega_m$.
\item We let $\D_+$ and $\D_-$ be positive divisors linearly equivalent to $\D - \infty_+$ and $\D-\infty_-$ respectively.
Then $\vD = \A(\D_+) + \vK$, where $\vK$ is the vector of Riemann constants for $\Sigma$
(see \cite{BBEIM}, \S2.7), and $\vr = \int_{\infty_-}^{\infty_+} \vomega$, so that $\A(\D_-) = \vD + \vr - \vK$.
\item $g_\pm$ are meromorphic functions on $\Sigma$ with poles on $\D$ and zeros on $\D_\pm + \infty_\pm$,
normalized so that $g_+(\infty_-) = g_-(\infty_+) = 1$.
\end{itemize}

We assemble a fundamental matrix solution for \eqref{NLSm} whose columns are given by
the Baker-Akhiezer function:
$$\bPsi(x,t;\lambda) = \left( \vpsi(P^-) \ \vpsi(P^+ \right) ),$$
where $P^+$ and $P^-$ are inverse images of $\lambda$ under $\pi$ that
are near $\infty_+$ and $\infty_-$ when $\lambda$ is near $\infty$.
It then follows that $\bPsi$ has the asymptotic behavior given by \eqref{Phiasy}.
Then the argument of pp. 93--94 in \cite{BBEIM} shows that $\bPsi$ actually solves \eqref{NLSm},
for the potential given by
\begin{equation}\label{finitegapq}
q(x,t) = A \exp(-\ri E x+\ri N t) \dfrac{\theta(\ri \vV x +\ri \vW t - \vD +\vr)}{\theta( \ri \vV x +\ri \vW t - \vD )},
\end{equation}
where $A = \dfrac{2\theta(\vD)}{\alpha \theta(\vD-\vr)}$ and {$\alpha \in \bbR$ is such that} $g_+ \sim (\alpha \lambda)^{-1}$ near $\infty_+$.

\subsection*{Closure Conditions}

Now we consider the opposite construction: given solution $q(x,t)$ of $\NLS_m$ which is $L$-periodic in $x$,  at what value $\lambda=\lambda_0$ does the
Sym formula \eqref{Symf} yield a smooth closed curve of length $L$?   We will confine our attention to algebro-geometric solutions.

First, note that for the solution \eqref{finitegapq} to be periodic, it is necessary that the
numbers $(V_1, V_2, \ldots, V_g, E)$ be integer multiples of $2\pi/L$.
The general criteria for closure of the curve resulting from the Sym formula was given
by Grinevich and Schmidt \cite{GS}; the following is a specialization to finite-gap solutions:

\begin{Pro}[Prop.~2.4 in \cite{CI05}]\label{prop2} Let $q(x,t)$ be a finite-gap potential of period $L$.  Then
the curve given by the Sym formula \eqref{Symf} is smoothly closed of length $L$
if and only if (a) $\exp( \ri \Omega_1(P) L) = \pm 1$, and (b) $d\Omega_1(P)=0$, where $\pi(P)=\lambda_0$.
\end{Pro}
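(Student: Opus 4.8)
The plan is to work directly with the Baker--Akhiezer fundamental matrix $\bPsi$ from \eqref{Baker} and to track how both $\bPsi$ and its $\lambda$-derivative behave under the shift $x\mapsto x+L$. The computational fact I would establish first is the identity $\Gamma_x = \bPsi^{-1}\me_0\bPsi$ for $\Gamma = \bPsi^{-1}\partial_\lambda\bPsi$: differentiating in $x$ and using $\bPsi_x = U\bPsi$ together with $\partial_\lambda U = \me_0$ (from \eqref{defofU}), the two cross terms cancel and one is left with $\Gamma_x = \bPsi^{-1}\me_0\bPsi = \langle\me_0\rangle$, i.e.\ $\sj(\Gamma_x) = \T$. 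Thus smooth closure of $\gamma = \sj(\Gamma)$ of length $L$ amounts to two requirements: (i) the tangent is periodic, $\T(x+L) = \T(x)$, and (ii) the position is periodic, $\Gamma(x+L) = \Gamma(x)$. I would phrase both in terms of the $x$-independent monodromy matrix.

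Next I would introduce the monodromy. Since $q$ has period $L$, the matrix $U$ is $L$-periodic in $x$, so $\bPsi(x+L;\lambda) = \bPsi(x;\lambda)\,C(\lambda)$ for some $C(\lambda)$ independent of $x$ (one checks $\partial_x[\bPsi^{-1}\bPsi(\cdot+L)]=0$). Moreover the columns of $\bPsi$ are the Baker eigenfunctions, which diagonalize $C$: using the periodicity of $q$, which forces $V_1,\dots,V_g$ and $E$ to be integer multiples of $2\pi/L$, together with the quasi-periodicity of the theta functions in \eqref{Baker}, the theta-quotient factors are genuinely $L$-periodic in $x$, so each column picks up only its exponential prefactor and $\vpsi(x+L;P) = e^{\ri\Omega_1(P)L}\vpsi(x;P)$ up to a universal sign $\epsilon = e^{-\ri EL/2} = \pm 1$ shared by both components. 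Hence $C(\lambda) = \mathrm{diag}(\rho(P^-),\rho(P^+))$ with Floquet multiplier $\rho(P) = \epsilon\,e^{\ri\Omega_1(P)L}$ and $\rho(P^+) = \rho(P^-)^{-1}$ (from $\imath^* d\Omega_1 = -d\Omega_1$). A short computation then gives the transformation law
\[
\Gamma(x+L) = C^{-1}\Gamma(x)C + C^{-1}C',\qquad C' := \partial_\lambda C,
\]
all evaluated at $\lambda = \lambda_0$.

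The proof then comes down to reading off conditions (a) and (b). Condition (a), $e^{\ri\Omega_1(P)L} = \pm 1$, is (since $\epsilon = \pm1$) equivalent to $\rho(P) = \pm 1$, i.e.\ $\lambda_0$ is an $L$-periodic or $L$-antiperiodic point of the Floquet spectrum, which makes $C(\lambda_0) = \pm I$ scalar. When $C$ is scalar the conjugation terms drop out: $\T(x+L) = \sj(C^{-1}\langle\me_0\rangle C) = \T(x)$, so the tangent automatically closes, and $C^{-1}\Gamma C - \Gamma = 0$, leaving $\Gamma(x+L) - \Gamma(x) = C^{-1}C'(\lambda_0) = \pm C'(\lambda_0)$. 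Since $C' = \ri L\,\mathrm{diag}(\rho(P^-)\,d\Omega_1/d\lambda(P^-),\,\rho(P^+)\,d\Omega_1/d\lambda(P^+))$, this term vanishes precisely when $d\Omega_1 = 0$ at $P$ (the two sheets being equivalent by $\imath^* d\Omega_1 = -d\Omega_1$), which is condition (b). For the converse I would argue that tangent closure for all $x$ forces $C$ to commute with the moving direction $\langle\me_0\rangle(x)$; unless $\gamma$ is a straight line these directions span a two-dimensional subspace of $\su(2)$, forcing $C(\lambda_0) = \pm I$ and hence (a), after which position closure forces $C'(\lambda_0) = 0$ and hence (b).

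I expect the main obstacle to be the bookkeeping in the first step of the middle paragraph: rigorously verifying that the theta-quotient prefactors in \eqref{Baker} are exactly $L$-periodic, which requires combining the constraints $V_j, E \in \tfrac{2\pi}{L}\ZZ$ with the transformation behavior of $\theta$ under its period lattice, and checking that the half-integer shifts produced by $E$ reassemble into the single sign $\epsilon$. A secondary subtlety is interpreting $d\Omega_1(P) = 0$ intrinsically on $\Sigma$ rather than as $d\Omega_1/d\lambda = 0$, especially at branch points where $\lambda$ fails to be a local coordinate; there one must read condition (b) in a uniformizing parameter, but the symmetry $\imath^* d\Omega_1 = -d\Omega_1$ guarantees the condition is well posed and sheet-independent.
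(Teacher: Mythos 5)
The paper itself offers no proof of this proposition---it is imported verbatim from \cite{CI05}---so the only comparison available is with the cited source's method rather than with anything internal to this paper. Your argument (compute the Floquet monodromy $C(\lambda)$ of the Baker fundamental matrix, observe that condition (a) forces $C(\lambda_0)=\pm I$ so that the conjugation terms in $\Gamma(x+L)=C^{-1}\Gamma C+C^{-1}\partial_\lambda C$ drop out, and that condition (b) kills the remaining translational term $\partial_\lambda C$, with the converse following because the centralizer in $SU(2)$ of the non-constant family $\langle\me_0\rangle(x)$ is $\{\pm I\}$) is correct and is essentially the same quasimomentum/monodromy argument used in \cite{CI05} and in the Grinevich--Schmidt closure criterion it specializes; the only loose end you flag, the behavior of $d\Omega_1/d\lambda$ at branch points, never arises here since $\lambda_0$ is real while the branch points come in distinct complex-conjugate pairs.
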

In other words,  the chosen value $\lambda=\lambda_0$ in the Sym {formula} \eqref{Symf} must be a periodic or antiperiodic point of the Floquet spectrum of $q$, and a zero of the quasimomentum differential $d\Omega_1$ (see \cite{CI05}, section 2.3 for more details).

\begin{remark}\label{quasiclosure}
We can extend Prop.~\ref{prop2} to potentials for which only the quotient of $\theta$-functions in \eqref{finitegapq} is periodic in $x$
(i.e., neglecting the exponential factor).  For this, it is necessary and sufficient that the $V_j$ be rationally related.
Configurations of branch points that satisfy this rationality condition can be obtained by successive isoperiodic deformations
starting with the spectrum of the multiply-covered circle \cite{CI07}.
Once there is a real number $L$ such that such that $V_j = 2\pi m_j/L$ for some integers $m_j$, we say that
the finite-gap solution $q$ is {\em L-quasiperiodic}, since
\begin{equation}\label{quasiq}
q(x+L, t) = e^{-\ri E L} q(x,t).
\end{equation}
However, when we apply a shift transformation $\sigma(\lambda) = \lambda - s$ for $s \in \bbR$, the branch points
can be translated to the right or left to arrange that $E =0$.  (For,
the pullback of $d\Omega_1$ under $\sigma$ is the quasimomentum differential for the original hyperelliptic curve; thus,
the value of the $V_j$ is unchanged, and $E$ is replaced by $E-s$.)  Now using \eqref{finitegapq}
with the shifted configuration gives an $L$-periodic potential $\tilde q$.
The {\em periodic points} of the Floquet spectrum of $\tilde q$ consists of $\lambda$-values for which there are non-trivial eigenfunctions of period (or anti-period) $L$,
corresponding to condition (a) above.  Likewise, we say that $\lambda$ is a periodic point for $q$ if its
image under the shift $\lambda \mapsto \lambda - E$ is a periodic point for $\tilde q$.
In particular, the closure conditions (a) and (b) are preserved under the shift.  Moreover, as in Corollary
\ref{modCor}, the curve constructed using $q$ in the Sym formula evaluated at $\lambda_0$ will be identical, at $t=0$,
to that construct using $\tilde q$ in the Sym formula evaluated at $\sigma(\lambda_0)$.
\end{remark}
\end{appendices}


\end{document}